\newcommand{\blind}{1}
\newcommand{\pr}{\prime}
\newcommand{\bs}{\boldsymbol{s}}
\newtheorem{prop}{Proposition}  [section]
\newtheorem{assumption}{Assumption}
\newtheorem{theorem}{Theorem}
\newtheorem{lemma}{Lemma}
\newtheorem{remark}{Remark}
\theoremstyle{definition}
\def\spacingset#1{\renewcommand{\baselinestretch}%
{#1}\small\normalsize} \spacingset{1}
\begin{document}

\if1\blind
{
  \title{\bf Preprocessing noisy functional data:\\ a multivariate perspective}
  \author{Siegfried H\"ormann\\
    Institute of Statistics, Graz University of Technology\\
    and \\
    Fatima Jammoul\thanks{Corresponding author.
Email: f.jammoul@tugraz.at}\hspace{.2cm} \\
    Institute of Statistics, Graz University of Technology}
  \maketitle
} \fi

\if0\blind
{
  \bigskip
  \bigskip
  \bigskip
  \begin{center}
    {\LARGE\bf Preprocessing functional data without smoothing}
\end{center}
  \medskip
} \fi

\bigskip
\begin{abstract}
We consider functional data which are measured on a discrete set of observation points. Often such data are measured with additional noise. We explore in this paper the factor structure underlying this type of data. We show that the latent signal can be attributed to the common components of a corresponding factor model and can be estimated accordingly, by borrowing methods from factor model literature. We also show that principal components, which play a key role in functional data analysis, can be accurately estimated after taking such a multivariate instead of a `functional' perspective. In addition to the estimation problem, we also address testing of the null-hypothesis of iid noise. While this assumption is largely prevailing in the literature, we believe that it is often unrealistic and not supported by a residual analysis.
\end{abstract}

\noindent%
{\it Keywords:}  functional data, factor models, high-dimensional statistics, preprocessing, signal-plus-noise
\vfill
\newpage
\tableofcontents

\section{Introduction}

Functional data analysis (FDA) is concerned with the analysis of data that can naturally be described as curves. In mathematical terms data are modeled as random curves $(X(s)\colon s\in\mathcal{S})$, where $\mathcal{S}$ is some continuum. Examples where such data arise are very diverse, ranging from high frequency asset price curves over growth curves or pollution level curves, to 2D satellite images or fMRI scans.  For a simple presentation we assume without loss of generality that $\mathcal{S}=[0,1]$.  With technological advances recording and storing this type of data becomes more and more common and hence the corresponding FDA literature has seen a big upsurge over the past years. For an introduction to the topic we refer, for example, to the textbooks of \citet{ferratyvieu:2006}, \cite{horvath:kokoszka:2012} or \citet{ramsaysilverman05}.

In practice functional data are not fully observed, but sampled on a discrete set of time points. Consider  functional observations 
$(X_t(s)\colon 0\leq s\leq 1)$, $t\geq 1$, and assume we have measurements of it at time points $0\leq s_1< s_2<\ldots <s_p\leq 1$. A very common additional working hypothesis in FDA literature is that these measurements come with an additional error, so that we actually observe
\begin{equation}\label{signoise}
Y_t=(X_t(s_1),\ldots, X_t(s_p))^\prime+(U_{t1},\ldots, U_{tp})^\prime=:X_t(\bs)+U_t.
\end{equation}
(Henceforth we are going to write $\bs$ for $(s_1,\ldots, s_p)$ and use the convention that $g(\bs)$ denotes $(g(s_1),\ldots, g(s_p))^\pr$.)
The errors $U_t$ can, for example, be related to measurement errors. In this paper we focus on the setting where all data are observed at the same time points $s_i$. This is typically the case for machine recorded data. The goal then is to separate the errors from $X(\bs)$. %A distinctive feature between $x$ and $u$ is that the components of $x$ are correlated, while those of $u$ are not. 
Most papers (including those cited below) assume that the components $(U_{ti}\colon 1\leq i\leq p)$  are i.i.d.\ with zero mean and variance $\sigma^2_U>0$ and that $X_t$ and $U_t$ are independent. %We will later argue that this assumption can be relaxed. 
To recover the full curve $X_t$ or the discretisation $X_t(\bs)$  (henceforth we refer to both objects as the \emph{signal}) a variety of fitting techniques exist. The goal is to acquire an estimate $\hat X_t$ or $\hat X_t(\bs)$ that is close to the true latent signal. A very common technique is the basis expansion approach, explained thoroughly in  \citet{ramsaysilverman05}. Here, the fitted curve is a linear combination of suitable basis functions. 
%The required coefficients are obtained by regressing the raw measurements onto a design generated by discretely evaluated basis functions. 
Most popular choices are the Fourier basis or B-splines. By adding a roughness penalty to the least squares criterion the smoothness of the curves can be controlled. %We refer again to \citet{ramsaysilverman05}. 
Other approaches employ kernel smoothing (see e.g.,\ \citet{ramsaysilverman05,Stadtmueller2015}), with a detailed description in \citet{wandjones95}, where a kernel function is used to estimate the signal. Most prominent is the Nadaraya-Watson regression estimate as proposed in \citet{nadaraya64} and \citet{watson64}. 

Nevertheless, the recovery of the true signal remains a delicate problem when analysing real data.
The degree of smoothness of the latent curves, which is needed to choose the appropriate number of basis functions or the bandwidth of a kernel smoother, is typically unknown and then the result of the analysis is influenced by a non-verifiable working hypothesis. 

Cross-validation (CV) may look like an attractive route, since the parameter choices then become data driven. To illustrate that the problem is still challenging, we look at the synthetic example in Figure~\ref{fig2}. The two rows in the graph illustrate realisations from two random samples (four observations each). The dots represent the raw data $y$ and the green lines the underlying signals $X(s)$. In the first example (top row) the data generating process (DGP) is such that all intraday data have a higher variability at $s=0.7$, while in the second example (bottom row) the actual data form straight lines with one outlying measurement error in the second observation.  With curve-by-curve fitting techniques, it is not possible to distinguish between a measurement error or some systematic structure in the signal. We would need further measurements in the neighbourhood of $s=0.7$, where the signal is erratic. So the accuracy in recovering the signal is tied to its smoothness, and the relevant question is whether \emph{$p$ is  large enough relative to the degree of smoothness} to sufficiently justify a certain approach.

Suppose now that we want to estimate the signals for a sample $Y_1,\ldots, Y_T$. While $p\to\infty$ is favourable for curve-by-curve fitting techniques, i.e.,\ when $\hat X_t=\mathcal{F}_t(y_t)$ for some operator $\mathcal{F}_t$, say, \emph{no improvement in the fit can be expected if we let $T$ grow.} This suggests employing a fitting procedure which takes into account the entire sample, i.e.,\ $\hat X_t= \mathcal{G}_t(Y_1,\ldots, Y_T)$.
\begin{figure}[ht]
\begin{center}
\includegraphics[width=12cm]{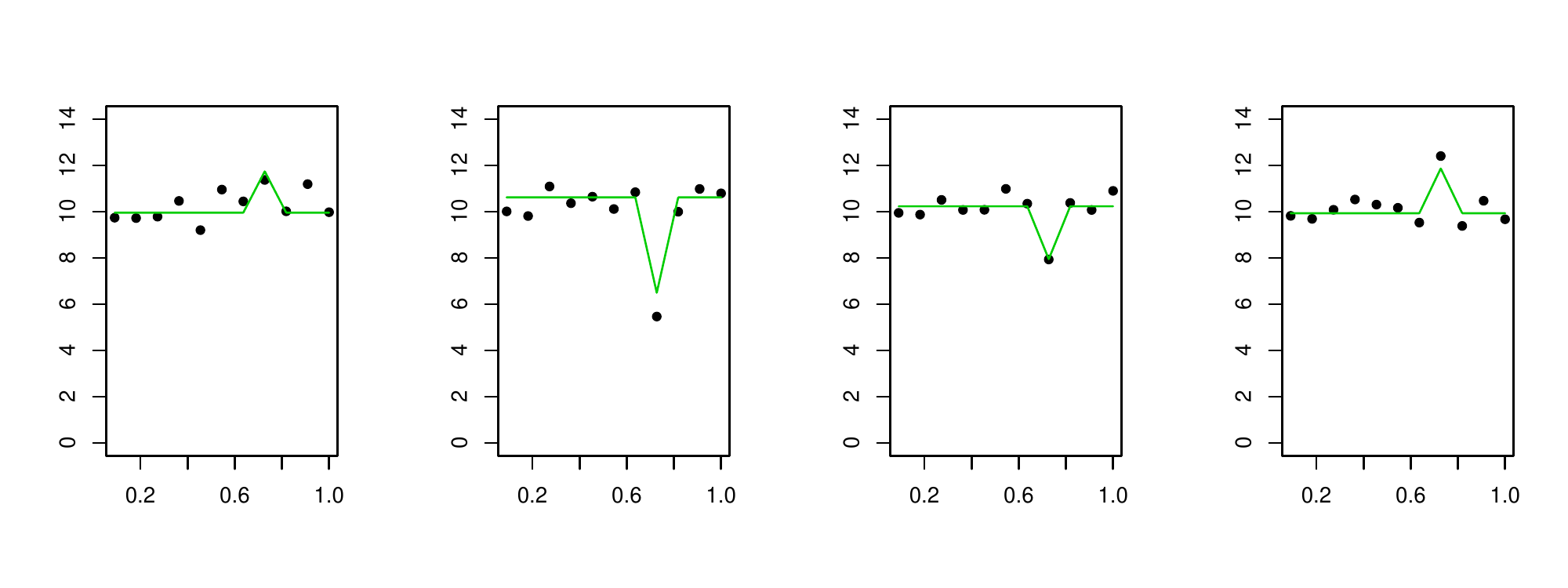}
\includegraphics[width=12cm]{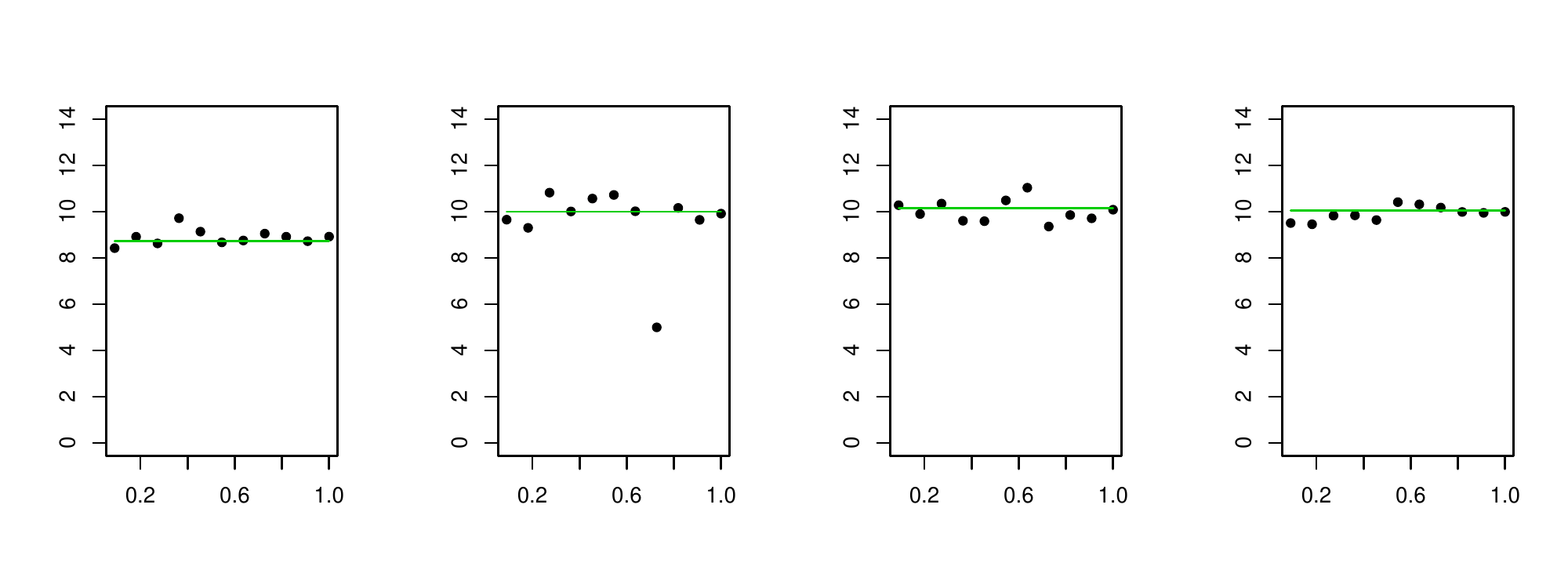}
\caption{In this illustrative example, the dots correspond to measurements $Y$ and the green solid lines to the actual signals $X(s)$.}
\label{fig2}
\end{center}
\end{figure}
\cite{Staniswalis:Lee:1998} proposed a method of this type. Under model \eqref{signoise}, say with iid errors, we have that $\mathrm{Var}(Y_t)=\mathrm{Var}(X_t(\bs))+\sigma^2 I_p$ (with $I_p$ being the identity matrix in $\mathbb{R}^{p\times p}$).  Hence, \begin{align*}
(\widehat{\mathrm{Var}}(Y))_{k,k'}=\frac{1}{T}\sum_{t=1}^N (Y_{tk}-\bar{Y}_{\cdot k})(Y_{tk'}-\bar{Y}_{\cdot k'})\approx \mathrm{Cov}(X(s_k),X(s_{k'}))+\sigma^2I\{s_k=s_{k'}\},
\end{align*}
where $Y_{tk}$ is the $k$-th component of $Y_t$ and $\bar{Y}_{\cdot k}=\frac{1}{T}\sum_{t=1}^T Y_{tk}$. Their idea is to estimate the covariance kernel $\mathrm{Cov}(X_t(s),X_t(s'))$ by smoothing the $p\times p$ matrix $\mathrm{Var}(Y_t)$.
The smoothing step is used to remove the discontinuity  
on the diagonal caused by the noise. They propose a kernel smoother, but other approaches such as spline smoothing can be used instead.  Denote the smooth covariance kernel estimator by $\widehat{\mathrm{Cov}}(X_t(s),X_t(s'))$.  In a second step the estimated (and also smooth) eigenfunctions $\hat\psi_i$ of $\widehat{\mathrm{Cov}}(X_t(s),X_t(s'))$, which form an orthonormal basis, are used to expand the $X_t$ along this basis. The resulting expansion provides an estimator of the signal. For further details we refer to \cite{Staniswalis:Lee:1998}.
 Several subsequent articles are based on variants of this approach. 
For example, the well known PACE algorithm established in \cite{yaoetal:2005} makes use of the idea described above. 
Other important references focus on the estimation of eigenfunctions and eigenvalues from the smoothed covariance and derive asymptotic results. We refer in particular to \cite{Hall:Mueller:Wang:2006} and \cite{Mueller:Stadtmueller:Yao:2006}. 

In our recent paper \citet{hormann:jammoul:2021} we have analysed yet another estimator for the signals $X_t(\bs)$, which is not based on any smoothing, but operates with the raw data. Similar as in \cite{Staniswalis:Lee:1998} it uses the entire sample $Y=(Y_1,\ldots, Y_T)$. The approach simply consists of projecting the observations on the (discrete) principal components related to $Y$. We have shown that this elementary and easy to implement technique leads under very mild assumptions to a uniformly  (over $t$ and $s_i$) consistent estimator, with explicit convergence rates. While the idea may seem  similar to the functional PCA based estimator of \cite{Staniswalis:Lee:1998}, it is inspired by a completely different perspective, which we will explain in Section~\ref{s:basics} of this paper.  We show in the next section that functional data sampled as in \eqref{signoise} follow some \emph{factor model} (see e.g.\ \citet{mardia:kent:bibby:1979}). The signal underlying the discretised observations is related to the \emph{common components} in the factor model and thus a natural strategy is to estimate these common components. The PCA technique used in \citet{hormann:jammoul:2021} is only one possible estimation scheme for factor models. Alternatively, one may resort to diverse likelihood techniques which exist in the literature.

While in \citet{hormann:jammoul:2021} only the discretized signals $X_t(\bs)$ are estimated, we can obtain a curve by diverse interpolation schemes. In Section~\ref{s:full} we show that under regularity conditions simple linear interpolation leads to a uniformly consistent estimator of the whole curve. In Section~\ref{s:eigenfun} we show that the related functional principal components also can be estimated without using on a smoothing step. Special attention will be given to the practical aspects of the implementation of the method and also to the analysis of the residual errors. Most papers impose iid assumptions on the error components $(U_{ti}\colon 1\leq i\leq p)$. However, the interpretation of the errors (aside from measurement errors) is broadly ignored and a thorough residual analysis, which is needed for corresponding model diagnostics, is also missing in most papers. We devote Section~\ref{s:fit} to adequate diagnostic tools. The key idea there also blazes a trail estimating the number of factors in our model, which will be discussed in some detail.  In Section~\ref{s:sim} we provide comprehensive simulation studies. For this purpose we create synthetic samples out of {\tt PM10} (particular matter) data, which allows us to have a controllable and at the same time realistic data base, as seen in Section \ref{s:smoothsim}.  In order to demonstrate that our method doesn't require specific smoothness conditions, we provide in Section \ref{s:roughsim} an additional simulation study considering signals which may contain a discontinuity. In Section~\ref{s:real} we analyse and compare two real data sets of Canadian temperature data.  We conclude in Section~\ref{s:outlook}. 

\section{Factor model representation}\label{s:basics}

We consider a set of functional data $X_1,\ldots, X_T$ defined on a common probability space.  Throughout we assume that observations are i.i.d.\ or form a general stationary functional process. The curves $(X_t(s)\colon s\in [0,1])$ are square integrable on $[0,1]$, and hence can  be expanded along a sequence of orthogonal basis functions $\{b_k(s)\colon k\geq 1\}$, (for example the Fourier basis). Then we have $X_t(s)=\sum_{k\geq 1}\langle X_t,b_k\rangle b_k(s)$, where $\langle a,b\rangle= \int_0^1 a(v) b(v) dv$. The convergence is in general only in $L^2$ sense, but under mild regularity conditions on path properties of $X_t$ we can also obtain pointwise or even uniform convergence. For example, if the covariance kernel $\Gamma^X(s,s^\prime):=\mathrm{Cov}(X_t(s),X_t(s^\prime))$ is continuous and we set $b_k=\varphi_k$, which denote the eigenfunctions of $\Gamma^X(s,s^\prime)$, then we obtain as a consequence of Mercer's theorem \citep{gohberg:goldberg:kaashoek:2000}, that
 \begin{equation}\label{e:mercer}
\sup_{s\in [0,1]}E\left|X_t(s)-\mu(s)-\sum_{\ell= 1}^L x_{t\ell}\varphi_\ell(s)\right|^2\to 0,\quad L\to\infty.
\end{equation}
The functions $\varphi_k$ are the so-called functional principal components, and define an optimal orthogonal basis system, in the sense of minimising the mean square error
$$
\int_0^1 E\left|X_t(s)-\mu(s)-\sum_{\ell= 1}^L \langle X_t,b_k\rangle b_k(s)\right|^2ds
$$
with respect to the basis functions $(b_k)$.
In typical applications the approximation error is already very close zero with small $L$ (say $L=5$) or at most moderate sized values of $L$ (say $L=20$), so that assuming a finite dimensional representation
\begin{equation}\label{finit}
X_t(s)= \sum_{k= 1}^L\langle X_t,b_k\rangle b_k(s),\quad \text{for some $L\geq 1$}
\end{equation}  
is no more than a theoretical restriction, which imposes no practical limitation of generality, if $L$ is allowed to be chosen large enough.

A basic requirement for our proposed method is that all curves are sampled at the same time points $0\leq s_1< s_2<\ldots <s_p\leq 1$. This is a very common setting for machine recorded data. We note that sampling points need not be equidistant though. We will assume throughout a general signal-plus-noise structure as in \eqref{signoise}.  
The following representation theorem for functional data observed as in \eqref{signoise} holds. 
\begin{prop}
Suppose \eqref{signoise} and \eqref{finit} hold. Let  $U_t$ be independent of $X_t$ and assume that $EU_t=0$ and $\mathrm{Var}(U_t)=\mathrm{diag}$. \emph{Then $Y_t$ follows an $L$-factor model}. 
\end{prop}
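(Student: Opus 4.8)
The statement asks us to recognise the signal-plus-noise vector $Y_t$ as a factor model, i.e.\ to exhibit a decomposition
\[
Y_t=\mu+\Lambda F_t+U_t,
\]
in which $\Lambda\in\mathbb{R}^{p\times L}$ is a deterministic loading matrix, $F_t\in\mathbb{R}^L$ is a vector of $L$ common factors with $EF_t=0$ and $\mathrm{Cov}(F_t)=I_L$, and the idiosyncratic term $U_t$ satisfies $EU_t=0$, $\mathrm{Var}(U_t)$ diagonal, and $\mathrm{Cov}(F_t,U_t)=0$; this is the model of \citet{mardia:kent:bibby:1979} with $L$ factors. The plan is to read off $\Lambda$ and $F_t$ directly from the finite basis expansion \eqref{finit} and then to verify the four defining moment conditions one by one.

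First I would put \eqref{finit} in matrix form. Writing $x_{tk}:=\langle X_t,b_k\rangle$ for the random basis coefficients and collecting the deterministic evaluations into $B:=\big(b_k(s_i)\big)_{1\le i\le p,\,1\le k\le L}\in\mathbb{R}^{p\times L}$, the convention $g(\bs)=(g(s_1),\dots,g(s_p))^\pr$ together with \eqref{finit} gives $X_t(\bs)=Bx_t$ with $x_t=(x_{t1},\dots,x_{tL})^\pr$. Substituting into \eqref{signoise} yields
\[
Y_t=Bx_t+U_t,
\]
which already has the shape of a common part plus a diagonal-noise part.

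Next I would centre and normalise the coefficients to match the canonical form. Setting $\mu:=B\,Ex_t$ and $\Sigma:=\mathrm{Cov}(x_t)$, and assuming $\Sigma$ nonsingular, I would factor $\Sigma=\Sigma^{1/2}\Sigma^{1/2}$ and define $F_t:=\Sigma^{-1/2}(x_t-Ex_t)$ and $\Lambda:=B\Sigma^{1/2}$, so that $\Lambda F_t=B(x_t-Ex_t)$ and hence $Y_t=\mu+\Lambda F_t+U_t$ with $EF_t=0$ and $\mathrm{Cov}(F_t)=I_L$. The two remaining conditions are immediate: $EU_t=0$ and $\mathrm{Var}(U_t)=\mathrm{diag}$ are assumed, and since $F_t$ is a measurable function of $X_t$, the independence of $X_t$ and $U_t$ forces $F_t$ and $U_t$ to be independent, so $\mathrm{Cov}(F_t,U_t)=0$. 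This checks every item of the definition.

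The one point needing care — and the only genuine obstacle — is that the model has \emph{exactly} $L$ factors rather than fewer. This requires $\Lambda$ to have full column rank $L$, equivalently that both $\Sigma=\mathrm{Cov}(x_t)$ be nonsingular (the scores are not perfectly collinear) and $B$ have full column rank; the latter forces $p\ge L$ and requires the sampling points $\bs$ to be rich enough that the $L$ basis functions stay linearly independent when restricted to $\{s_1,\dots,s_p\}$. Under these mild non-degeneracy conditions the common-component covariance $\Lambda\Lambda^\pr=B\Sigma B^\pr$ has rank exactly $L$; should they fail, the same construction still yields a factor model, but with $k<L$ factors. I would therefore state these conditions explicitly, or observe that they hold generically.
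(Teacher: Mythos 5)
Your proof is correct and follows essentially the same route as the paper: both read the loading matrix off the basis functions evaluated at the sampling points and take the standardized score vector as the factors, with the remaining moment conditions following from the stated assumptions on $U_t$. The only difference is cosmetic — the paper specializes to the Karhunen--Lo\`eve eigenbasis, where $\mathrm{Cov}(x_t)$ is already diagonal so the whitening reduces to dividing each score by $\sqrt{\lambda_\ell}$, whereas you whiten a general score covariance via $\Sigma^{-1/2}$; your added remark on the full-rank condition needed for the model to have \emph{exactly} $L$ factors is a reasonable point of care that the paper leaves implicit (there it amounts to $\lambda_L>0$ and linear independence of $\varphi_1(\bs),\dots,\varphi_L(\bs)$).
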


\begin{proof}
We show that there exists a matrix $B\in\mathbb{R}^{p\times L}$ such that
\begin{equation}\label{e:factmod}
Y_t=\mu(\bs)+B F_t+U_t,
\end{equation}
where $\mu(s)=EX_t(s)$, $EF_t=0$, $\mathrm{Var}(F_t)=I_L$ (the identity matrix in $\mathbb{R}^L$) and $\mathrm{Cov}(F_t,U_t)=0$.
We note that by the imposed stationarity the covariance kernel $\Gamma^X(s,s^\prime)$  doesn't depend on $t$.  
Using \eqref{finit} the Karhunen-Lo\`eve expansion gives
\begin{equation}\label{e:KLexp}
X_t(s)=\mu(s)+\sum_{\ell=1}^L x_{t\ell}\varphi_\ell(s),
\end{equation}
where  $\varphi_\ell(s)$ are the eigenfunctions of the covariance operator $\Gamma^X$ and $x_{t\ell}=\int_0^1(X_t(s)-\mu(s))\varphi_\ell(s) ds$. 
The scores $(x_{t\ell}\colon \ell\geq 1)$ are uncorrelated and  $\mathrm{Var}(x_{t\ell})=\lambda_\ell$, where $\lambda_\ell$ are the eigenvalues of $\Gamma^X$ (in decreasing order). See e.g.,\ \citet{bosq:2000} for details.

Define now the matrix
$
B:=(\sqrt{\lambda_1}\varphi_1(\bs),\ldots,\sqrt{\lambda_L}\varphi_L(\bs)).
$
Moreover, define $$F_t=(x_{t1}/\sqrt{\lambda_1},\ldots, x_{tL}/\sqrt{\lambda_L})^\prime.$$ This yields the desired representation. 
\end{proof}

In factor model language $B F_t$  are called the \emph{common components} of $Y_t$ and our problem is reduced to the estimation of these common components. For this purpose we can resort to a rich literature, especially from macroeconomics, where factor models are used to model markets with many assets. See, for example,  \citet{stockwatson2002a, stockwatson2002b} and \citet{forni:lippi}. In this context, \cite{chamberlain1983} have shown that it is useful to allow also for a certain degree of dependence in the \emph{idiosyncratic noise components} $U_{ti}$. This setting then refers to \emph{approximate factor models}. Some of the features employed in econometric applications are natural and useful in our context, too:
(1) The dimension $p$ of our sampling points $\bs$ is large and allowed to diverge with increasing sample size.
(2) The functional data $X_t$ may be time-dependent, i.e.,\ form a functional time series.
(3) In a realistic framework, the errors $(U_{ti}\colon 1\leq i\leq p)$ in \eqref{signoise} might be correlated at small lags.

Next to conceptual papers proposing different variants of factor models, there is also a profound literature on estimation theory for these models. 
In particular we refer to the papers of \citet{bai2003}, \cite{baili2012}, \citet{choi2012}, \citet{bailiao2012}, \citet{fanetal2013} and \citet{baili2016}. In context of dynamic factor models we refer to \citet{fornietal2000, fornietal2005}. How these methods may be used in the current context will be discussed in the next section.  

We conclude here by an important remark on the interpretation of errors.

\begin{remark}
It is common in FDA to smooth data, even if by their very nature they come without relevant measurement errors (e.g.,\ annual temperature curves generated from daily data, intraday stock prices, etc.). In this case it needs to be clarified how the residual noise is to be interpreted.  The translation of our problem into factor model language gives a mathematical/statistical meaning to the noise $U_t$ which goes beyond measurement errors. The $U_{t}$ define the \emph{ideosyncratic components} of $Y_{t}$, which are characterised by being uncorrelated or, more generally, be weakly correlated in a certain sense to be specified. The components of $U_t$ represent ``unsystematic'' fluctuations in our functional trajectories. 
\end{remark}

\subsection{Estimation approach}\label{s:generalapp}
 As mentioned above, the signal $X_t(\bs)$ is related to the common components of $Y_t$. The core idea of the algorithm that we pursue is simple and can be summarised as follows:\medskip

\noindent
\underline{{\bf Core algorithm:}}\\
1. Estimate $\mu(\bs)$ by $\hat\mu(\bs)=\frac{1}{T}(Y_1+\cdots+Y_T)$.\\
2. Center the data by $\hat\mu(\bs)$.\\
3. Choose an appropriate order $L$.\\
4. Approximate $X_t(\bs)-\mu(\bs)$ through the estimated \emph{common components}: $\hat{B} \hat F_t$.\\
5. Set $\hat X_t(\bs)=\hat\mu(\bs)+\hat B\hat F_t$.\medskip

Steps 3.\ and 4.\ can be carried out by many existing approaches for factor models. 
 \citet{baing2002} is a key reference for determining the dimension $L$. \citet{hallinliska2007} expanded the approach to dynamic factor models. \citet{onatski2010} proposes an approach that uses the empirical distribution (ED) of the eigenvalues of the sample covariance matrix. \citet{owenwang2016} use a Bi-Cross-Validation (BCV) technique to estimate the number of factors. Contrary to other approaches, they're not specifically interested in recovering the true number of factors, but rather the number of factors best-suited to recover the underlying signal.  In the process of our empirical work, we have investigated the behaviour of these estimators. We found that the BCV  and ED approaches work best in our FDA context. In Section~\ref{s:L} we will propose an empirical method to choose $L$.
 
Once $L$ is fixed, there are two main approaches for factor model estimation. One strategy is to utilize principal component analysis, e.g.,\ \citet{chamberlain1983} use this method.  PCA is particularly simple to implement and doesn't require numerically intense stochastic optimization methods. 
\citet{bai2003} investigated the asymptotic behaviour of both the factors as well as the factor loadings and---under technical conditions---proved consistency as well as robustness to mild correlation in the error terms.

The second popular strand is based on maximum likelihood. \citet{choi2012} expanded upon previous ideas by describing an efficient estimation for factor models, where the conditional distribution of $U_t \vert F_1, \ldots F_T$ is assumed to be normal with a covariance matrix that is not necessarily diagonal. \citet{baili2012, baili2016} provide a  method involving a quasi-maximum-likelihood approach. 

Let us discuss the PCA approach, which can be motivated as follows. Let $Y=(Y_1,\ldots, Y_T)$ and define  $U=(U_1,\ldots, U_T)$ and $F^\prime=(F_1,\ldots, F_T)$. Then we can write our model equation \eqref{e:factmod} in the compact matrix form
\begin{equation}\label{e:repmat}
Y=BF^\prime +U.
\end{equation}
In this notation, the objective is to estimate $BF^\prime$ through some estimator $\hat B\hat F^\prime$.
Suppose that $F$ is already known. Then 
$
Y^j=Fb_j+U^j,
$
which leads to the common least-squares estimator $\hat b_j^\text{LS}=(F^\prime F)^{-1}F^\prime Y^j.$ 
Here $b_j^\prime$ is the $j$-th row of $B$ and $Y^{j}$ and $U^{j}$ denote the $j$-th column of $Y^\pr$ and $U^\pr$, respectively. 
If our data are independent (or satisfy some appropriate weak dependence condition), it holds by the law of large numbers and orthogonality of principal components scores that
\begin{equation}\label{e:LLN}
\frac{1}{T}F^\prime F \stackrel{P}{\to} I_L.
\end{equation} 
This motivates 
$\hat B_{|F}:=\frac{1}{T} YF$ as estimator for $B$ conditional on $F$.
For $F$ in turn we use the empirical principal components and set $\hat F=\sqrt{T} \hat E$,
where $\hat E=(\hat e_1,\ldots, \hat e_L)$ are the eigenvectors of $\frac{1}{T}Y^\prime Y$  ($T\times T$) associated to the~$L$ largest eigenvalues $\hat\gamma_1\geq \ldots\geq \hat\gamma_L$. 
%The $j$-th row of $F^\prime$ is denoted by $F_j$.
 Then 
$\frac{1}{T}\hat F^\prime \hat F = I_L$. In summary
$
\hat F=\sqrt{T}\hat E\quad \text{and}\quad \hat B=\frac{1}{T}Y\hat F,
$
which implies that 
\begin{equation}\label{e:pcaapproach}
(\hat X_1(\bs),\ldots, \hat X_T(\bs))=\widehat{BF^\prime}:=\hat B\hat F^\prime=Y\hat E\hat E^\prime.
\end{equation}
We have analysed this estimator for the signal in \cite{hormann:jammoul:2021} and have shown that under mild technical conditions (see Assumptions~\ref{a:noise}-\ref{a:pcs} in the Appendix) this estimator converges uniformly, i.e.
$$
\sup_{1\leq t\leq T}\sup_{1\leq i\leq p}|X_t(s_i)-\hat X_t(s_i)|\to 0
$$
in probability and explicit convergence rates can be obtained.

In applications the user is free to choose any estimation method that leads to satisfactory and plausible results. 
(See Section~\ref{s:fit}.)

\begin{remark}\label{r:1}
A well known problem in factor model theory is that factor loadings and the factor scores are not unique. If $O\in\mathbb{R}^{L \times L}$ is some orthogonal matrix, then
$
BF=(B O)(O^\pr F),
$
and $\mathrm{Var}(O^\pr F)=I_L$. This identification issue is not a problem here, because we are primarily interested in the common components $BF$, which remain well identified.
\end{remark}

\subsection{Estimating the full curve}\label{s:full}
The factor approach doesn't return a full curve, but an estimate of the noise-free curves at the points $0\leq s_1< s_2<\ldots<s_p\leq 1$.  If the goal is to work with full curves, then it is up to the experimenter to choose a discrete-to-function transformation which is designated for noise-free data.  The simplest approach, namely linear interpolation, will be considered in this section.   
\begin{theorem}\label{thm:cont}
Let $\hat X_t(s_i)$ be the factor model estimates for the underlying signal at the points  $0\leq s_1< s_2<\ldots<s_p\leq 1$ as defined in \eqref{e:pcaapproach} and let $\hat X_t(s)$, $s\in [0,1]$, be the interpolation of these estimates $\hat{X}_t(s_{i})$. Denote $\delta=\max_{1\leq i \leq p-1} |s_{i+1}-s_i|$. Assume that for some $\alpha\in (0,1 \rbrack$ we have a random variable $M_t$ such that
\begin{equation}\label{e:lip}
| X_t(s) - X_t(u) | \leq M_t  | s - u |^\alpha
\end{equation} holds, where $EM_t=m<\infty$. Then under Assumptions~\ref{a:noise}--\ref{a:pcs} in the Appendix we have
\begin{equation}\label{e:liprate}
\sup_{s \in \lbrack 0,1 \rbrack} | X_t(s) - \hat{X}_t(s) | = O_P\left( \frac{1}{T^{1/4}} + \frac{T^{1/4}}{\sqrt{p}} + \delta^\alpha \right).
\end{equation}
\end{theorem}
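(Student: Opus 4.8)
The plan is to reduce the continuous-interpolation problem to the already-established discrete consistency result by a triangle inequality, introducing the piecewise-linear interpolant of the \emph{true} signal as an intermediate object. Denote by $\tilde X_t(s)$ the linear interpolation of the exact values $X_t(s_1),\ldots,X_t(s_p)$, so that $\tilde X_t$ and $\hat X_t$ are built on the same nodes and differ only through the nodal values $X_t(s_i)$ versus $\hat X_t(s_i)$. For every $s\in[0,1]$ I would then split
$$
|X_t(s)-\hat X_t(s)|\le |X_t(s)-\tilde X_t(s)|+|\tilde X_t(s)-\hat X_t(s)|,
$$
and bound the two terms separately: the first is a deterministic interpolation error controlled by the H\"older condition \eqref{e:lip}, the second is an estimation error controlled by the discrete uniform rate.

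For the first term, fix $s$ and let $[s_i,s_{i+1}]$ be the subinterval containing it, with $\theta=(s-s_i)/(s_{i+1}-s_i)\in[0,1]$. Since $\tilde X_t(s)=(1-\theta)X_t(s_i)+\theta X_t(s_{i+1})$, writing the difference as a convex combination and applying \eqref{e:lip} to each summand gives
$$
|X_t(s)-\tilde X_t(s)|\le (1-\theta)M_t|s-s_i|^\alpha+\theta M_t|s-s_{i+1}|^\alpha\le M_t\,\delta^\alpha.
$$
As this bound is uniform in $s$ and $EM_t=m<\infty$, Markov's inequality yields $\sup_s|X_t(s)-\tilde X_t(s)|=O_P(\delta^\alpha)$.

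For the second term I would exploit that the difference of two interpolants on the same nodes is itself the interpolant of the nodal differences. Thus on $[s_i,s_{i+1}]$,
$$
|\tilde X_t(s)-\hat X_t(s)|\le (1-\theta)|X_t(s_i)-\hat X_t(s_i)|+\theta|X_t(s_{i+1})-\hat X_t(s_{i+1})|\le \sup_{1\le i\le p}|X_t(s_i)-\hat X_t(s_i)|.
$$
The right-hand side is exactly the quantity governed by the discrete uniform-consistency result of \citet{hormann:jammoul:2021} under Assumptions~\ref{a:noise}--\ref{a:pcs}, whose explicit rate is $O_P(T^{-1/4}+T^{1/4}/\sqrt p)$. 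Combining the two bounds and taking the supremum over $s\in[0,1]$ gives \eqref{e:liprate}.

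The main obstacle is not the interpolation bookkeeping, which is elementary, but securing the discrete rate $O_P(T^{-1/4}+T^{1/4}/\sqrt p)$; this is the substantive ingredient and is imported wholesale from the companion paper. A secondary technical point is the treatment of $s$ outside $[s_1,s_p]$: if $s_1>0$ or $s_p<1$, linear interpolation must be supplemented by an extrapolation rule, and one then needs the boundary gaps $s_1$ and $1-s_p$ to be $O(\delta)$ (e.g.\ $s_1=0$, $s_p=1$) so that they do not dominate $\delta^\alpha$. I would state the result assuming the sampling grid covers the endpoints, which removes this issue and preserves the clean rate above.
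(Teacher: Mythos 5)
Your proposal is correct and follows essentially the same route as the paper: the paper's Lemma~\ref{l:cont} gives the decomposition $\sup_s|X_t(s)-\hat X_t(s)|\le 2\omega^{X_t}(\delta)+\max_i|X_t(s_i)-\hat X_t(s_i)|$ (proof omitted there), after which the H\"older condition plus Markov's inequality handles the first term and Theorem~1 of the companion paper supplies the discrete rate, exactly as you do. Your intermediate-interpolant argument is in effect a proof of that omitted lemma (with the sharper constant $1$ in place of $2$ on the modulus-of-continuity term), and your remark about the boundary gaps $s_1$ and $1-s_p$ is a legitimate small point that the paper glosses over.
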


In order to extend our results to the full sample paths we require the Lipschitz condition~\eqref{e:lip}, which has for example been previously considered in \citet[p.169]{bosq:2000} or in \citet{kallenberg:2002}. Prominent examples of processes that fulfill this property include the Brownian and fractional Brownian Motion, hence also processes which are by no means smooth. Note that under these assumptions, the observation points need not be equidistant in order to control size of the modulus of continuity, but merely the largest distance between two knots needs to become small. It is, however, natural to assume that $\delta = O(p^{-1})$ holds and according rates may be easily derived from \eqref{e:liprate}.

\subsection{Estimation of eigenfunctions}\label{s:eigenfun}

Functional principal components take a central role in FDA literature, see for example \citet{ramsaysilverman05}. When data are fully observed, the estimation theory is well established \citep{kleffe:1973, dauxoisetal:1982, hallhosseininasab:2006}. When data are discretely observed and subject to measurement errors, then obviously estimation theory has to be adapted. The most common strategy is to first estimate the curves using techniques described in the introduction and then to estimate principal components from the empirical covariance operator of the fitted data. Alternatively, one may use eigenfunctions of the non-parametric estimates of the covariance kernel as suggested in \cite{Staniswalis:Lee:1998} or \cite{yaoetal:2005}.

We now show that functional principal components can be estimated quite well from discretely observed and noisy data. \emph{Unlike the procedures mentioned before, this does not involve a smoothing step. }
 Let us begin by noting that Lemma~1 in \citet{hormann:jammoul:2021} shows that under some mild technical assumptions the $\ell$-th eigenvalue $\lambda_\ell$ of the covariance kernel $\Gamma^X(s,s^\prime) = \text{Cov}(X_t(s), X_t(s^\prime))$ may be consistently estimated by $\hat\gamma_\ell^Y / p$, which denotes $p$-th fraction of the $\ell$-th eigenvalue of the raw covariance matrix $T^{-1}Y^\prime Y$ or equivalently of $\hat\Sigma^Y:=T^{-1}YY^\prime$ ($\in \mathbb{R}^{p\times p}$). A similar result has been obtained in \citet{Benko:Haerdle:Kneip:2009}. These authors also work with the raw data when estimating the eigenvalues. For estimation of eigenfunctions they do, however, use a smoothing step. 
To formulate our result, we  denote the eigenvectors associated to the eigenvalues $\hat{\gamma}_\ell^Y$ by $\hat{\psi}_\ell^Y$. In order to properly describe the relationship between the function $\varphi_\ell$ and the vector $\hat\psi_\ell^Y$ we define  $\tilde\varphi_\ell(s)=\sqrt{p}[\hat\psi_\ell^Y]_i$ if $s\in [s_i,s_{i+1})$, where $[v]_i$ denotes the $i$-th component of a vector $v$. The step-function $\tilde{\varphi}_\ell$ is the proposed estimator for the eigenfunction $\varphi_\ell$. Note that the scaling ensures that $\|\tilde\varphi_\ell\|^2:=\int_0^1\tilde\varphi_\ell^2(s)=1$.

\begin{remark}
Eigenfunctions and eigenvectors are of course uniquely defined only up to the sign. In order to ensure that $\tilde{\varphi}_\ell$ indeed is the estimate for $\varphi_\ell$, we assume that $\langle \varphi_\ell, \tilde{\varphi}_\ell \rangle \geq 0$ holds. To lighten the notation, we henceforth assume in the proof of Theorem~\ref{thm:eigenfun} that the inner product is nonnegative for any pair of eigenfunctions and eigenvectors whose difference is being investigated.
\end{remark}

\begin{theorem}\label{thm:eigenfun}
Let Assumptions \ref{a:noise} and \ref{a:signal}~(a) and (b) hold. Assume that the sampling points $s_i$ are equidistant and that  
\begin{equation}\label{a:cont}
\sup_{s\in\lbrack0,1\rbrack} \text{E}| X(s+h) - X(s)|^2 = O(h)\quad(h\to 0).\end{equation} 
Then if $\alpha_\ell=\min\{\lambda_\ell-\lambda_{\ell+1},\lambda_{\ell-1}-\lambda_\ell\}\neq 0$, we have
$$\|\varphi_\ell-\tilde\varphi_\ell\| = O_P\left(\frac{1}{\sqrt{p}} + \frac{1}{\sqrt{T}}\right),\quad\ell \geq 1.$$
\end{theorem}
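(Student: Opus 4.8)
The plan is to reduce the convergence of the step-function estimator $\tilde\varphi_\ell$ to a perturbation-theoretic comparison of two eigendecompositions, and to control the two error contributions --- the \emph{discretisation} of the functional eigenproblem and the \emph{noise/sampling} error in the empirical covariance matrix --- separately, matching them to the $1/\sqrt{p}$ and $1/\sqrt{T}$ terms respectively. First I would set up the correct ``discrete avatar'' of the continuous eigenproblem. The eigenfunction $\varphi_\ell$ satisfies $\int_0^1\Gamma^X(s,s')\varphi_\ell(s')\,ds'=\lambda_\ell\varphi_\ell(s)$; since the $s_i$ are equidistant with spacing $1/p$, I would approximate this integral operator by its Riemann-sum counterpart, i.e.\ consider the $p\times p$ matrix $\Gamma^X(\bs)/p$ with entries $p^{-1}\Gamma^X(s_i,s_j)$. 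Its normalised eigenvectors, rescaled to step functions exactly as $\tilde\varphi_\ell$ is built from $\hat\psi_\ell^Y$, should approximate $\varphi_\ell$ in $L^2$ with an error governed by the modulus of continuity \eqref{a:cont}. This is where the hypothesis $\sup_s \mathrm{E}|X(s+h)-X(s)|^2=O(h)$ enters: it forces $\Gamma^X$ to be Hölder-$\tfrac12$ continuous, so the Riemann-sum quadrature error for a Lipschitz/Hölder kernel is $O(p^{-1/2})$ in the relevant operator sense, yielding the $1/\sqrt p$ term.

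Next I would quantify the gap between the population discrete covariance $\Gamma^X(\bs)/p$ (or rather its signal part $BB'/p$, with $B$ from Proposition 2.1) and the empirical matrix $\hat\Sigma^Y=T^{-1}YY'$ whose eigenvectors $\hat\psi_\ell^Y$ we actually use. Writing $Y=BF'+U$ gives
\begin{equation}\label{e:decomp}
\hat\Sigma^Y=\tfrac1T BF'FB'+\tfrac1T BF'U'+\tfrac1T UF B'+\tfrac1T UU'.
\end{equation}
By \eqref{e:LLN} the first term concentrates around $BB'$; the cross terms and the recentred noise term $\tfrac1T UU'-\mathrm{diag}$ are the stochastic fluctuations. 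Under Assumptions \ref{a:noise} and \ref{a:signal}(a)--(b) these are controlled at rate $T^{-1/2}$ (in the appropriate normalised spectral norm, after dividing by $p$), which is precisely the machinery already deployed for the signal and eigenvalue results of \citet{hormann:jammoul:2021}. I would therefore import those bounds rather than re-derive them, extracting a statement of the form $\|\hat\Sigma^Y/p-\Gamma^X(\bs)/p\|=O_P(T^{-1/2})$ in the relevant norm.

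With both ingredients in hand, the conclusion follows from an eigenvector perturbation bound of Davis--Kahan / sin-$\Theta$ type. The eigengap hypothesis $\alpha_\ell=\min\{\lambda_\ell-\lambda_{\ell+1},\lambda_{\ell-1}-\lambda_\ell\}\neq0$ separates the $\ell$-th eigenspace from its neighbours, so the $\sin\Theta$ theorem bounds the angle between the empirical eigenvector and the discrete population eigenvector by (perturbation size)$/\alpha_\ell$, i.e.\ $O_P(T^{-1/2})$. Translating this vector bound into the step-function norm $\|\cdot\|$ is immediate because the rescaling $\tilde\varphi_\ell(s)=\sqrt p\,[\hat\psi_\ell^Y]_i$ is an isometry between the Euclidean norm on $\mathbb{R}^p$ and the $L^2$ norm on step functions; the sign convention $\langle\varphi_\ell,\tilde\varphi_\ell\rangle\ge0$ fixes the orientation ambiguity. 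A final triangle inequality combines the discretisation error ($1/\sqrt p$) and the perturbation error ($1/\sqrt T$).

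I expect the main obstacle to be the first, deterministic step: cleanly establishing that the step-function built from the eigenvector of the \emph{Riemann-sum} matrix $\Gamma^X(\bs)/p$ approximates the true eigenfunction $\varphi_\ell$ at rate $p^{-1/2}$. One must relate the finite-dimensional eigenproblem for the sampled kernel to the integral-operator eigenproblem, which requires comparing the quadrature operator to $\Gamma^X$ as operators on $L^2[0,1]$ and invoking a continuous analogue of Davis--Kahan; verifying that the Hölder continuity \eqref{a:cont} delivers exactly $O(p^{-1/2})$ (and not a worse power) for the piecewise-constant interpolation, together with tracking the eigengap through this step, is the delicate part. The $T^{-1/2}$ stochastic term, by contrast, should follow more or less directly from the concentration estimates already available in \citet{hormann:jammoul:2021}.
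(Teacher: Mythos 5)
Your proposal is correct in outline and reaches the stated rate, but it organises the error differently from the paper. The paper's chain is $\varphi_\ell\to\hat\varphi_\ell\to\hat\varphi_\ell^\star\to\tilde\varphi_\ell$: it first passes to the \emph{empirical} covariance operator of the fully observed curves (giving $O_P(T^{-1/2})$ via the $L^4$-$m$-approximability result of \citet{hoermann2010}, which is what makes the dependent-data case painless), then to its discretised version $\hat\Gamma^{X^\star}$ (where \eqref{a:cont} is applied to an empirical double integral, a computation already contained in Lemma~1 of \citet{hormann:jammoul:2021}), and only at the end compares $\hat\Sigma^X$ with $\hat\Sigma^Y$ to absorb the noise. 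You instead insert a single \emph{population} intermediary, the quadrature matrix $\Gamma^X(\bs)/p$, handle the discretisation deterministically through H\"older-$\tfrac12$ continuity of the kernel (which does follow from \eqref{a:cont} and Assumption~\ref{a:signal}(b) by Cauchy--Schwarz, and does give $O(p^{-1/2})$ in Hilbert--Schmidt norm), and lump all stochastic error into one matrix-perturbation step. Both routes ultimately rest on the same two pillars (condition \eqref{a:cont} for the $p^{-1/2}$ term, the companion paper's covariance bounds for the stochastic term) plus a Davis--Kahan/Weyl-type eigenvector bound with gap $\alpha_\ell$, so this is a legitimate alternative; what the paper's ordering buys is that the $T^{-1/2}$ step is available off the shelf at the operator level for weakly dependent functional data, whereas your route requires you to prove $\Vert\hat\Sigma^X-\Gamma^X(\bs)\Vert/p=O_P(T^{-1/2})$ for a growing-dimension matrix under Assumption~\ref{a:signal}(a), which is essentially the same result in different clothing. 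Two small corrections: your intermediate claim $\Vert\hat\Sigma^Y/p-\Gamma^X(\bs)/p\Vert=O_P(T^{-1/2})$ is too optimistic --- the noise fluctuation $T^{-1}UU'$ contributes $O_P(p^{-1/2})$ when $p\lesssim T$ (this is exactly the $O_P(\sqrt p)$ regime in Lemma~1 of the companion paper), though this extra term is harmlessly absorbed into the final rate; and the factor representation $Y=BF'+U$ with fixed $L$ is not among the hypotheses of this theorem (Assumption~\ref{a:signal}(d) is not assumed), so you should phrase the decomposition as signal plus noise rather than through $B$ and $F$.
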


\citet{Benko:Haerdle:Kneip:2009} have compared their eigenfunction estimators from discretely observed and noisy data to the empirical eigenfunctions $\hat\varphi_\ell$ from fully observed data. They show that the error is of smaller order of magnitude than the error between $\hat\varphi_\ell$ and $\varphi_\ell$. Since their result is pointwise in $s$, it is not directly comparable to our $L^2$ distance. From a technical point of view both results have advantages and disadvantages. Our result holds under milder smoothness conditions. We merely need Assumption~\eqref{a:cont}, while they request second order derivatives with a uniformly bounded fourth order moment. Furthermore, we allow for dependence in both the errors and the observations. \citet{Benko:Haerdle:Kneip:2009} focus on the i.i.d.\ setup. On the other hand, they allow for more general errors with 8 moments and don't request a regular sampling design.

\section{Model diagnostics}\label{s:fit}
A simple diagnostic tool which may help to discern inadequate signal extraction is the inspection of the covariance of the residuals. Consider the fits $\hat X_1(\bs),\ldots, \hat X_T(\bs)$ and denote by $\hat U_t=Y_t-\hat X_t(\bs)$ the residual vectors.  Each residual vector $\hat U_t$ defines a time series $\hat U_{t1},\ldots \hat U_{tp}$. For example, if $(U_{ti}\colon 1\leq i\leq p)$ is assumed to be white noise, then this should be reflected in the empirical autocorrelation functions (acf's) 
\begin{equation}\label{acfcheck}
\hat\gamma_{\hat U_t}(h)=\frac{1}{p}\sum_{i=1}^{p-|h|}(\hat U_{t,i+h}-\bar{\hat U}_{t\cdot})(\hat U_{ti}-\bar{\hat U}_{t\cdot}).
\end{equation}  Since we have replicates, we may also conclude that
\begin{equation}\label{diagcov}
\hat\Gamma^{\hat U}:=\frac{1}{T}\sum_{t=1}^T(\hat U_t-\bar{\hat U})(\hat U_t-\bar{\hat U})^\prime\approx\mathrm{Var}(U_t),
\end{equation}
where $\bar{\hat U}$ is the grand mean of $\hat U_1,\ldots, \hat U_T$.
If there is doubt that the noise components are stationary (e.g.,\ if the homogeneous variance assumption is likely to be violated) analysing $\hat\Gamma^{\hat U}$ may be preferable over investigating the acf's $\hat\gamma_{\hat U_t}$. 
If the residual covariances are not conform with the assumptions on the noise variables (e.g.\ iid noise), this indicates that either these assumptions were incorrect, or that the transformation from discrete to functional data introduced some bias. 

In our real data examples (Section~\ref{s:real}) we investigate daily mean temperatures and corresponding annual temperature curves from Canada. Following \citet{ramsay09}, the daily data were transformed with 65 basis functions and a roughness penalty to annual curves. 
In Figure~\ref{fig:stmargintro} we show the acf's \eqref{acfcheck} of the residual vectors of this penalized B-spline approach at a weather station in St.~Margaret's Bay, Nova Scotia, in the year 1993. We also show the  heat map representing the righthand side in \eqref{diagcov}. For better visibility, the heat map is restricted to the first 2 months of the year. 
 \begin{figure}[!ht]
\begin{center}
\includegraphics[width=7.5cm]{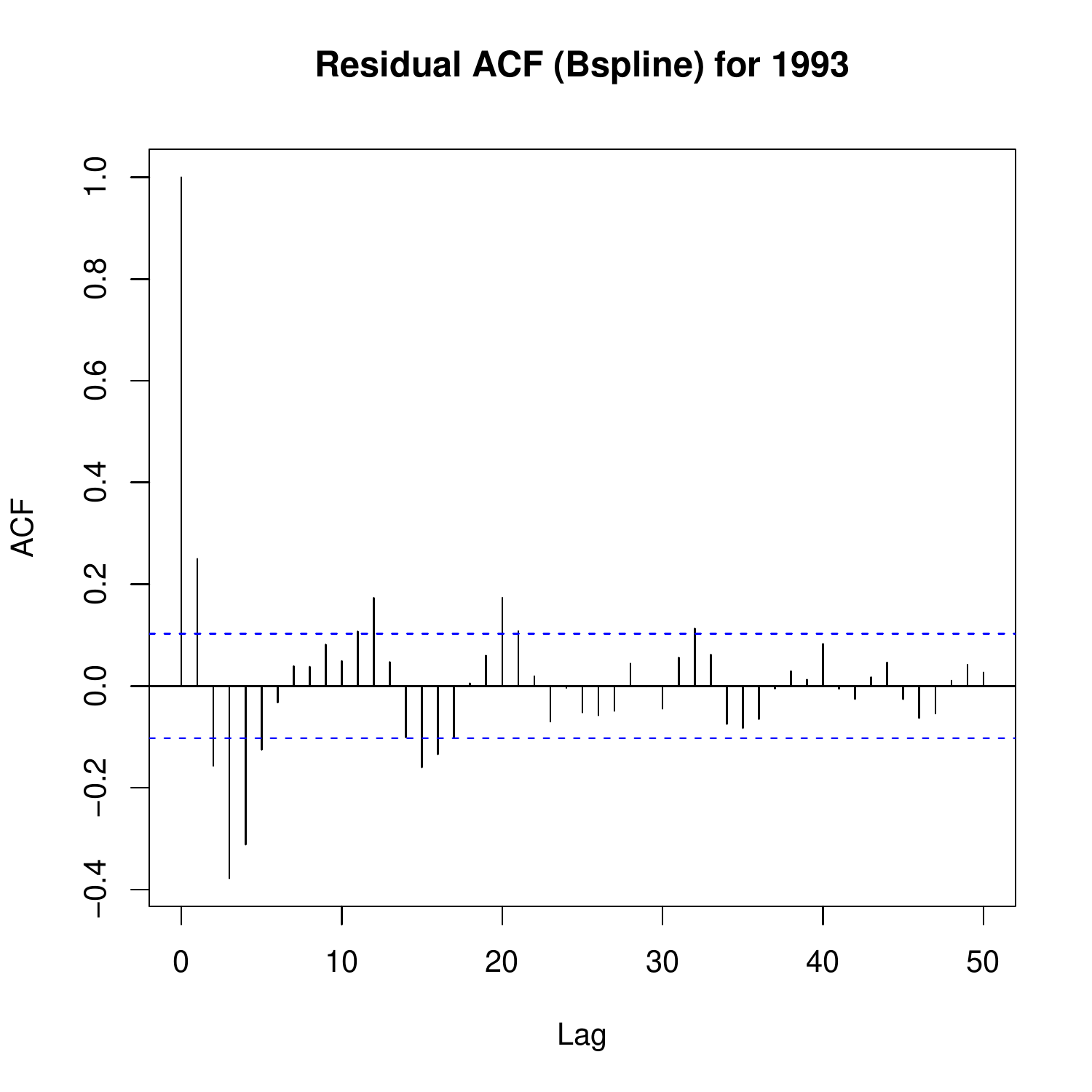}
\includegraphics[width=7.5cm]{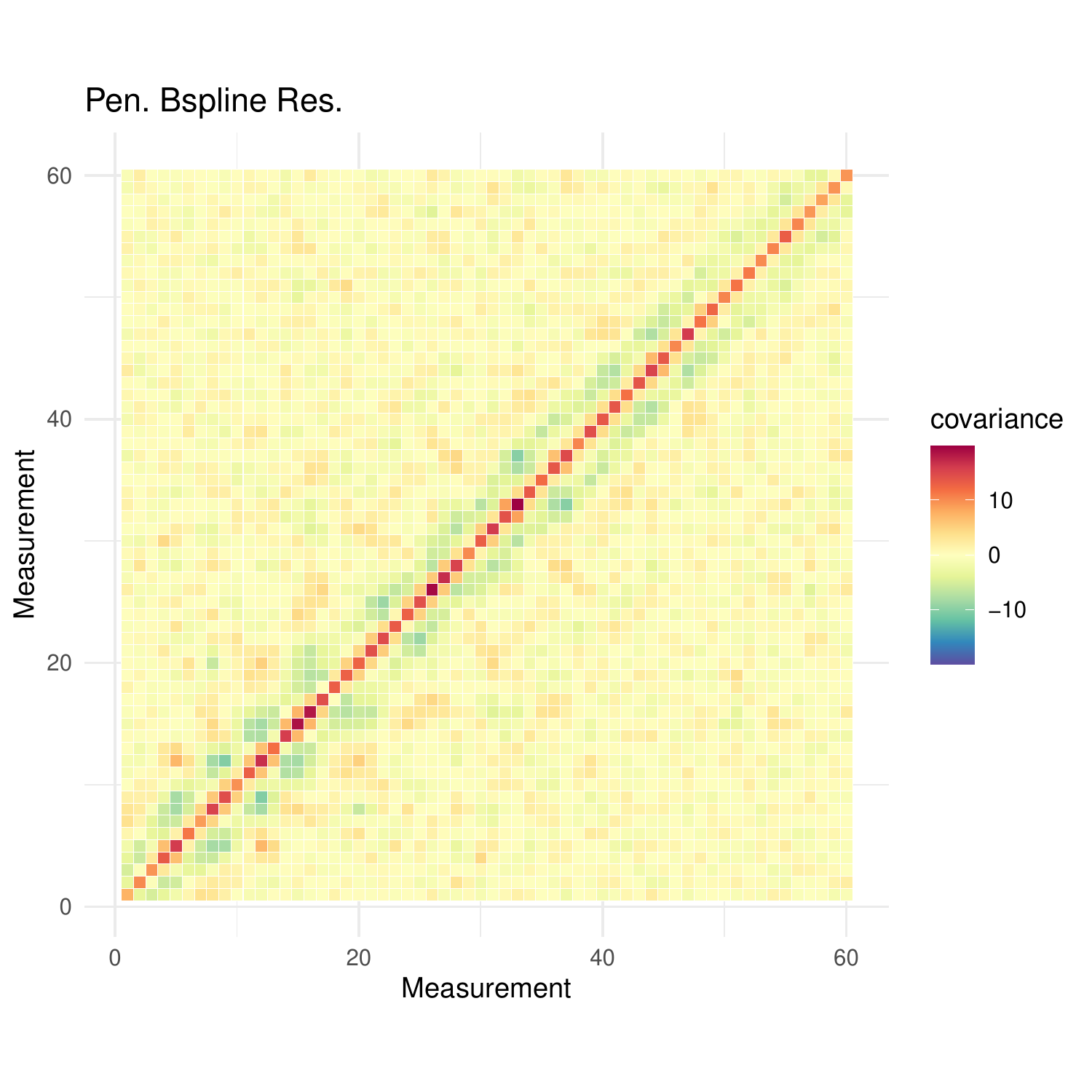}
\caption{Autocorrelation function of residual vectors for Penalized B-Splines in the \textit{St. Margaret's Bay Data} example for the year 1993 and the heat map corresponding to the lefthand side in \eqref{diagcov}.}
\label{fig:stmargintro}
\end{center}
\end{figure}
Details on the data and the implementation will be given in Section~\ref{s:real}. At this stage, we want to draw the readers attention to the  spurious oszillation in the acf. If the components of the error vectors were iid---as it is commonly assumed---then the acf should be zero for all lags $\neq 0$. In a slightly more realistic setting we would expect some moderate positive correlation of the errors, which tapers to zero with increasing lag. 

\subsection{Testing for independent errors}\label{s:test}
\newcommand{\ii}{\mathrm{i}}

Within the FDA literature the iid assumption for the error components $U_{t1},\ldots, U_{tp}$ is strongly prevailing. Below we refer to this assumption as the null hypothesis $\mathcal{H}_0$. Surprisingly, however, on real data this assumption is typically used without providing empirical evidence. To the best of our knowledge, no specific  statistical tests have been developed for this problem. Of course, a straight forward strategy is to employ some of the existing  white noise tests individually to each residual vector and then to aggregate the information from the resulting $T$ tests. Below we propose a tailor-made test statistic for our setting. 
To this end we introduce some further notation. We assume throughout that the error vectors $U_t$, $1\leq t\leq T$, are iid. The components will also be iid or stationary, depending on whether we operate under the null hypothesis or the alternative hypothesis.

For some generic random vector $Z=(Z_1,\ldots, Z_p)^\prime$  we denote the empirical variance of the components of $Z$ by $S_Z^2=\frac{1}{p-1}\sum_{k=1}^p (Z_k-\bar Z)^2$ . The periodogram is defined as $$I_{Z}(\theta)=\frac{1}{p}\left|\sum_{k=1}^p Z_ke^{-\ii k\theta}\right|^2.$$ Here $\ii=\sqrt{-1}$ and $|z|$ is the modulus of a complex number $z$. We refer to the frequencies $\theta_\ell=\frac{2\pi \ell}{q}$, $1\leq \ell\leq q:=\lfloor p/2\rfloor$ as the fundamental frequencies.  

Now we choose a subset of fundamental frequencies $\bm \theta=\{\theta_\ell$, $\ell \in \mathcal{F} \subset \lbrace 1, \ldots, q\rbrace\}$ and denote $f := \vert \mathcal{F} \vert$. We allow $\mathcal{F}$ (and hence $f$) and $p$ to depend on $T$ and our asymptotic statements below are then for $T\to\infty$. 

Set $\xi=T^{-1} \sum_{t=1}^T I_{U_t}(\bm\theta)$ and note that $\xi$ is an estimator of the spectral density of the $U_t$ at the fundamental frequencies contained in $\bm\theta$. If the components $U_{ti}$ are iid, then the spectral density is constant and the components of $\xi$ will be roughly constant as well. Our test statistic is thus based on the empirical variance $S_\xi^2$, which under $\mathcal{H}_0$ shall be accordingly small. Proposition~\ref{p:test} below establishes the essential asymptotic result related to the proposed test under the null.

\begin{prop}\label{p:test}  Assume that $\mathcal{H}_0$ holds and that $EU_{11}=0$, $EU_{11}^2=\sigma^2$ and $EU_{11}^4<\infty$. 
Let $\hat\sigma^2$ denote a consistent estimator of the variance.
Then $$\Lambda_\emph{fin}:=(f-1)TS_\xi^2/\hat\sigma^4\stackrel{\mathrm{d}}{\to} \chi^2_{f-1}.$$
If additionally    $EU_{11}^8<\infty$ and $f\to\infty$, and $f/T\to 0$, then
$$
\Lambda_\emph{inf}:=\left(TS_\xi^2/\hat\sigma^4 - 1\right)\sqrt{(f-1)/2}\stackrel{\mathrm{d}}{\to} N(0,1).
$$
\end{prop}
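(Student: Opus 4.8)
The whole argument rests on the $f$-dimensional vectors $I_{U_t}(\bm\theta)=(I_{U_t}(\theta_\ell))_{\ell\in\mathcal F}$, which are i.i.d.\ across $t$ because the $U_t$ are. The first thing I would do is compute their first two moments exactly. Since the components of $U_t$ are i.i.d.\ with mean $0$ and variance $\sigma^2$, a direct calculation gives $E\,I_{U_t}(\theta_\ell)=\sigma^2$ for every $\ell$. For the covariance I would expand the product of two periodograms into the fourfold sum $\sum_{j,k,l,m}E[U_{tj}U_{tk}U_{tl}U_{tm}]\,e^{-\ii(j-k)\theta_\ell-\ii(l-m)\theta_{\ell'}}$ and use the decomposition $E[U_{tj}U_{tk}U_{tl}U_{tm}]=\sigma^4(\delta_{jk}\delta_{lm}+\delta_{jl}\delta_{km}+\delta_{jm}\delta_{kl})+\kappa_4\,\delta_{jklm}$, with $\kappa_4=EU_{11}^4-3\sigma^4$. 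Collecting the pairings reduces the cross-covariance to Dirichlet kernels $D_p(\omega)=\sum_{j=1}^p e^{-\ii j\omega}$ evaluated at $\theta_\ell\pm\theta_{\ell'}$, plus a $\kappa_4/p$ term from the all-equal diagonal. Using the orthogonality of $D_p$ across the fundamental frequencies (so that $D_p(\theta_\ell\pm\theta_{\ell'})$ vanishes off the diagonal, up to one negligible boundary frequency) this yields $\Sigma:=\mathrm{Var}(I_{U_t}(\bm\theta))=\sigma^4 I_f+\tfrac{\kappa_4}{p}J+o(1)$, where $J$ is the all-ones matrix.

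The key algebraic observation is that $S_\xi^2$ is a centred quadratic form. Writing $P=I_f-\tfrac1f J$ for the centring projection and $\eta=\sqrt T(\xi-\sigma^2\mathbf 1)$, one has $(f-1)TS_\xi^2=\eta^\prime P\eta$. Because $P\mathbf 1=0$, and hence $PJ=0$, we get $P\Sigma=\sigma^4 P$: the projection simultaneously removes the common mean $\sigma^2$ and annihilates the $\kappa_4/p$ correlations, leaving an isotropic covariance on the $(f-1)$-dimensional contrast space. For fixed $f$ this finishes $\Lambda_{\mathrm{fin}}$ quickly: the multivariate CLT over the i.i.d.\ index $t$ gives $\eta\stackrel{\mathrm d}{\to}N(0,\Sigma)$, and since $A:=P/\sigma^4$ satisfies $A\Sigma=P$ which is idempotent with $\mathrm{tr}(A\Sigma)=f-1$, the Cochran-type theorem for quadratic forms in Gaussian vectors yields $\eta^\prime P\eta/\sigma^4\stackrel{\mathrm d}{\to}\chi^2_{f-1}$. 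Replacing $\sigma^4$ by the consistent $\hat\sigma^4$ via Slutsky completes this part.

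For $\Lambda_{\mathrm{inf}}$ the iterated-limit shortcut is unavailable, since $f\to\infty$ and one cannot first send $T\to\infty$. Here I would work directly with the double sum $\eta^\prime P\eta=\tfrac1T\sum_{t,t'}W_t^\prime PW_{t'}$, where $W_t=I_{U_t}(\bm\theta)-\sigma^2\mathbf 1$. The diagonal part $\tfrac1T\sum_t W_t^\prime PW_t$ has mean $\mathrm{tr}(P\Sigma)=(f-1)\sigma^4$ and variance $O(f/T)$, hence equals $(f-1)\sigma^4+o_P(\sqrt f)$; this is exactly the centring constant in $\Lambda_{\mathrm{inf}}$. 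All surviving fluctuation comes from the off-diagonal degenerate part $R=\tfrac1T\sum_{t\ne t'}W_t^\prime PW_{t'}$, for which $ER=0$ and $\mathrm{Var}(R)\to2\,\mathrm{tr}((P\Sigma)^2)=2\sigma^8(f-1)$, matching the normalisation $\sigma^4\sqrt{2(f-1)}$. The remaining task is a central limit theorem for this degenerate $U$-statistic whose kernel varies with $f$, which I would obtain from Hall's CLT for degenerate $U$-statistics (or a martingale CLT in the filtration of $U_1,\dots,U_t$); its Lyapunov-type conditions reduce here to $\mathrm{tr}((P\Sigma)^4)/\{\mathrm{tr}((P\Sigma)^2)\}^2=1/(f-1)\to0$ together with $f/T\to0$, and the fourth-moment bound on the kernel $W_t^\prime PW_{t'}$ is precisely where $EU_{11}^8<\infty$ enters (a quartic in the periodogram is of degree eight in $U$).

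The main obstacle is this last CLT for $R$ in the joint regime $T,f\to\infty$. Besides verifying Hall's conditions, one must control three error sources and show each is $o_P(\sqrt f)$: the $f\bar\eta^2$ discrepancy between $\sum_\ell\eta_\ell^2$ and $\eta^\prime P\eta$, the Dirichlet boundary terms and the $\kappa_4/p$ perturbation dropped from $\Sigma$ (here $f\le\lfloor p/2\rfloor$ forces $p\to\infty$ and keeps $f\kappa_4/p=O(1)$), and the plug-in error from $\hat\sigma^4$. The last point is the most delicate for $\Lambda_{\mathrm{inf}}$: since one multiplies by $\sqrt f$, mere consistency of $\hat\sigma^2$ is not enough, and one needs $\hat\sigma^2-\sigma^2=o_P(f^{-1/2})$. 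I would verify this for the natural $\sqrt{Tp}$-consistent variance estimator (for instance one based on $\bar\xi$ or on the residual components), using $f/(Tp)\to0$.
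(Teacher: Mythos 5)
Your fixed-$f$ argument is essentially the paper's: same moment computation for the periodogram ordinates (including the $\kappa/p\,\mathbbm{1}_f$ term), same identity $P\Sigma=\sigma^4P$ via the centring projection, then multivariate CLT plus continuous mapping and Slutsky. For the $f\to\infty$ part, however, you take a genuinely different route. The paper stays with the vector sum $T^{-1/2}\sum_t W_t$ (where $(W_t^\prime,0)^\prime=Q^\prime P_fV_t$ has isotropic covariance $\sigma^4I_{f-1}$) and applies a high-dimensional Berry--Esseen bound (Corollary~3.1 of Fang and Koike, 2021) to compare $P(\|T^{-1/2}\sum_tW_t/\sigma^2\|\le\tilde z)$ with $P(\|Z\|\le\tilde z)$ for Gaussian $Z$, obtaining the explicit rate $C(T^{-1/8}+(f/T)^{1/6})$; the required uniform fourth moments of $W_{ti}$ are checked via Rosenthal's inequality, which is exactly where $EU_{11}^8<\infty$ enters. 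You instead split the quadratic form $\eta^\prime P\eta$ into its diagonal part (a law of large numbers supplying the centring $(f-1)\sigma^4$ with $o_P(\sqrt f)$ fluctuation) and the degenerate off-diagonal $U$-statistic $R$, to which you apply Hall's CLT; the eighth moment enters there through the fourth moment of the kernel $W_t^\prime PW_{t^\prime}$. Both routes are sound and both need $f/T\to0$; the paper's buys an explicit convergence rate and avoids $U$-statistic machinery at the price of invoking a recent high-dimensional CLT, while yours is more classical and self-contained but leaves the verification of Hall's moment condition (that $E[(W_1^\prime PW_2)^4]=O(f^2)$ under eight moments of $U_{11}$) as nontrivial remaining work. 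One point in your favour: your observation that mere consistency of $\hat\sigma^2$ is not enough when $f\to\infty$ --- one needs $\hat\sigma^4-\sigma^4=o_P(f^{-1/2})$ because the plug-in error is inflated by $\sqrt{f-1}$ --- is correct and is glossed over in the paper, which simply invokes Slutsky; it is harmless for the $\sqrt{Tp}$-consistent estimators used in practice (since $f\le p/2$ gives $\sqrt{f/(Tp)}\to0$), but the hypothesis as stated is strictly weaker than what the normalisation requires.
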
 
In practice the $U_{ti}$ are latent and the test will be applied to the residuals $\hat U_{ti}=Y_{ti}-\hat X_t(s_i)$. This gives then rise to the test statistics $\hat\Lambda_\text{fin}$ and $\hat\Lambda_\text{inf}$. 
 If $p$ is of the same order or of a bigger order of magnitude than $T$ (e.g.\ this is the case in our real data in Section~\ref{s:real}), then taking $\mathcal{F}=\{1,\ldots, q\}$ is not theoretically justified by the proposition, since then $f\approx p/2$ and hence $f/T\not\to 0$. We can overcome this problem by thinning out the frequencies $\mathcal{F}$, i.e.\ we choose some large enough $m$ and only take every $m$-th frequency. Then $f\approx\frac{p}{2mT}$.

In our next result we want to show the proposed test is consistent under the following alternative:
\begin{assumption}\label{a:consistencytest}[Alternative Hyptothesis]
We assume that the process $U:=\lbrace U_{ti}: i\geq 1\rbrace$ is stationary with absolutely summable autocovariance $\gamma_U$ function and spectral density 
$$g(\theta):=\sum_{h\in\mathbb{Z}}\gamma_U(h)e^{ih\theta}.$$ 
Additionally we assume that
$\text{Var}(I_{U_1}(\theta_\ell))$ is uniformly bounded for all $\ell\in\mathcal{F}$ and all dimensions $p$.  Finally, denoting $\bar{g}=\frac{1}{f}\sum_{\ell\in\mathcal{F}}g_U(\theta_\ell)$ we assume that there is some $\delta>0$ such that
 \begin{equation}\label{e:spectnonconst}
\frac{1}{f-1} \sum_{\ell \in\mathcal{F}} (g(\theta_{\ell}) - \bar{g})^2 > \delta.
  \end{equation} 
  When $f$ diverges, \eqref{e:spectnonconst} should hold uniformly in $f$.
\end{assumption}

Besides mild technical moment assumptions (which hold e.g.\ for certain linear processes), our basic requirement under the alternative is that the noise is correlated and hence that the spectral density is not constant. In order to detect such a non-constant spectral density, we have to assure that it varies at the frequencies we have incorporated in our test statistic. This is assured by \eqref{e:spectnonconst}.  Note that the term in \eqref{e:spectnonconst} does not just depend on $f$ but also on the choice of frequencies.
If we select the frequencies $\theta_\ell$ on a regular grid and $f\to\infty$, then we can replace our condition by $\int_0^\pi \left( g(\theta)-\int_0^\pi g(s)ds\right)^2 d\theta>\delta$.

\begin{prop}\label{p:alt}  Consider the setting of Proposition~\ref{p:test} and assume that \eqref{a:consistencytest} holds. Let $0\leq \psi_T=o(T)$. Then $\Lambda_\emph{fin}/\psi_T\to\infty\quad (T\to\infty).$
If additionally $f=f(T)\to\infty$, $f/T\to 0$ and if $0\leq \psi_T=o(T\sqrt{f})$ then
$
\Lambda_\emph{inf}/\psi_T\to\infty\quad (T\to\infty).$
\end{prop}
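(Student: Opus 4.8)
The plan is to reduce both assertions to the single claim that the empirical variance of the averaged periodogram vector stays bounded away from zero, namely
\begin{equation*}
S_\xi^2 \ge \delta - o_P(1).
\end{equation*}
Once this is established, both blow-up rates follow at once: on the event $\{S_\xi^2 > \delta/2\}$, which has probability tending to one, and using that under the assumptions $\hat\sigma^2$ converges in probability to $\sigma^2=\gamma_U(0)\in(0,\infty)$ so that $\hat\sigma^4=O_P(1)$, we obtain $\Lambda_{\text{fin}}=(f-1)T S_\xi^2/\hat\sigma^4\gtrsim T$ (since $f-1\ge 1$), whence $\Lambda_{\text{fin}}/\psi_T\gtrsim T/\psi_T\to\infty$ whenever $\psi_T=o(T)$. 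Likewise $T S_\xi^2/\hat\sigma^4\to\infty$ forces $\Lambda_{\text{inf}}\gtrsim T\sqrt{f}$, so $\Lambda_{\text{inf}}/\psi_T\to\infty$ whenever $\psi_T=o(T\sqrt{f})$. So the entire proposition rests on the lower bound displayed above.

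To prove that bound I would first fix the mean structure. Because the $U_t$ are iid in $t$, for each fixed $\ell$ the summands $I_{U_t}(\theta_\ell)$ are iid with mean
\begin{equation*}
a_\ell:=\mathrm{E}\,I_{U_1}(\theta_\ell)=\sum_{|h|<p}\Bigl(1-\tfrac{|h|}{p}\Bigr)\gamma_U(h)e^{-\mathrm{i}h\theta_\ell},
\end{equation*}
the Cesàro-smoothed spectral density. Absolute summability of $\gamma_U$ gives the uniform bias bound $\sup_\ell|a_\ell-g(\theta_\ell)|\le\sum_{|h|\ge p}|\gamma_U(h)|+p^{-1}\sum_{|h|<p}|h|\,|\gamma_U(h)|=:\epsilon_p\to0$. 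Writing $\xi_\ell=a_\ell+R_\ell$ with $R_\ell:=\xi_\ell-\mathrm{E}\xi_\ell$, I decompose
\begin{equation*}
S_\xi^2=\frac{1}{f-1}\sum_\ell(a_\ell-\bar a)^2+\frac{2}{f-1}\sum_\ell(a_\ell-\bar a)(R_\ell-\bar R)+\frac{1}{f-1}\sum_\ell(R_\ell-\bar R)^2=:(\mathrm{I})+(\mathrm{II})+(\mathrm{III}).
\end{equation*}

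The core is the lower bound on $(\mathrm{I})$. Since $g$ is bounded (absolute summability yields $\sup_\theta|g(\theta)|\le\sum_h|\gamma_U(h)|$), expanding $a_\ell=g(\theta_\ell)+(a_\ell-g(\theta_\ell))$ and applying Cauchy--Schwarz to the cross terms gives $(\mathrm{I})=\frac{1}{f-1}\sum_\ell(g(\theta_\ell)-\bar g)^2+O(\epsilon_p)$, which by \eqref{e:spectnonconst} (assumed to hold uniformly in $f$) is $>\delta-O(\epsilon_p)$. For the stochastic terms, the uniform bound on $\mathrm{Var}(I_{U_1}(\theta_\ell))$ combined with independence in $t$ gives $\mathrm{Var}(R_\ell)=T^{-1}\mathrm{Var}(I_{U_1}(\theta_\ell))\le C/T$ uniformly in $\ell$, so $\mathrm{E}\bigl[f^{-1}\sum_\ell R_\ell^2\bigr]\le C/T$ and hence, by Markov, $(\mathrm{III})\le\frac{1}{f-1}\sum_\ell R_\ell^2=O_P(1/T)$; crucially no independence across frequencies is needed, only the averaged second-moment bound. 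A further Cauchy--Schwarz bounds $|(\mathrm{II})|\le 2\sqrt{(\mathrm{I})}\sqrt{(\mathrm{III})}=O_P(1/\sqrt{T})$. Collecting the three estimates yields $S_\xi^2\ge\delta-o_P(1)$, as required.

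The step I expect to be the main obstacle is making the bound on $(\mathrm{I})$ rigorous uniformly as $f$ and $p$ grow with $T$: one must transfer the non-degeneracy condition \eqref{e:spectnonconst}, which is phrased in terms of the limiting spectral density $g$, to the finite-$p$ expected periodogram $a_\ell$, controlling the Cesàro bias $\epsilon_p$ simultaneously with the averaging over a diverging number $f$ of frequencies so that the constant $\delta$ is not eroded. The probabilistic remainders $(\mathrm{II})$ and $(\mathrm{III})$, by contrast, are handled by the routine Markov-plus-Cauchy--Schwarz argument above, since the uniform variance bound on the periodogram and the iid structure in $t$ do all the work there.
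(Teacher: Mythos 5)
Your proposal is correct and follows essentially the same route as the paper: both arguments reduce the proposition to showing $P(S_\xi^2>\delta/2)\to 1$, split $S_\xi^2$ into a deterministic spectral-variance term bounded below by $\delta$ via \eqref{e:spectnonconst} plus stochastic cross and remainder terms, and then kill the latter using the uniform bound on $\mathrm{Var}(I_{U_1}(\theta_\ell))$ and the iid structure in $t$. The only difference is technical: you center at the expected periodogram $a_\ell$ and control the remainder through an averaged second-moment (Markov) bound, whereas the paper centers at $g(\theta_\ell)$ directly and uses a union bound over the $f$ frequencies together with the uniform convergence $\sup_\theta|EI_{U_t}(\theta)-g(\theta)|\to 0$; your variant does not need $f/T\to 0$ for this step, but since that condition is assumed in the divergent case anyway, the two treatments are interchangeable here.
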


We can see from Proposition~\ref{p:alt} that our proposed test statistic is rather powerful. Hence, we need to account for the effect of the estimation error $\hat U_{ti}-U_{ti}=\hat X_t(s_i)-X_t(s_i)$. The estimated residuals $\hat U_{ti}$ will not be perfectly iid, even if the $U_{ti}$ are, no matter which estimator is applied. We have experienced in simulations that frequencies close to zero seem to get eliminated in the time series $\{\hat U_{ti}\colon 1\leq i\leq p\}$ and hence we do not accurately estimate the spectral densities at very low frequencies. We illustrate this in Figure~\ref{fig:perplot1}. The same phenomenon holds true for other estimation methods we have explored in this paper.  We overcome the problem in our real data example by excluding $\ell\in\mathcal{F}$ if $\ell<0.1\times q$.

\begin{figure}[ht]
\begin{center}
\includegraphics[width=7.5cm]{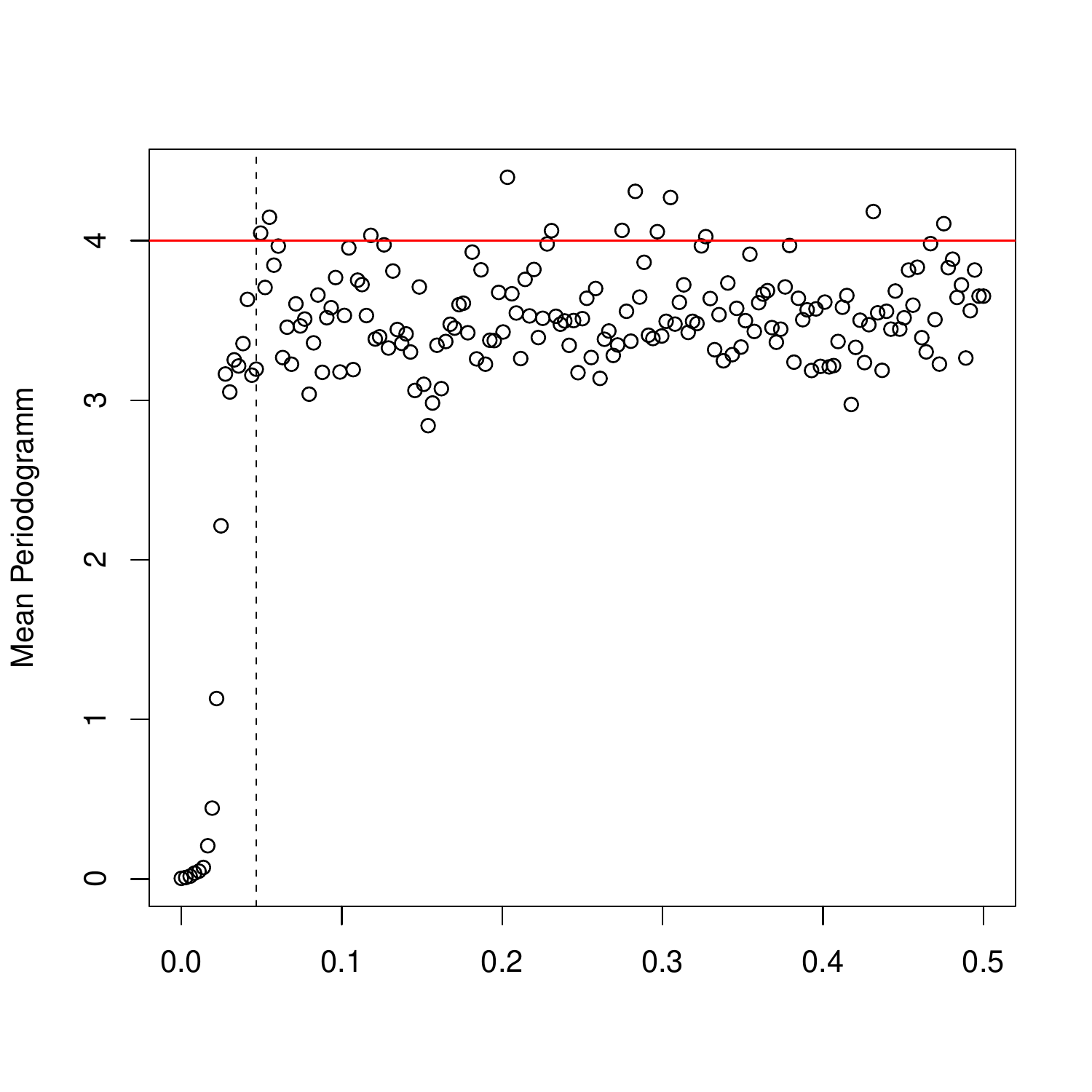}\\
\caption{Components of the averaged periodograms $\hat \xi=T^{-1} \sum_{t=1}^T I_{\hat U_t}(\bm\theta)$. We use the simulation setting of Section~\ref{s:smoothsim} with $U_{ti}\stackrel{\text{iid}}{\sim} N(0,4)$,  $p=365$ and $T=200$. Dotted line indicates a cutoff of the first $10\%$ of $q$ available frequencies.}
\label{fig:perplot1}
\end{center}
\end{figure}

\subsection{A variant of the scree plot}\label{s:L}
Determining the number of factors is a difficult problem. 
 As previously mentioned in Section~\ref{s:generalapp}, among the existing approaches the methods by \citet{onatski2010} and \citet{owenwang2016} were the most accurate in our context. In this section we would like to propose an empirical approach, which is a visual tool similar to the widely used scree plot \citep{cattell:1966}.  We recall that the classical scree plot is based on the eigenvalues $\hat\gamma_1,\hat\gamma_2,\ldots$ of the empirical covariance matrix $\frac{1}{T}YY^\prime$. It shows the eigenvalues in descending order. A kink (or an `elbow') in the graph, where the rate of descent becomes small, indicates the true number of factors.

Instead of eigenvalues we propose to use the values of our test-statistics $\hat\Lambda_\text{inf}$ established in Section~\ref{s:test}.  %To this end we remind the reader that \eqref{signoise} may be written in terms of the $L$-factor model \eqref{e:factmod}, where the underlying signal to be estimated refers to the common components and the error process is the idiosyncratic component. 
The logic behind is as follows: assume that the errors $(U_{ti}\colon 1\leq i\leq p)$ are iid and suppose we fit a factor model with $\ell < L $ factors. Then, a certain amount of cross-sectional dependence still prevails in the residuals $(\hat{U}_{ti}\colon 1\leq i\leq p)$, since the estimator does not yet fully account for the common component. Hence, underestimating $L$ is likely to result in a large value for~$\hat\Lambda_\text{inf}$. When increasing the number $\ell$ of factors included in the model, the cross-sectional dependence is expected to diminish and finally to drop to a baseline level, when $\ell$ surpasses the true $L$, i.e.\ when in principle we move from a dependent to an independent sequence. Since our estimators $\hat X_t(\bs)$ are robust to overestimation of $L$, we expect the test statistics to approximately remain constant for   $\ell \geq L$. 
The method can be theoretically justified if the noise variables are iid, e.g.\ when we know that the noise can be related to measurement errors. In practice we may use it in a more general context. There we move from a long-range type dependence to weak dependence, which is likely to be reflected by a corresponding change in the decay rate of the test values.

We illustrate this approach in Figure~\ref{fig:screeplot1}, where we show plots of $\hat\Lambda_\text{inf}$ (figure on the left) and $\hat\gamma_\ell$ (figure on the right) against the chosen number of factors $\ell$. Details of the related data is again provided in Section~\ref{s:smoothsim}. We have chose $T=200$ and $p=365$. These parameters are comparable to our real data example in Section~\ref{s:real}. Due to the very large dimension, we are thinning out frequencies with $m=3$ and as suggested in Section~\ref{s:test} and we also drop 10$\%$ of the lowest frequencies, so that  $f/T\approx 0.27$.  In this example the true number of factors is $L=21$ (marked by the dashed vertical line). Our variant of the scree plot remains approximately constant after $\ell=20$. This would then be our suggested estimator $\hat L$. For the standard scree plot the rule is to choose the position of the kink as the number of factors, which here is at $\ell=8$.

\begin{figure}[ht]
\begin{center}
\includegraphics[width=13.5cm]{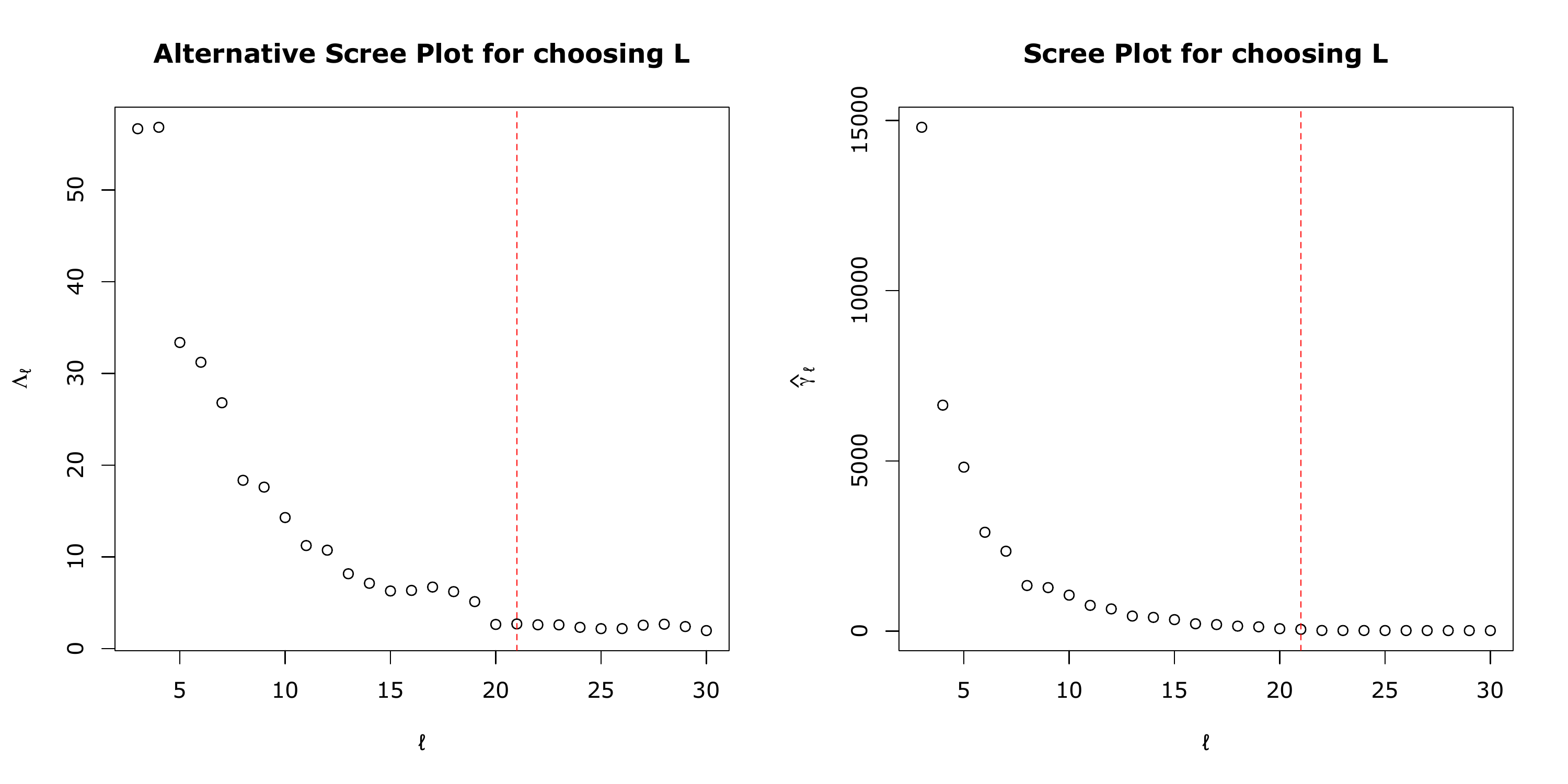}\\
\caption{Scree plots for choosing $L$ in a simulation setting of Section~\ref{s:smoothsim}. We use  $U_{ti}\stackrel{\text{iid}}{\sim} N(0,4)$,  $p=365$ and $T=200$. The true number of factors $L=21$ is indicated by the dotted red line.}
\label{fig:screeplot1}
\end{center}
\end{figure}

In Figure~\ref{fig:screeplot2} below we consider another setting, where $p=48$ is relatively small compared to $T=500$. In this example we simply use $\mathcal{F}=\{1,\ldots, q\}$. Since we get huge values for small $\ell$ we plot $\log \hat\Lambda_\text{inf}^2$.
We can see that the `scree' in our approach is much steeper and levels off near the true value of $L$.
For the standard eigenvalues-based scree plot no accentuated kink can be spotted at $L=21$. According to the `elbow-rule', we would again chose $L=8$. 

\begin{remark}
The proposed method is based on the assumption of independent noise.  An important message of our paper is that in several real data examples the errors are not necessarily related to measurement errors and a certain degree of dependence is well expected. This is also the case for the data we consider in  Section~\ref{s:real}. A modification of the approach which allows for weakly dependent errors would be interesting, but is out of the scope of this paper and will be subject of future research.
\end{remark}

\begin{figure}[ht]
\begin{center}
\includegraphics[width=13.5cm]{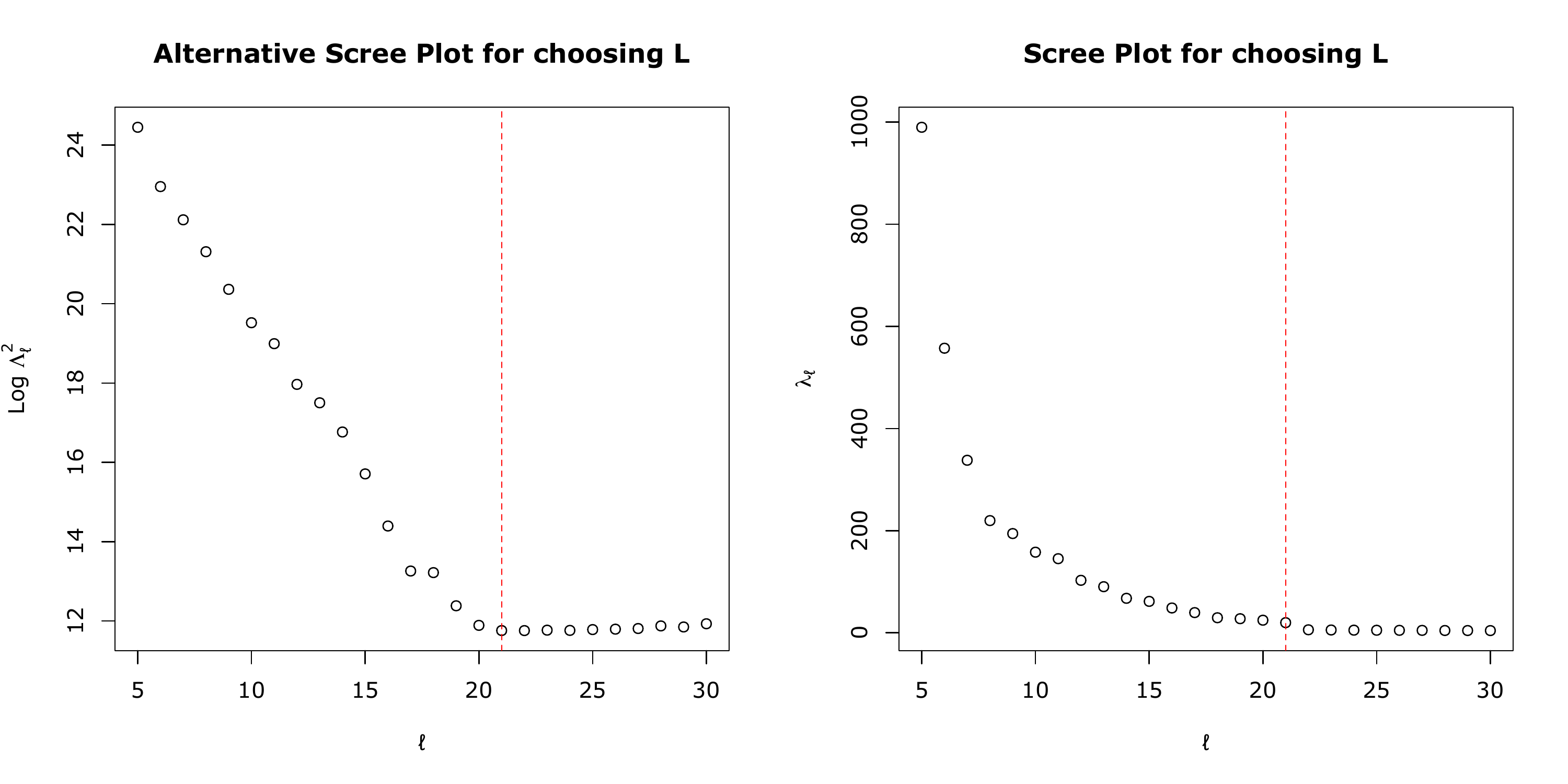}
\caption{Scree plots for choosing $L$ in a simulation setting of Section~\ref{s:smoothsim}. We use $U_{ti}\stackrel{\text{iid}}{\sim} N(0,4)$, $p=48$ and $T=500$. The true number of factors $L=21$ is indicated by the dotted red line.}
\label{fig:screeplot2}
\end{center}
\end{figure}

\section{Simulation experiments}\label{s:sim}

In this section we investigate the performance of our methods on simulated data examples. We have performed extensive simulation studies that can be separated into two types: smooth data (Section \ref{s:smoothsim}) and data where the underlying signal and its derivative contains discontinuities (Section \ref{s:roughsim}). The following simulations were performed in \texttt{R version 4.0.3}. 

\subsection{Recovering smooth signals}\label{s:smoothsim}

In order to have a realistic data generating process, we chose to adapt some real data for our purposes. We consider bi-hourly measurements of particulate matter \texttt{PM10} in Graz from October 1st 2010 to March 31st 2011. Thus, we have 48 observations per day over the course of 182 days. To have control over the actual structure of the data, we generated synthetic curves by the following four steps:
(1) transform the raw data to functional data;
(2) create a bootstrap sample of size $T$;
(3) evaluate the resulting sample on a grid of intraday time points;
(4) add noise as in \eqref{signoise}.
In Step~(1) we chose to do a least squares fit using $21$ cubic B-splines. Then $T = 50, 100, 200, 500$ curves were obtained by the bootstrapping in Step~(2). These curves are considered as our signals $X_1(s),\ldots, X_T(s)$.  The signals in turn were evaluated at $p = 24, 48, 96$ equidistant points in $\lbrack 0, 1 \rbrack$, giving rise to $X_t(\bs)$ (Step~(3)). In the final step we generated $Y_t=X_t(\bs)+U_t$ with $U_t\stackrel{\text{i.i.d.}}{\sim} N_p(0,\Omega)$. For $\Omega$ we chose the covariance of a sample $(\varepsilon_1,\ldots, \varepsilon_p)^\prime$ from the stationary AR(1) process
$
\varepsilon_k=\theta \varepsilon_{k-1}+\xi_k,
$
where $(\xi_k)$ is white noise with zero mean and variance $\sigma^2$. Altogether we show four parameter settings (S1--S4). 
These are listed in Table~\ref{tab:param}.
 We remark that under settings S1 and S2 the noise components are i.i.d.\ $N(0,\sigma^2)$. In S3 and S4 we take into account potential autocorrelation in the intraday noise, which we believe is a realistic assumption in many applications.
\begin{table}[ht]
\begin{center}
\footnotesize
\begin{tabular}{c|c|ccc}
& & & $\sigma$  &\\
\hline
& &1& 2& 4\\
\hline
& 0 &   & S1 & S2\\
$\theta$ & 0.4& S3  &   & \\
&0.8 & S4 &   &  \\
\end{tabular}
\caption{Parameter settings for the noise.}
\label{tab:param}
\end{center}
\end{table}

Our goal is now to recover the $X_t(\bs)$. First we compare our proposed method with a B-spline (B) and penalized B-spline smoothing approach (PB). Since the actual signal in this simulation setting is already contained in a space spanned by B-splines, we consider in fact a setup which is favourable for these competitors. 
The B-spline smooth was computed using $p/3$ basis functions and methods from the \texttt{fda} package in \texttt{R}. When $p=48$ this yields a number which is comparable to the actual number of B-splines used to create the signal, otherwise it is bigger. This is in line with \citet{wood2017}, who suggests using more basis functions than one believes necessary and then using a penalization approach to smooth the result. We use penalized B-splines with a roughness penalty of the form $\int X''(s)^2ds$. The penalty is added to the regular least squares equation and weighted with a parameter $\lambda$, which needs to be chosen. This has been done using a GCV (generalized cross validation) technique as described in \citet{ramsay09}. 

Furthermore, we compare our approach to the functional principal components (FPC) approach as motivated in \citet{Staniswalis:Lee:1998}. To this end, we have used the function \texttt{fpca.sc} from the \texttt{refund} package in \texttt{R}, which smooths the empirical covariance prior to obtaining an estimate for the functional scores and subsequently, the estimated signal. The number of principal components was automatically chosen to be large enough to explain $99 \%$ of the variance. Note that in this approach, the smoothing of the covariance operator is done via penalized splines, which is in line with a suggestion in \citet{Di:etal:2009}. The number of basis functions we used in this smoothing is $p/3$ as well. In our exploration we found that increasing the number of splines in this function requires immense computational effort while giving little improvement. 

For the factor analysis, we used two different approaches. First, we used the PCA driven approach, as described in \citet{fanetal2013} and explained in our Section~\ref{s:generalapp} (PCA). We used the package \texttt{POET} in \texttt{R} in order to obtain the described estimates for the factor scores and loadings. Second, we use a Maximum-Likelihood approach (ML) with the EM algorithm as described in \citet{baili2012} and implemented in the package \texttt{cate}. As for choosing the number of factors, we used the methods BCV and ED, which are described in Section~\ref{s:generalapp}. We note that the method we proposed in Section~\ref{s:L} provides a powerful visual tool, but choosing $L$ in this way for hundreds of simulation runs is not practically feasible. 

Implementation of BCV and ED can also be found in the package \texttt{cate}. 
Note that for the implementation, a maximum number of factors \texttt{rmax} to be considered can be selected. We have chosen \texttt{rmax} = 23.
Estimates tend to be robust to the overestimation of the dimension $L$, but sensitive to too small $L$, see for example \citet{fanetal2013}. This is intuitive, as a too small choice of $L$ will result in important information being excluded from the fit, whereas we only add potentially ``insignificant" information if $L$ is chosen too large. Thus, we have used $\hat{L} = \max(\hat{L}_{\text{BCV}}, \hat{L}_{\text{ED}})$. Practically we experience that in most settings $\hat{L}_{\text{BCV}}\geq \hat{L}_{\text{ED}}$. Hence the results remain basically unchanged if $\hat{L}=\hat{L}_{\text{BCV}}$ is used. 

In order to evaluate the quality of the respective approaches we are interested in the error $X_t(\bs)-\hat X_t(\bs)$.
While for real data $X_t(\bs)$ is not observable, the signals are known in our simulation setting and we can hence 
define 
\begin{equation}\label{eq:sseapprox}
\text{SSE}^{\text{appr}} = \frac{1}{pT}\sum_{i=1}^p \sum_{t=1}^T (X_{t}(s_i) - \hat{X}_{t}(s_i))^2.
\end{equation}

The results of our Monte Carlo study with 250 iterations can be found in Tables~\ref{tab:tabiidtest} and ~\ref{tab:tabartest}. Methods that produce the minimal $\text{SSE}^{\text{appr}}$ in each instance are bold.

\begin{table}[ht]
\centering
\begingroup
\footnotesize
\begin{tabular}{cc|cccccc|cccccc}
  \toprule \multicolumn{2}{c}{Dimensions} & \multicolumn{1}{|c}{} & \multicolumn{5}{c}{$\text{SSE}^{\text{appr}}$ ($\sigma$ = 2)} & \multicolumn{1}{|c}{} & \multicolumn{5}{c}{$\text{SSE}^{\text{appr}}$ ($\sigma$ = 4)}\\$p$ & $T$ & $\hat{L}$ & B & PB & FPC & ML & PCA & $\hat{L}$ & B & PB & FPC & ML & PCA \\ 
  \hline
24 & 50 & 8 & 39.50 & 41.76 & 39.45 & 22.07 & \textbf{16.59} & 7 & 43.07 & 45.62 & 42.58 & 31.64 & \textbf{27.07} \\ 
  24 & 100 & 11 & 39.55 & 42.11 & 39.94 & 15.02 & \textbf{10.33} & 9 & 42.61 & 45.85 & 42.14 & 25.04 & \textbf{21.18} \\ 
  24 & 200 & 14 & 39.79 & 42.39 & 39.82 & 11.16 & \textbf{7.32} & 11 & 43.54 & 47 & 42.85 & 22.63 & \textbf{18.4} \\ 
  24 & 500 & 18 & 39.07 & 41.51 & 39.95 & 6.07 & \textbf{4.34} & 13 & 43.30 & 47.44 & 42.91 & 19.28 & \textbf{14.93} \\ 
     &  &  &  &  &  &  &  &  &  &  &  &  &  \\ 
48 & 50 & 12 & 6.87 & 6.92 & 14.32 & 7.09 & \textbf{5.51} & 10 & 10.68 & \textbf{10.66} & 16.95 & 15.34 & 14.12 \\ 
  48 & 100 & 21 & 6.74 & 6.75 & 14.40 & 3.02 & \textbf{2.62} & 13 & 10.83 & 10.9 & 17.21 & 11.70 & \textbf{10.82} \\ 
  48 & 200 & 21 & 6.76 & 6.77 & 13.91 & 2.14 & \textbf{2.1} & 16 & 10.77 & 10.91 & 16.85 & 9.82 & \textbf{9.08} \\ 
  48 & 500 & 22 & 6.77 & 6.79 & 14.06 & 1.99 & \textbf{1.96} & 21 & 10.80 & 11.01 & 16.82 & 8.05 & \textbf{7.87} \\ 
     &  &  &  &  &  &  &  &  &  &  &  &  &  \\ 
96 & 50 & 16 & 1.33 & \textbf{1.19} & 6.91 & 2.95 & 2.69 & 12 & 5.24 & \textbf{4.13} & 9.14 & 9.88 & 9.67 \\ 
  96 & 100 & 21 & 1.32 & \textbf{1.18} & 7.02 & 1.72 & 1.68 & 16 & 5.24 & \textbf{4.14} & 9.03 & 7.08 & 6.96 \\ 
  96 & 200 & 22 & 1.32 & \textbf{1.18} & 7.21 & 1.32 & 1.33 & 19 & 5.24 & \textbf{4.17} & 9.00 & 5.53 & 5.42 \\ 
  96 & 500 & 22 & 1.32 & 1.18 & 7.16 & 1.10 & \textbf{1.09} & 21 & 5.24 & \textbf{4.2} & 8.78 & 4.30 & 4.34 \\ 
   \bottomrule
  \end{tabular}
\endgroup
\caption{Simulation results ($\text{SSE}^{\text{appr}}$) for the synthetic \texttt{PM10} data with iid noise. Here $\hat L$ is median value of the estimates $\max(\hat{L}_{\text{BCV}}, \hat{L}_{\text{ED}})$.} 
\label{tab:tabiidtest}
\end{table}

\begin{table}[ht]
\centering
\begingroup
\footnotesize
\begin{tabular}{cc|cccccc|cccccc}
  \toprule \multicolumn{2}{c}{Dimensions} & \multicolumn{1}{|c}{} & \multicolumn{5}{c}{$\text{SSE}^{\text{appr}}$ ($\theta$ = 0.4)} & \multicolumn{1}{|c}{} & \multicolumn{5}{c}{$\text{SSE}^{\text{appr}}$ ($\theta$ = 0.8)}\\$p$ & $T$ & $\hat{L}$ & B & PB & FPC & ML & PCA & $\hat{L}$ & B & PB & FPC & ML & PCA \\ 
  \hline
24 & 50 & 8 & 38.50 & 40.46 & 38.99 & 20.7 & \textbf{14.61} & 8 & 39.51 & 41.56 & 39.95 & 21.07 & \textbf{15.15} \\ 
  24 & 100 & 12 & 38.63 & 40.71 & 39.22 & 12.37 & \textbf{7.34} & 12 & 39.71 & 41.78 & 40.07 & 13.44 & \textbf{8.51} \\ 
  24 & 200 & 18 & 38.42 & 40.49 & 39.14 & 4.52 & \textbf{2.88} & 18 & 40.35 & 42.17 & 40.63 & 5.76 & \textbf{4.19} \\ 
  24 & 500 & 19 & 38.88 & 40.69 & 39.62 & 2.23 & \textbf{1.63} & 19 & 40.09 & 41.66 & 40.51 & 4.31 & \textbf{3.47} \\ 
   &  &  &  &  &  &  &  &  &  &  &  &  &  \\ 
48 & 50 & 21 & 6.05 & 6.12 & 13.55 & 1.21 & \textbf{1.08} & 21 & 8.00 & 8.06 & 14.83 & 2.71 & \textbf{2.66} \\ 
  48 & 100 & 21 & 6.18 & 6.25 & 14.00 & 0.92 & \textbf{0.91} & 21 & 7.91 & 7.98 & 15.71 & 2.6 & \textbf{2.58} \\ 
  48 & 200 & 22 & 6.13 & 6.21 & 13.65 & 0.88 & \textbf{0.87} & 22 & 7.90 & 7.97 & 15.41 & 2.58 & \textbf{2.56} \\ 
  48 & 500 & 22 & 6.16 & 6.23 & 13.71 & 0.86 & \textbf{0.85} & 22 & 7.87 & 7.94 & 15.08 & 2.57 & \textbf{2.56} \\ 
   &  &  &  &  &  &  &  &  &  &  &  &  &  \\ 
96 & 50 & 21 & 0.70 & \textbf{0.69} & 6.55 & 0.89 & 0.84 & 21 & 2.43 & \textbf{2.42} & 7.73 & 2.58 & 2.5 \\ 
  96 & 100 & 22 & 0.70 & \textbf{0.69} & 6.58 & 0.71 & 0.69 & 22 & 2.42 & 2.42 & 7.94 & 2.4 & \textbf{2.36} \\ 
  96 & 200 & 22 & 0.70 & 0.69 & 6.79 & 0.64 & \textbf{0.62} & 23 & 2.42 & 2.41 & 7.93 & 2.37 & \textbf{2.32} \\ 
  96 & 500 & 22 & 0.70 & 0.69 & 6.65 & \textbf{0.56} & \textbf{0.56} & 23 & 2.43 & 2.42 & 8.03 & 2.35 & \textbf{2.29} \\ 
   \bottomrule
  \end{tabular}
\endgroup
\caption{Simulation results for the synthetic \texttt{PM10} data with AR(1) noise.} 
\label{tab:tabartest}
\end{table}

The most important observations are summarised below:
\begin{enumerate}
\item
The factor model approach outperforms the B-splines largely when $p$ is growing slower than $T$. For example, when $p=24$, $T=500$ then with setting S3 the $\text{SSE}^{\text{appr}}$ produced by the B-splines and penalized B-splines are  more than $20$ times bigger compared to the PCA factor model approach. The penalized B-splines work best if $p$ is very large and $T$ is small. In this case the noise can be very well smoothed on a local level.
\item
For the B-splines based approaches the $\text{SSE}^{\text{appr}}$ doesn't decrease with growing sample size only with increasing $p$. Against our expectations, the FPC method doesn't improve in practice with increasing $T$ either, though theoretically it should (see the results in \citet{Mueller:Stadtmueller:Yao:2006}). It seems that the eigenfunctions from the smoothed covariances are oversmoothing the data and then local features of the data cannot be accurately recovered. In contrast, for both factor model estimators $\text{SSE}^{\text{appr}}$ decreases significantly with increasing $T$ as well as increasing $p$.
\item The PCA approach gave better results than the MLE approach.
\end{enumerate}

We have also experimented with further simulations settings. Not surprisingly, by further increasing $\sigma$, $\text{SSE}^{\text{appr}}$ increases for all methods.
Nevertheless we observe that in comparison to each other the methods behave similarly as in the settings described. The combination large $\sigma$, large $p$ and very small $T$ (e.g.,\ $T=10$) favours our competitors, while our proposed approach improves considerably with growing $T$ in all instances. For only mildly larger $T$ and much larger $p$ (e.g., $p=96, 192$ and $T=30$) we immediately obtain estimates that are competitive with the other approaches. 

Since the signals in our simulations are relatively smooth, the good performance of smoothing methods is not surprising for large $p$. For curves with rough signal, smoothing approaches are not able to recover specific features of the signal due to oversmoothing. This phenomenon can be observed in the following section.

\subsection{Recovering signals with discontinuities}\label{s:roughsim}

 We check in the following simulation setting the practical impact of ``rough" signals on the respective methods. More specifically, the signals $X_t(s)$ are defined on $\lbrack 0,1 \rbrack$ and are constructed as follows:
$$X_t(s) = \sum_{k=1}^3 \xi_{tk} \varphi_k(s), $$
where $\varphi_1(s) = \mathds{1}_{ \lbrace s>1/3 \rbrace }$, $\varphi_2(s) = (-1)^{\kappa}4(0.2 - \vert s-0.5 \vert)\mathds{1}_{ \lbrace s \in \lbrack 1/3, 2/3 \rbrack \rbrace }$ and $\varphi_3(s) = \cos{6 \pi s}$, where $\kappa = \mathds{1}_{ \lbrace s \in ( 1/2, 2/3 \rbrack \rbrace }$. Here, $\mathds{1}_{ \lbrace s \in \lbrack a, b \rbrack \rbrace }$ denotes the characteristic function, that is $\mathds{1}_{ \lbrace s \in \lbrack a, b \rbrack \rbrace } = 1$ if $s \in \lbrack a, b \rbrack$ and $0$ otherwise. The associated scores are independent and normally distributed $\xi_{tk} \sim N(0, 2^{-2(k-1)})$ for $k=1,2,3$. The noisy observations are obtained via $y_{ti} = X_t(s_i) + u_{ti}$, where  $u_{ti} \stackrel{iid}{\sim} N(0, \sigma^2)$. We consider equidistant observation points $s_i = (i - 0.5)/p$ for $i=1, \ldots ,p$. 
Thus, the signals may be disrupted at $s=1/3$ (through $\varphi_1$), %and have a peak in $\lbrack 1/3, 2/3 \rbrack$ (through $\varphi_2$).
and they have a discontinuous derivative at $s=1/2$ (through $\varphi_2$).
Figure~\ref{fig:roughsimest} shows two sample curves (black line) and the corresponding noisy observations (circles).

We consider the configurations $p = 20, 50, 70$ and $T = 50, 100, 200, 400$ as well as $\sigma^2 = 0.01, 0.05, 0.1$.
This gives rise to a total of 36 different settings, which have been repeated 200 times each. The signal is estimated by the methods PCA, PB and FPC. The rest of the procedure is the same as in Section~\ref{s:smoothsim}. 
The results have been summarized for each configuration and approach in Table~\ref{tab:simrough}. We see that the factor model approach (PCA) nearly always outperforms its competitors. It is evident that the penalized B-spline as well as the FPCA approach both fail to accurately estimate the signal at the discontinuity $s=1/3$; see Figures~\ref{fig:roughsimest} and \ref{fig:roughsimcov}. The factor model approach on the other hand is entirely unperturbed by this discontinuity and  outperforms its competitors in many settings by a huge margin.

We also mention that $\hat{L}$ can be overestimated as can be seen in the case of $\sigma^2 = 0.01$. Despite the mild overestimation of the required number of factors, we see no negative impact on the recovery of the signal.

We note that the function $\psi_1(s) = \sqrt{3/2}\varphi_1(s)$ is an eigenfunction of this process. As outlined in Section~\ref{s:eigenfun} we may estimate this eigenfunction from the raw data. Our estimate is subsequently compared to the first functional principal component obtained using the method motivated by \citet{Staniswalis:Lee:1998} and implemented in the package \texttt{refund}. Furthermore, we compare our result to the principal components obtained from using the penalized B-spline model, using the package \texttt{fda}. The resulting estimates are shown in Figure~\ref{fig:roughsimcov}. We can see that both the FPCA and PB approach cannot appropriately recover the jump around $s=1/3$. Our suggested approach recovers this particular feature very accurately.

\begin{figure}[ht]
\begin{center}
\includegraphics[width=7.5cm]{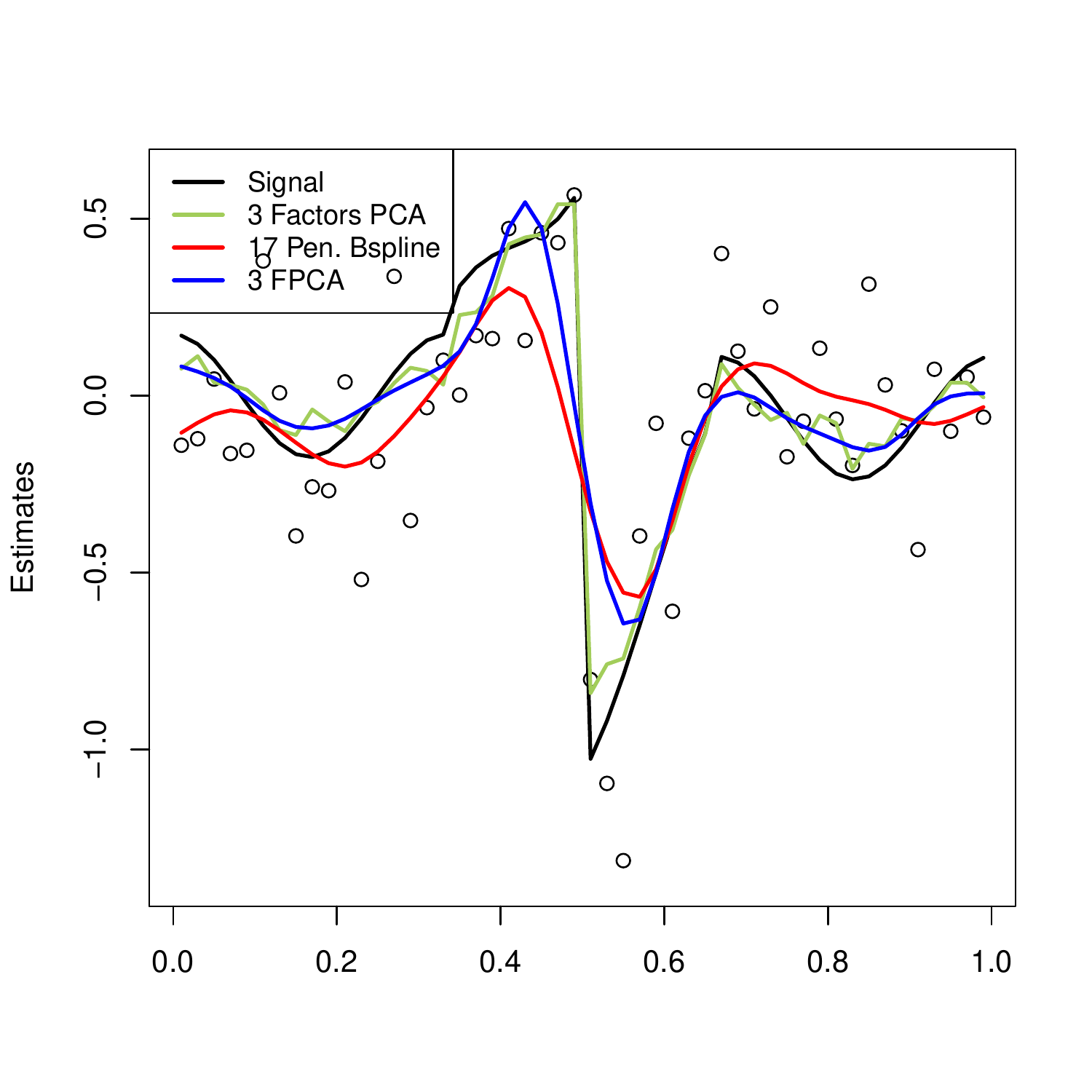}
\includegraphics[width=7.5cm]{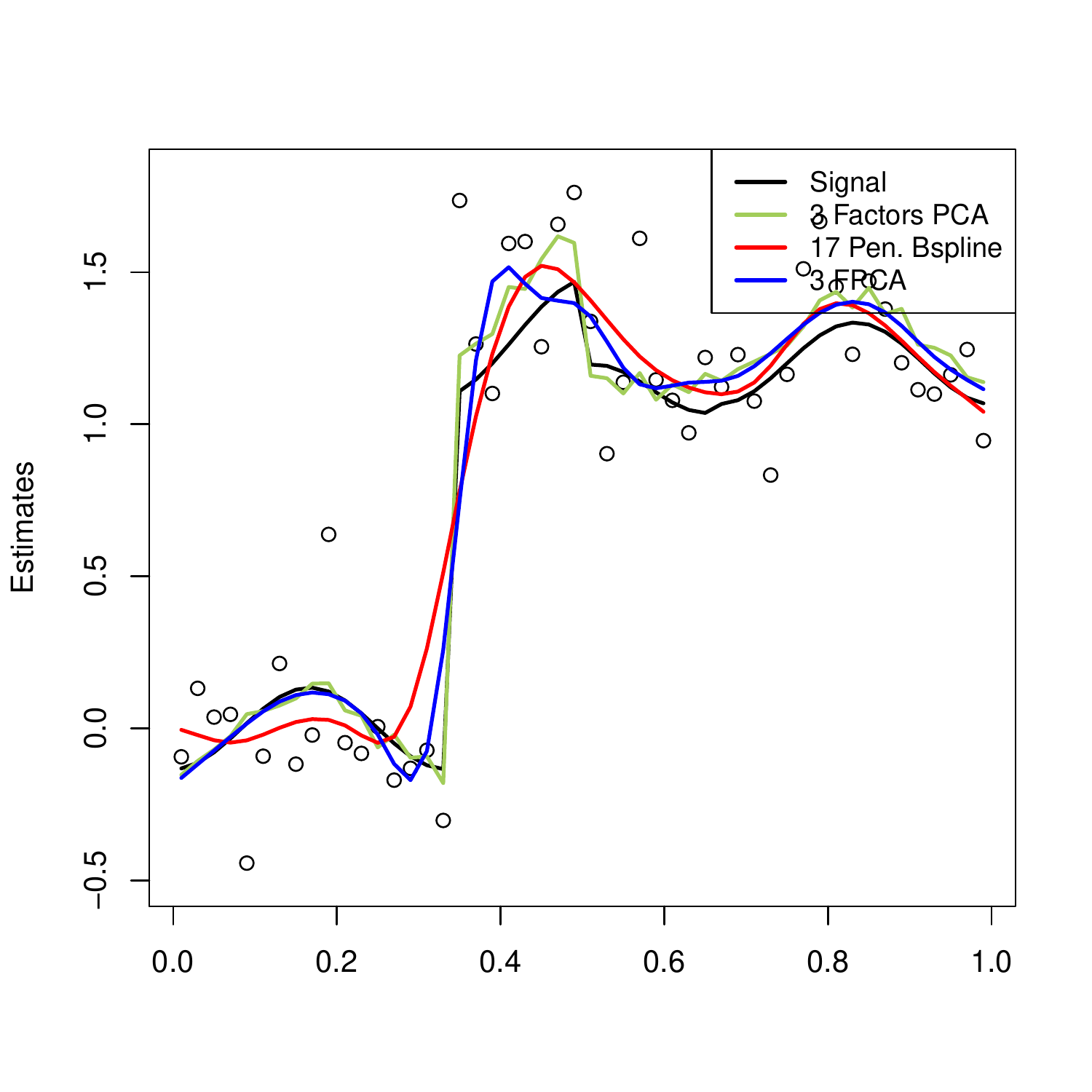}	\caption{Estimates for the rough signal simulation for $p=50, T=200, \sigma^2 = 0.05$. Dots represent the noisy observations.}
\label{fig:roughsimest}
\end{center}
\end{figure}

\begin{figure}[!ht]
\begin{center}
\includegraphics[width=7.5cm]{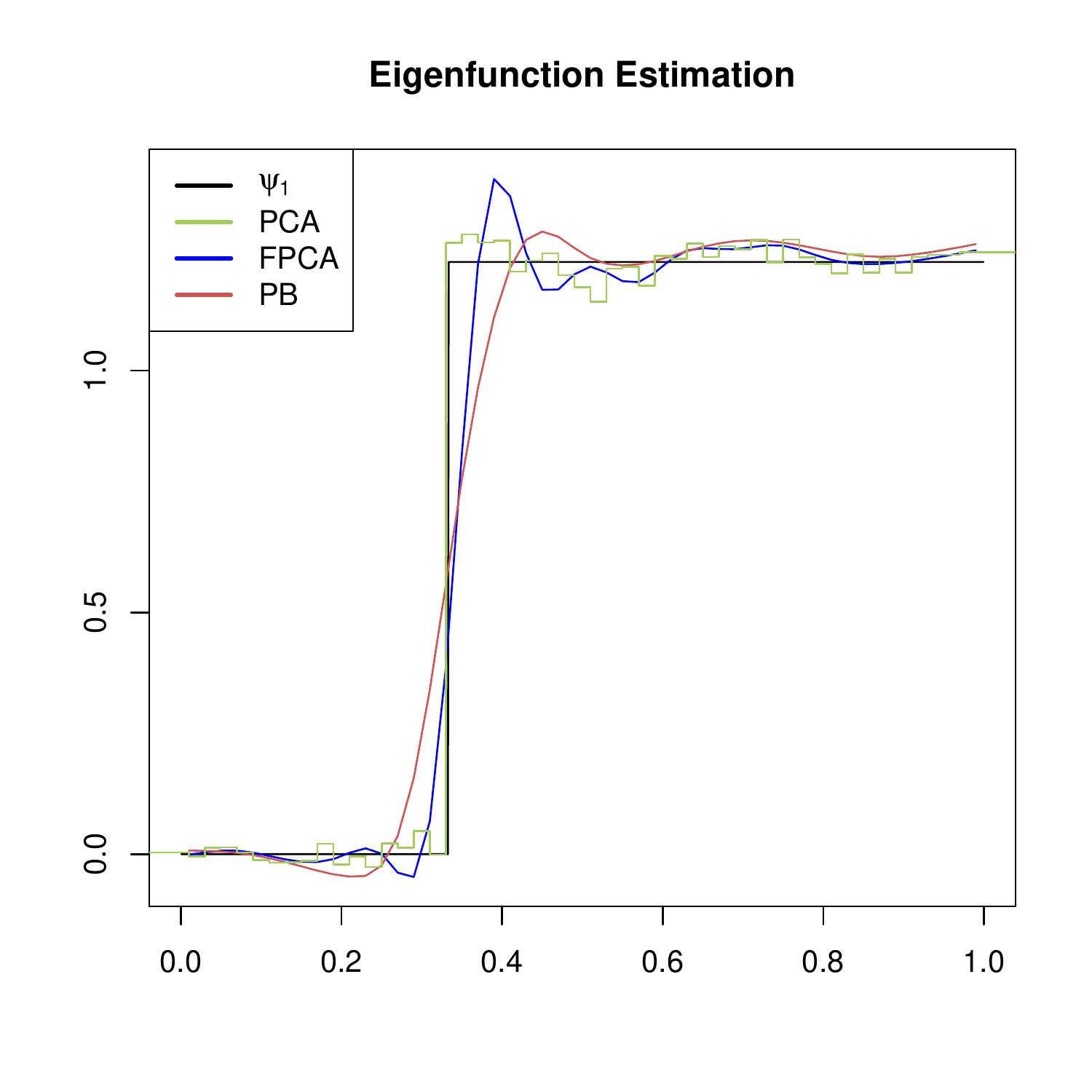}
\includegraphics[width=7.5cm]{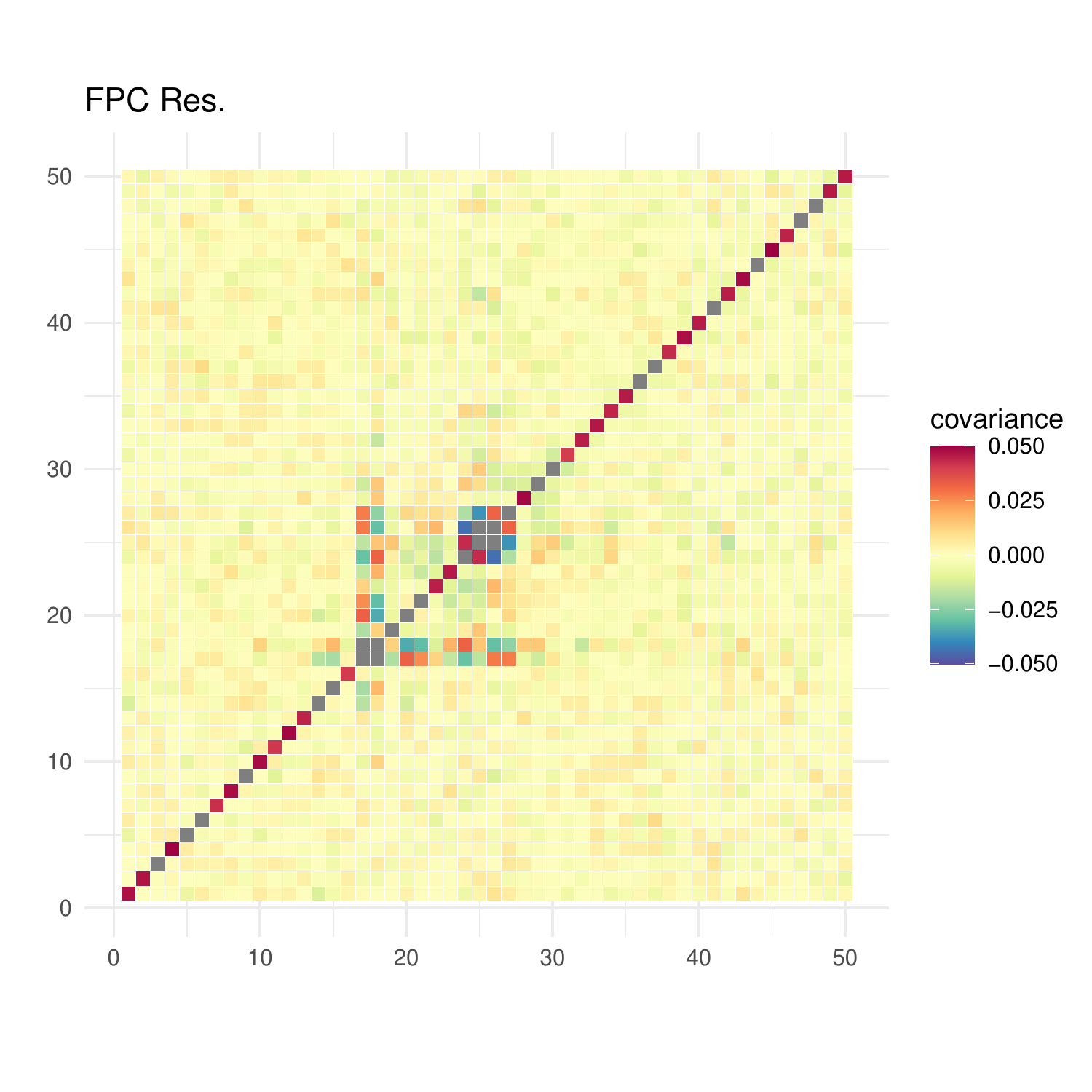}
\includegraphics[width=7.5cm]{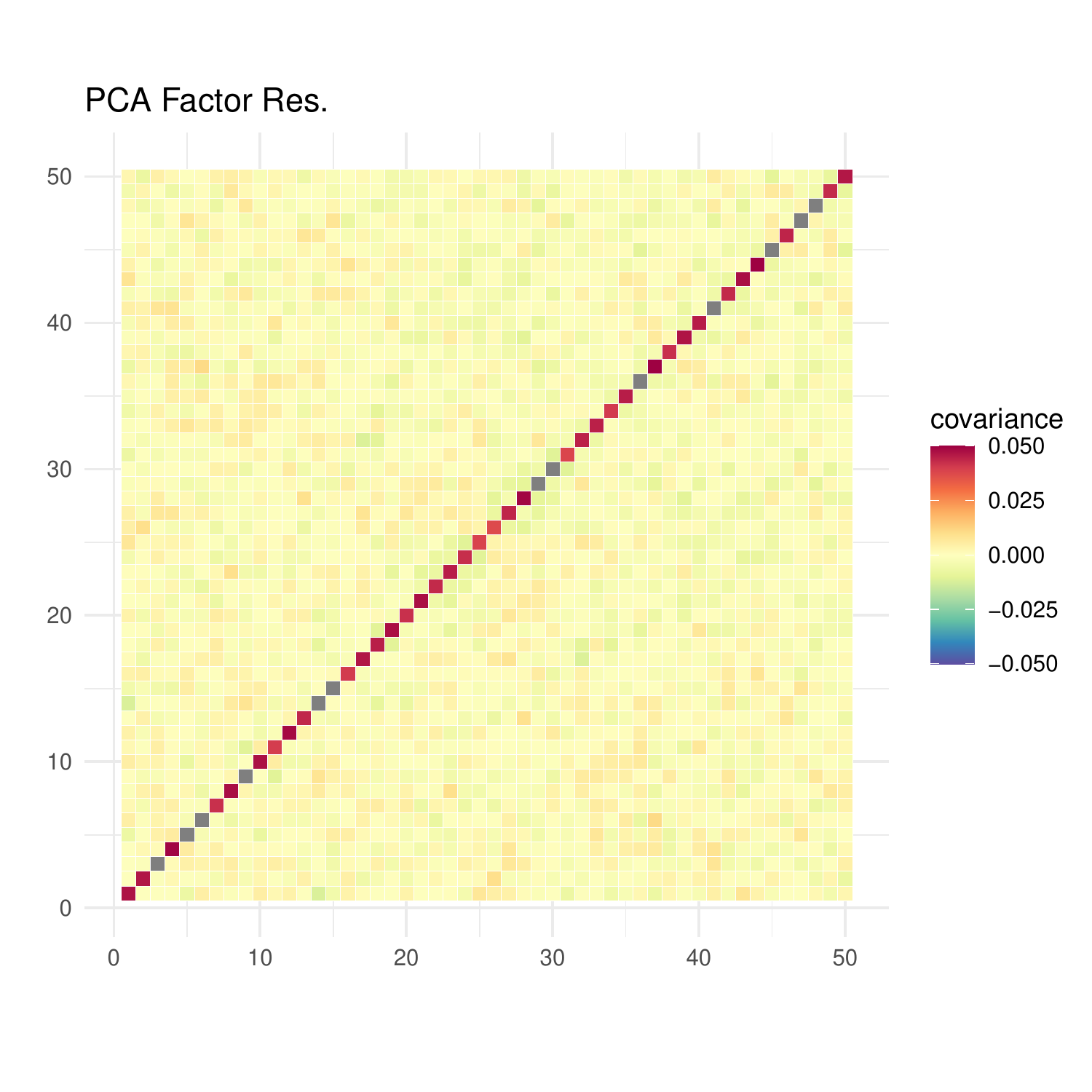}
\includegraphics[width=7.5cm]{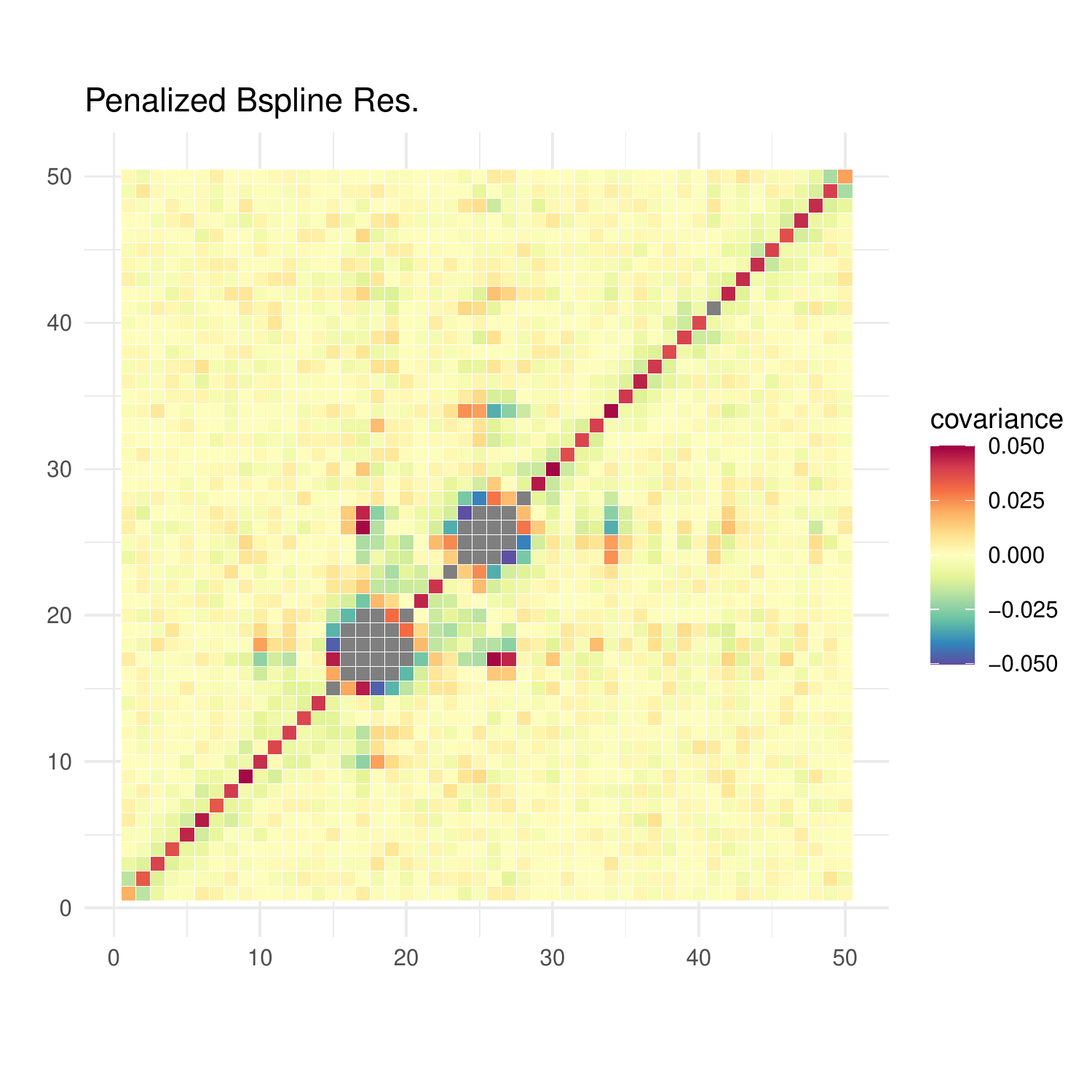}	\caption{Estimates for the first eigenfunction $\psi_1(s)$ (top left) and heat maps of the empirical covariance matrix of the residuals for the FPCA (top right), PCA (bottom left) and PB approach (bottom right) for the rough signal simulation.}
\label{fig:roughsimcov}
\end{center}
\end{figure}

\begin{table}[h]
\centering
\begingroup
\footnotesize
\begin{tabular}{cc|cccc|cccc|cccc}
  \toprule \multicolumn{2}{c}{Dimensions} & \multicolumn{1}{|c}{} & \multicolumn{3}{c}{$\text{SSE}^{\text{appr}}$ ($\sigma_\epsilon^2$ = 0.01)} & \multicolumn{1}{|c}{} & \multicolumn{3}{c}{$\text{SSE}^{\text{appr}}$ ($\sigma_\epsilon^2$ = 0.05)} & \multicolumn{1}{|c}{} & \multicolumn{3}{c}{$\text{SSE}^{\text{appr}}$ ($\sigma_\epsilon^2$ = 0.1)}\\$p$ & $T$ & $\hat{L}$ & PB & FPC & PCA & $\hat{L}$ & PB & FPC & PCA & $\hat{L}$ & PB & FPC & PCA \\ 
  \hline
20 & 50 & 4 & 0.056 & 0.046 & \textbf{0.004} & 3 & 0.072 & 0.05 & \textbf{0.012} & 3 & 0.094 & 0.056 & \textbf{0.025} \\ 
  20 & 100 & 5 & 0.057 & 0.046 & \textbf{0.004} & 3 & 0.072 & 0.051 & \textbf{0.01} & 3 & 0.097 & 0.055 & \textbf{0.02} \\ 
  20 & 200 & 5 & 0.057 & 0.046 & \textbf{0.003} & 3 & 0.069 & 0.049 & \textbf{0.01} & 3 & 0.097 & 0.055 & \textbf{0.018} \\ 
  20 & 400 & 5 & 0.057 & 0.045 & \textbf{0.003} & 3 & 0.069 & 0.049 & \textbf{0.01} & 3 & 0.098 & 0.054 & \textbf{0.017} \\ 
    & & & & & & & & & & & & &  \\ 
50 & 50 & 5 & 0.015 & 0.011 & \textbf{0.003} & 3 & 0.026 & 0.014 & \textbf{0.008} & 3 & 0.035 & 0.018 & \textbf{0.015} \\ 
  50 & 100 & 5 & 0.015 & 0.011 & \textbf{0.002} & 3 & 0.026 & 0.013 & \textbf{0.006} & 3 & 0.034 & 0.016 & \textbf{0.011} \\ 
  50 & 200 & 3 & 0.015 & 0.01 & \textbf{0.001} & 3 & 0.026 & 0.013 & \textbf{0.004} & 3 & 0.034 & 0.016 & \textbf{0.008} \\ 
  50 & 400 & 3 & 0.015 & 0.01 & \textbf{0.001} & 3 & 0.026 & 0.013 & \textbf{0.004} & 3 & 0.034 & 0.016 & \textbf{0.007} \\ 
    & & & & & & & & & & & & &  \\ 
70 & 50 & 5 & 0.011 & 0.006 & \textbf{0.002} & 3 & 0.020 & 0.009 & \textbf{0.007} & 3 & 0.028 & \textbf{0.012} & 0.014 \\ 
  70 & 100 & 3 & 0.011 & 0.006 & \textbf{0.001} & 3 & 0.020 & 0.008 & \textbf{0.004} & 3 & 0.028 & 0.01 & \textbf{0.009} \\ 
  70 & 200 & 3 & 0.011 & 0.006 & \textbf{0.001} & 3 & 0.020 & 0.008 & \textbf{0.003} & 3 & 0.028 & 0.01 & \textbf{0.006} \\ 
  70 & 400 & 3 & 0.011 & 0.006 & \textbf{0.001} & 3 & 0.020 & 0.007 & \textbf{0.003} & 3 & 0.028 & 0.01 & \textbf{0.005} \\ 
   \bottomrule
  \end{tabular}
\endgroup
\caption{Simulation results for the discontinuous signal.} 
\label{tab:simrough}
\end{table}

\subsection{Testing for independent noise}\label{s:testingpractical}
In this section, we empirically investigate the size of the test developed in Section~\ref{s:test}. To this end, we consider the setting of Section~\ref{s:smoothsim}, with $p=365$ observations per curve for $T=200$ curves. These number compare to the real data settings we consider in the next section. The error variance is set to $\sigma^2_U = 4$. We then fit a factor model using the true number of factors $L=21$ and test whether the model residuals are i.i.d. As mentioned in Section~\ref{s:test}, we note that in the case of model residuals, the spectral densities are not well-estimated at very low frequencies. Thus, considering all available frequencies $\lbrace \theta_\ell, \ell \in \lbrace 1, \ldots, q \rbrace \rbrace$ in the test is not appropriate, as the test is too powerful and will reject too often. This may be mitigated by a cutoff $c$, such that we only consider frequencies $\theta_\ell$ with $\ell > cq$. Furthermore, our theoretical results only support the case of $f/T\longrightarrow0$, where $f=\vert \mathcal{F} \vert$ is the size of the set of frequencies considered. In order to remain in this setting, we thin out the observations, taking only every $m$--th frequency into account. We expect this to improve the size of the test. As a point of comparison we have also applied the testing procedure to i.i.d. errors $u_{\text{Norm}} \sim N(0,4)$ and $u_{\text{Exp}} \sim \text{Exp}(2)$ of the same dimension $p$ and $T$ as above. We estimate the variance $\sigma_U^2$ in each instance using $\hat{\sigma}_U^2 := T^{-1}\sum_{t=1} \lbrack 6(p-2) \rbrack^{-1} \sum_{j=2}^{p-1} \lbrack u_{t,j+1} + u_{t,j-1} - 2u_{t,j}\rbrack^2$ in accordance to \cite{gasseretal:1986}. This estimate has proven to work very well in this framework. Each setting has been repeated $1000$ times and we check how often the test rejects $\mathcal{H}_0$ at levels $1, 5, 10 \%$. The results are displayed in Table~\ref{tab:emptestsize}. We can see clearly that in the case of i.i.d. variables the size matches the level quite well, indicating that our test works well in these instances. For the residual errors, we notice that the size improves steadily with a bigger cutoff and more thinning. Still, we see that $\mathcal{H}_0$ is too often rejected. This is not a surprise. Naturally, any estimator comes with some error and there is still structure left over that causes the testing procedure to reject slightly too often. In practice it is hence advisable to inspect the corresponding $p$-values.

\begin{table}[h]
\centering
\begingroup\small
\begin{tabular}{ccc|ccc}
  & cutoff & $m$ & 0.01 & 0.05 & 0.1 \\ 
  \hline
 & 0.05 & 3 & 0.090 & 0.187 & 0.270 \\ 
   & 0.05 & 5 & 0.060 & 0.126 & 0.177 \\ 
   & 0.05 & 10 & 0.044 & 0.106 & 0.157 \\ 
   & & & & & \\ 
 & 0.10 & 3 & 0.059 & 0.130 & 0.194 \\ 
   & 0.10 & 5 & 0.045 & 0.102 & 0.160 \\ 
   & 0.10 & 10 & 0.045 & 0.104 & 0.146 \\ 
   & & & & & \\ 
 & 0.20 & 3 & 0.051 & 0.112 & 0.167 \\ 
   & 0.20 & 5 & 0.047 & 0.097 & 0.156 \\ 
   & 0.20 & 10 & 0.046 & 0.081 & 0.125 \\ 
   \hline
 $u_{\text{Norm}}$ & 0.10 & 3 & 0.013 & 0.047 & 0.102 \\ 
  $u_{\text{Exp}}$ & 0.10 & 3 & 0.014 & 0.054 & 0.097 \\ 
  \end{tabular}
\endgroup
\caption{Simulation results for the empirical testsize.} 
\label{tab:emptestsize}
\end{table}

\section{Real data illustrations}\label{s:real}

In the following subsections we analyse annual temperature curves from Canada. We differentiate between two settings: On the one hand, we analyse a \textit{temporal setting} in the sense that we consider curves over a few years in one location. On the other hand, we consider a \textit{spatial setting}, where we investigate the same year for different weather stations. Our objective is to transform daily mean temperature data throughout a year into annual temperature curves. The data was acquired from https://climate.weather.gc.ca/ and curves with more than 10$\%$ missing observations were discarded in both the temporal and spatial setting. Remaining missing observations were imputed by using interpolation.

We fit factor models (PCA) and give comparisons to the basis function and the FCPA approach. \citet{ramsay09} have smoothed this type of data with 65 Fourier basis functions and a penalization term. We follow this route, but instead use a B-spline basis (also with 65 basis functions) and a roughness penalty of the form $\int (f''(x))^2dx$. The tuning parameter controlling the size of the penalization term is chosen with generalized cross validation techniques as in \citet{ramsay09}.

\subsection{Temporal Data: St. Margaret's Bay}\label{s:temporal}

In this section, we consider temporal data from St. Margaret's Bay in Nova Scotia, Canada. This weather station has a long history of recorded data, from which we will use a selection of $91$ yearly curves ranging from $1923$ to $2020$. The available curves in this dataset from $2000$ to $2020$ can be seen in Figure~\ref{fig:stmargdata}. We observe a typical seasonal shape with very cold winters and rather mild summers. The data is rather noisy and the goal is to separate the underlying signal from the unsystematic noise.

\begin{figure}[!ht]
\begin{center}
\includegraphics[width=12cm]{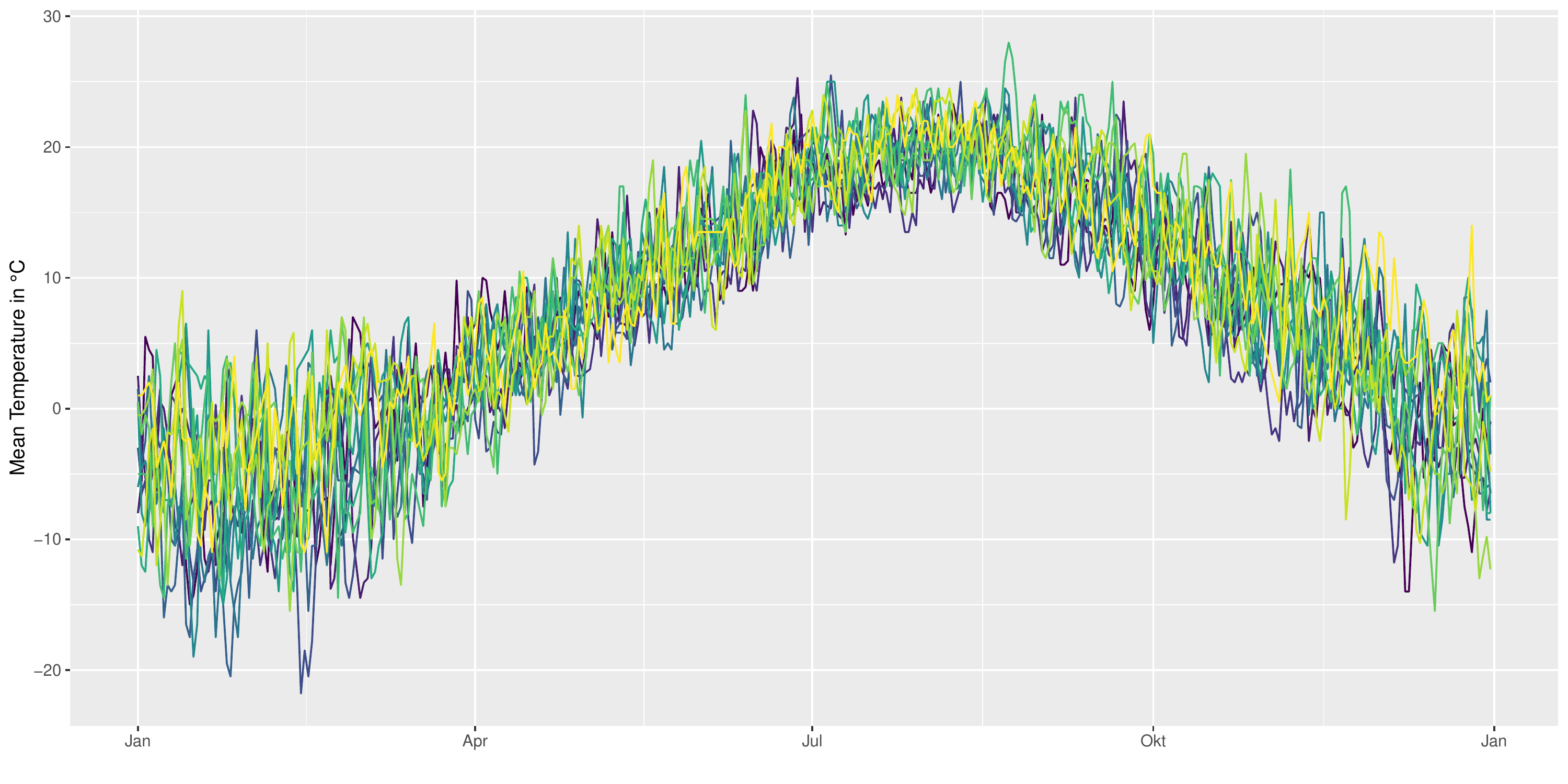}	
\caption{Yearly Temperature Data in St. Margaret's Bay, Nova Scotia, from 2000-2020.}
\label{fig:stmargdata}
\end{center}
\end{figure}

In order to apply the factor analysis approach, we first need to choose the factors for the model. As mentioned in Section~\ref{s:sim}, factor model estimation is robust to overestimation of $L$, yet rather volatile when it comes to underestimating $L$. We refer to Figure~\ref{fig:stmargLhat}, where we show the alternative Scree Plot as introduced in Section~\ref{s:L} on the left and the Classic Scree Plot on the right. For our alternative Scree Plot we have chosen a cutoff of $10\%$ and have thinned out the frequencies to a third. We can see in Figure~\ref{fig:stmargLhat} that the Scree, ED and BCV choices for the estimate of $L$ all agree on $\hat{L}=3$. The alternative Scree Plot is not particularly conclusive in this example. The values of the test statistic $\Lambda_\ell$ remain very large for all choices of $L$. This indicates that the residuals are not iid and our method for choosing $L$ is not applicable here. We hence choose $\hat{L}=3$ in accordance with the other methods. 

\begin{figure}[!ht]
\begin{center}
\includegraphics[width=7.5cm]{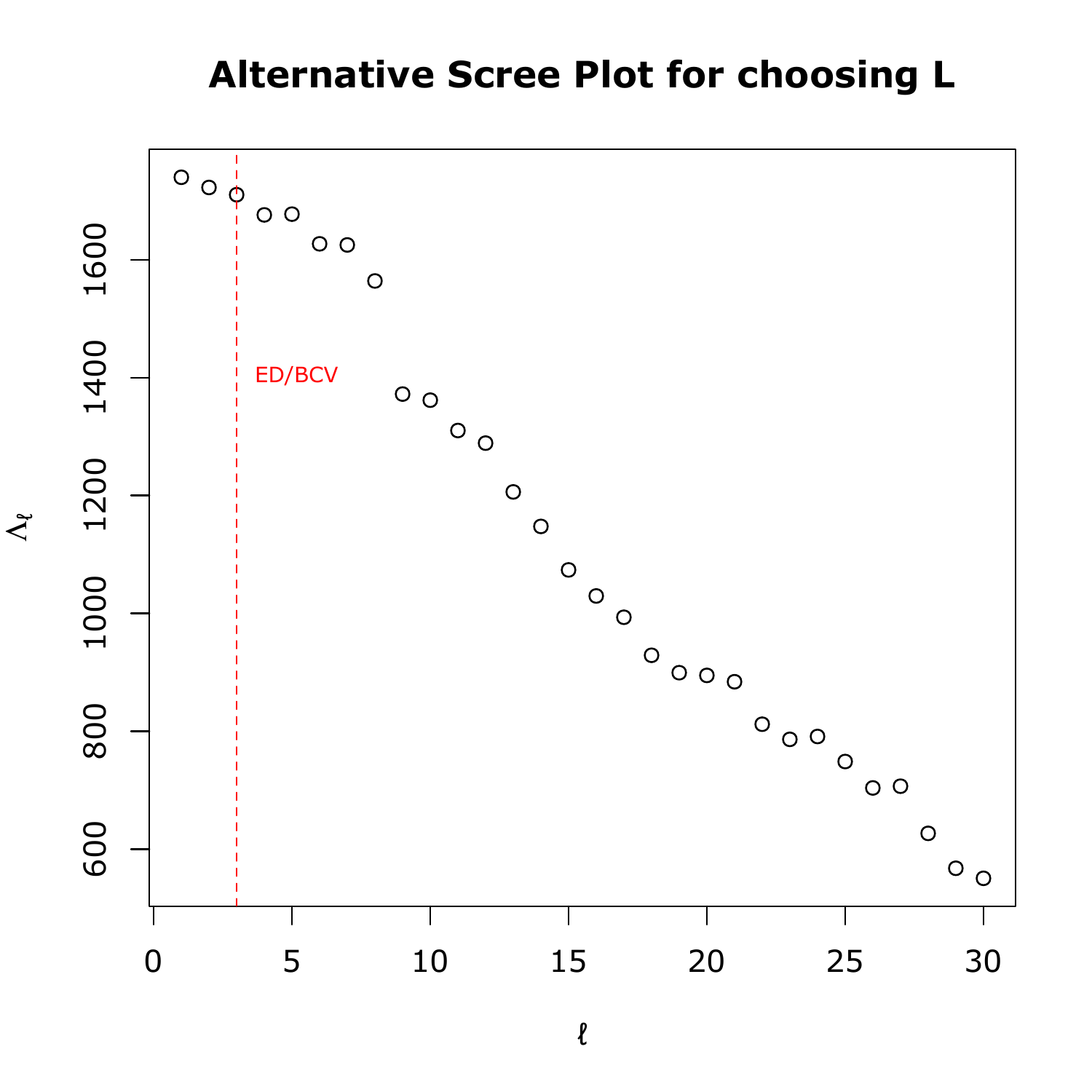}	
\includegraphics[width=7.5cm]{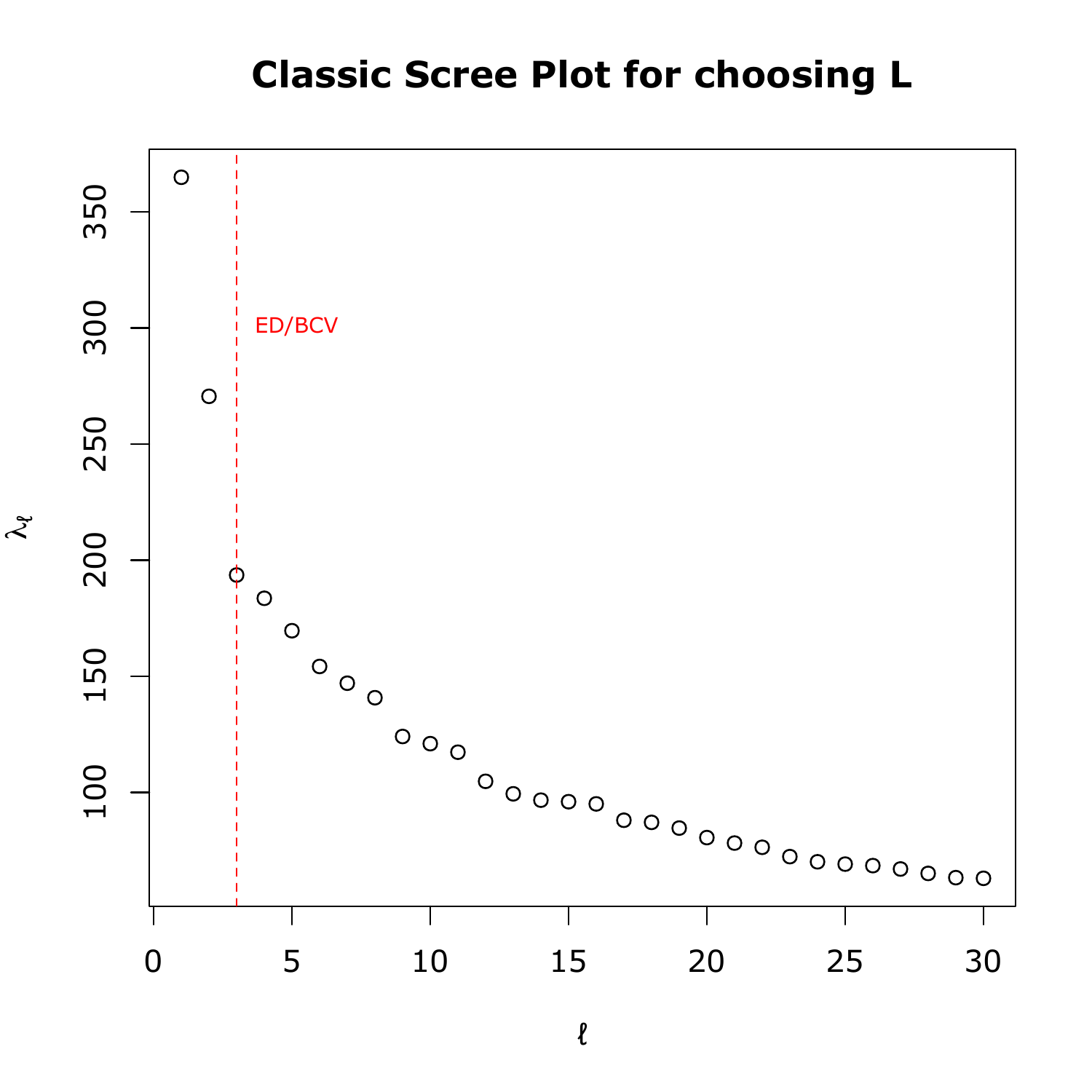}	
\caption{Alternative (left) and classic (right) Scree plots to estimate the number of factors for the temporal data example. The estimates $\hat{L}_{\text{ED}}$ and $\hat{L}_{\text{BCV}}$ are indicated in red.}
\label{fig:stmargLhat}
\end{center}
\end{figure}

To keep this analysis concise, we consider for the factor model approach only the classic principal components (PCA) and for the splines approach only the penalized B-Splines (PB). Using maximum-likelihood instead of PCA or Fourier functions instead of B-splines gives almost identical results.  While our factor model estimate looks quite rough it is much less wiggly than what we get with FPC and PB. Our approach leads to the largest variance in the residuals. See Figure~\ref{fig:stmargfit}. This is in entire contrast to the spatial data discussed in the next section.
\begin{figure}[!ht]
\begin{center}
\includegraphics[width=11cm]{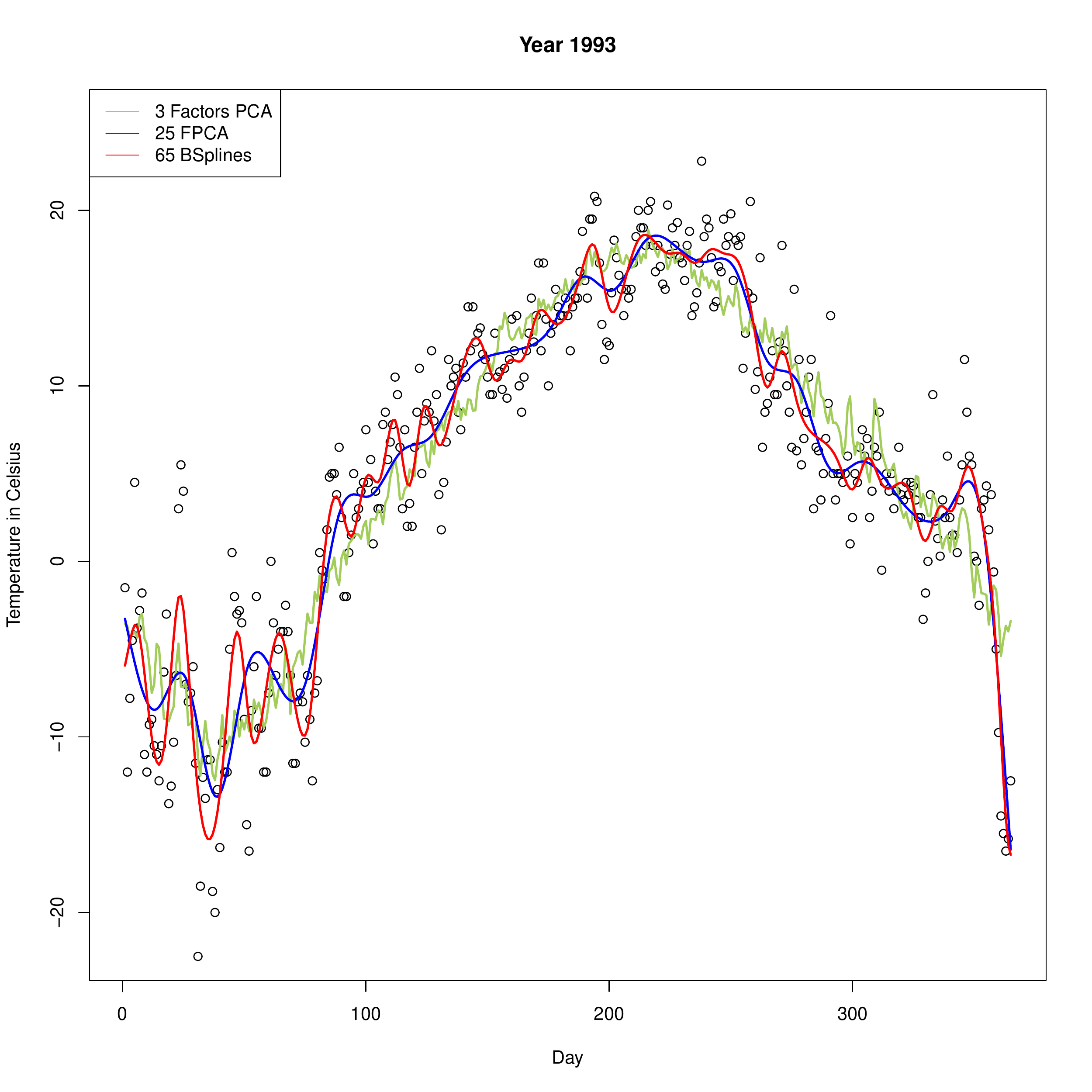}	
\caption{St. Margaret's Bay temperature curve fits for the year 1993. Dots represent noisy data.}
\label{fig:stmargfit}
\end{center}
\end{figure}

From Figures~\ref{fig:stmargintro} and \ref{fig:stmargfitcovs} we see some interesting structure in the residual covariance when we work with PB and FPC. The heatmaps of the covariance and correlation matrices show alternating negative and positive bands parallel to the diagonal, which translate into oscillating autocovariances.

\begin{figure}[!ht]
\begin{center}
\includegraphics[width=7cm]{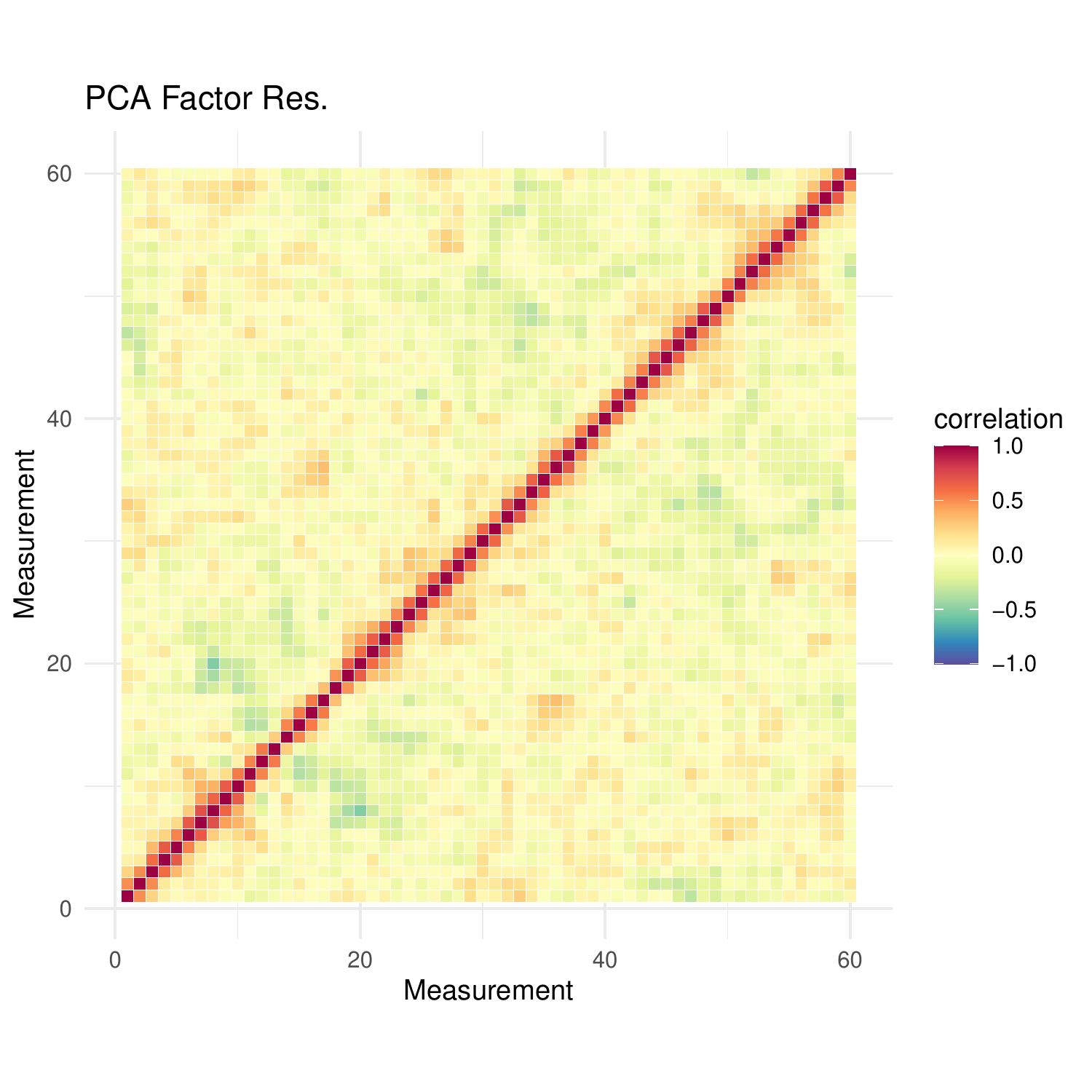}	
\includegraphics[width=7cm]{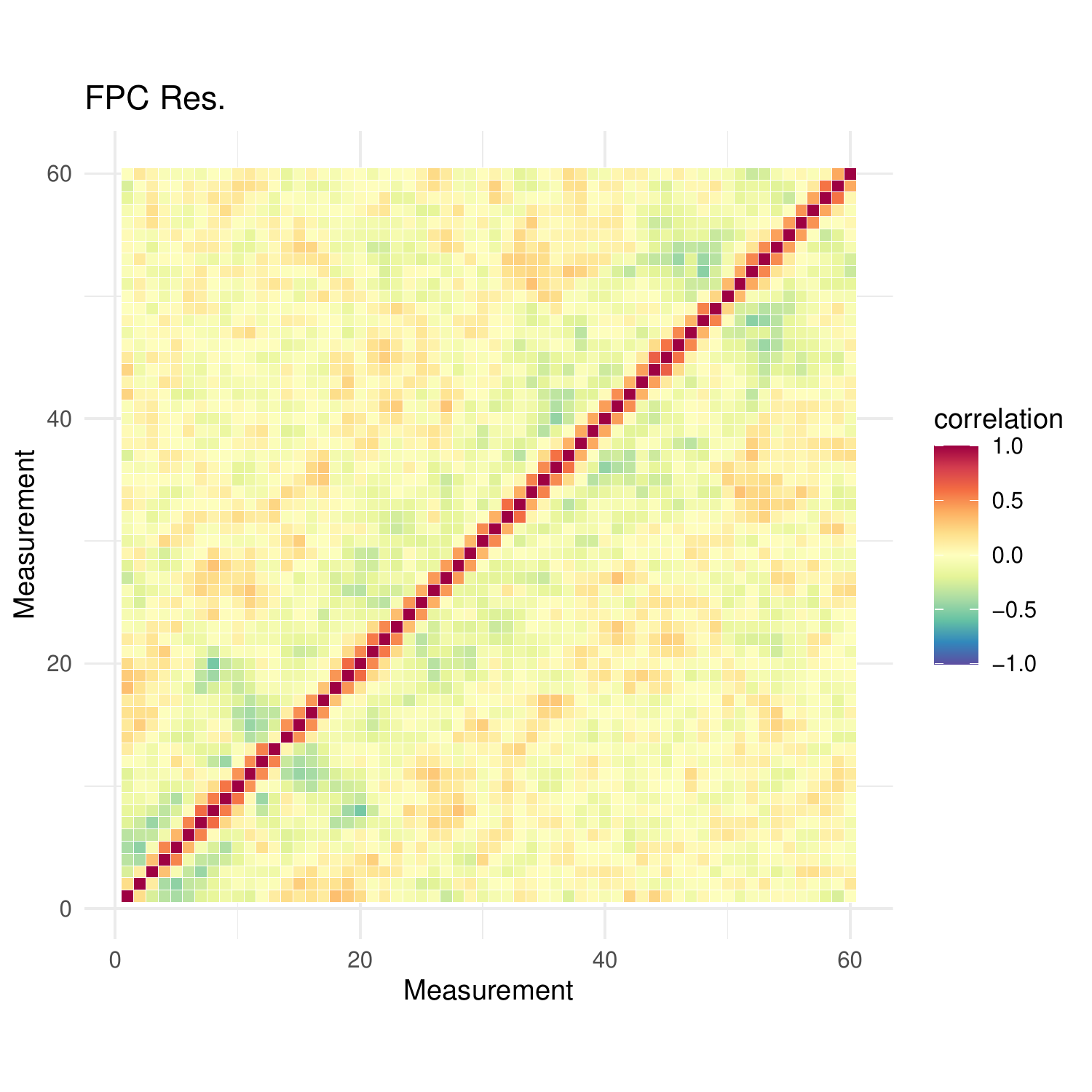}	
\includegraphics[width=6.5cm]{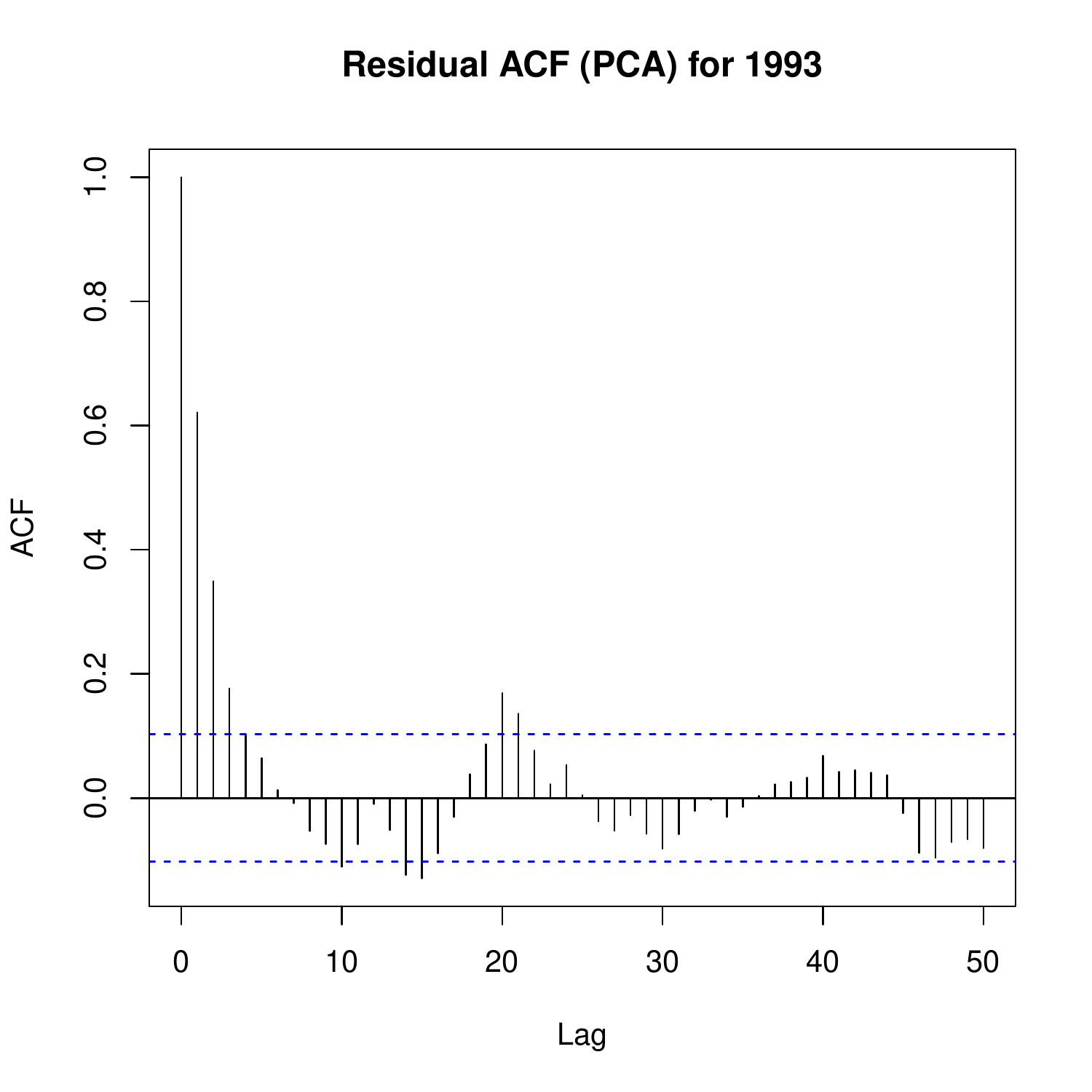}	
\includegraphics[width=6.5cm]{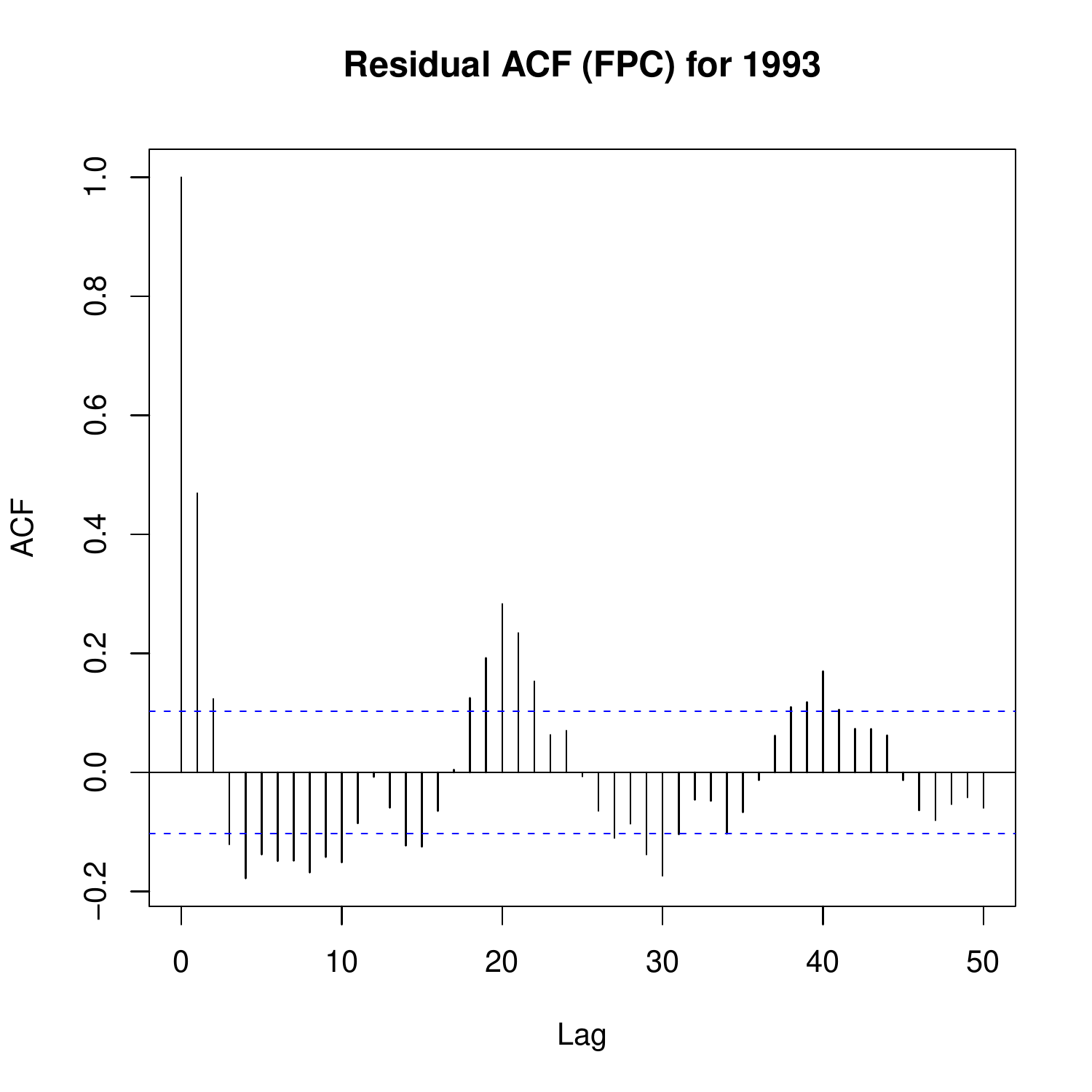}	
\caption{Heat maps of the empirical residual correlation matrices (restricted to the first two months a year) for the PCA (left) and FPC (right) approach for the St. Margaret's Bay data and associated estimated Residual ACF for the year 1993 for the PCA (bottom left) and FPC (bottom right) approach.
}
\label{fig:stmargfitcovs}
\end{center}
\end{figure}

Although the residuals clearly are not iid in this example, we perform our test for independent errors as discussed in Section~\ref{s:test} for illustrative purposes. The values of the test statistics are given in Table~\ref{tab:teststmarg}. The tests were performed with a $10\%$ cutoff and using every third frequency. In all three cases the values of $\hat\Lambda_{\text{inf}}$ are way beyond the critical values from a standard normal distribution. 

The fact that we are far away from iid errors can also be seen from the averaged periodogram ordinates $\frac{1}{T}\sum_{t=1}^T I_{\hat U}(\boldsymbol{\theta})$ which we compute for errors obtained from the three investigated methods (see Figure~\ref{fig:stmargperiodogramm}). These averages are estimators for the respective spectral densities of the errors. Observe that we have a strong bias towards zero at frequencies close to 0 for all methods, which indicates that they filter out low frequencies. This phenomenon is most pronounced for the penalized B-splines. Otherwise the shape  of the spectral density estimator is reminiscent of autoregressive errors. This seems like a reasonable assumption for a dataset of this type and allows for mild temporal dependence. 

For illustrative purposes, we fit AR(2) processes to the residual vector components. For the factor model the resulting fits show that in general, $\hat\phi_1$ lies at around $0.6$ and $\hat\phi_2$ is typically at around $-0.15$, see Figure~\ref{fig:stmargar2coef} for the associated boxplots. In Figure~\ref{fig:stmargperiodogramm}, we have plotted the corresponding spectral densities of ten such estimated processes AR(2) processes.
A periodic form of the acf as we have seen it now mainly for approaches PB and FPC would amount to AR(2) processes, where the roots of the characteristic polynomial are complex (see e.g.\ \citet{brockwell:davis:1991}). Investigating this line of thought we find that out of the $T=91$ residual time series, $56$ of the associated characteristic polynomials have complex roots when we use the factor model approach. For the approaches PB and FPC, all of the associated characteristic polynomials have complex roots. 

\begin{table}[H]
\centering
\begin{tabular}{r|c|c|c}
  & PCA & PB & FPC \\ 
  \hline
$\hat\Lambda_{\text{inf}}$ & 1710.62 & 840.04 & 1662.23 \\ 
  \end{tabular}
\caption{Teststatistic $\hat\Lambda_{\text{inf}}$ for the residuals for St. Margaret's temperature curves.} 
\label{tab:teststmarg}
\end{table}

\begin{figure}[!ht]
\begin{center}
\includegraphics[width=4.5cm]{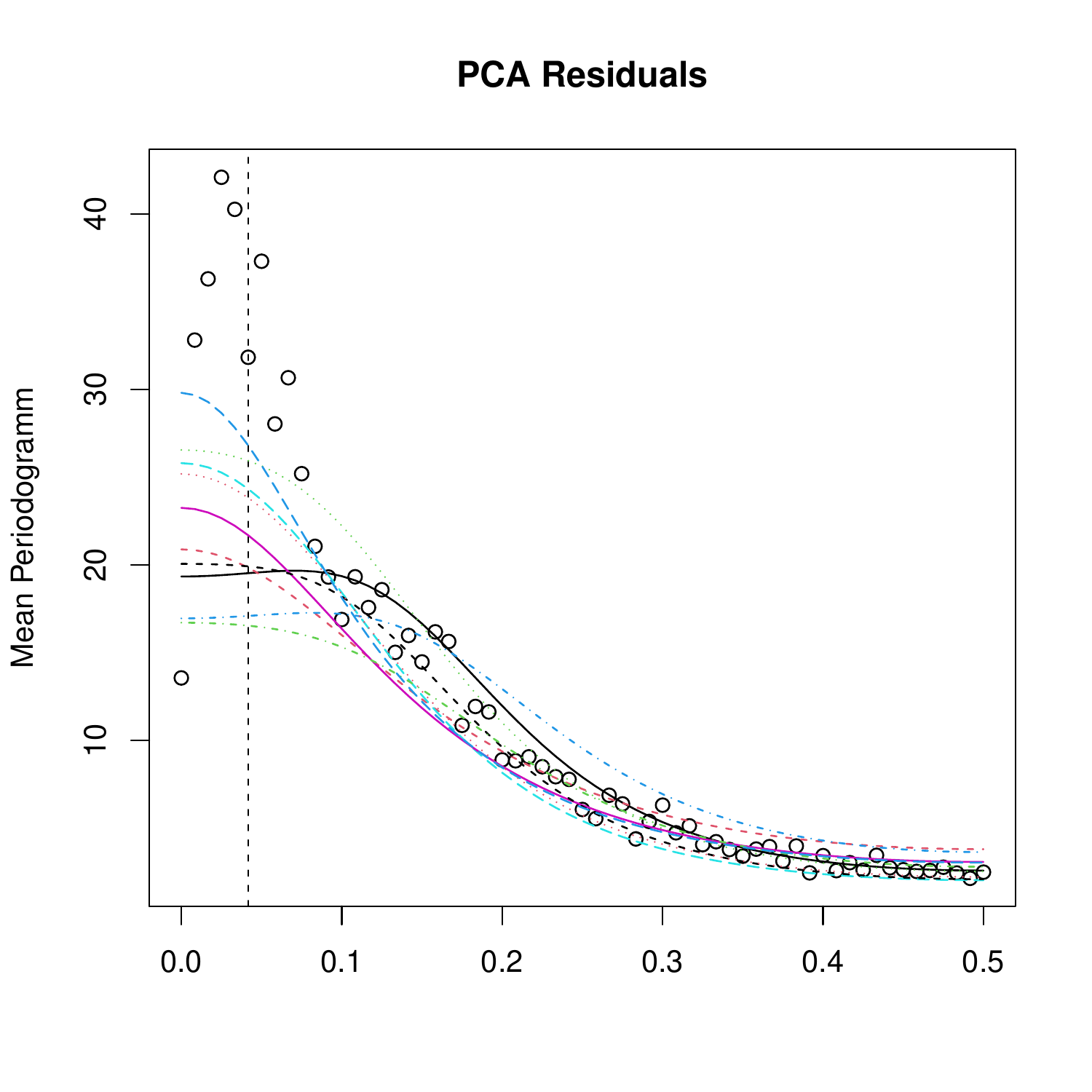}	
\includegraphics[width=4.5cm]{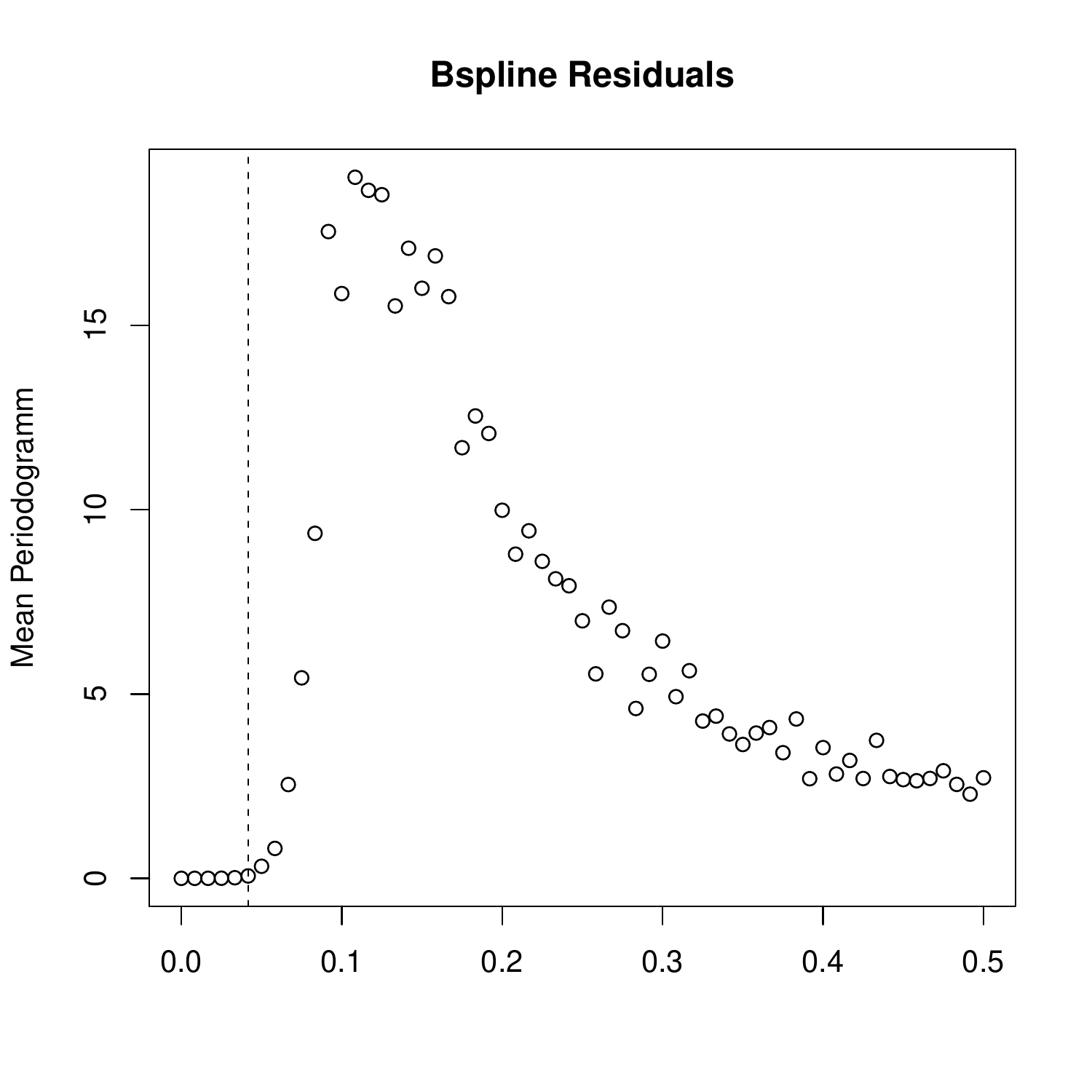}
\includegraphics[width=4.5cm]{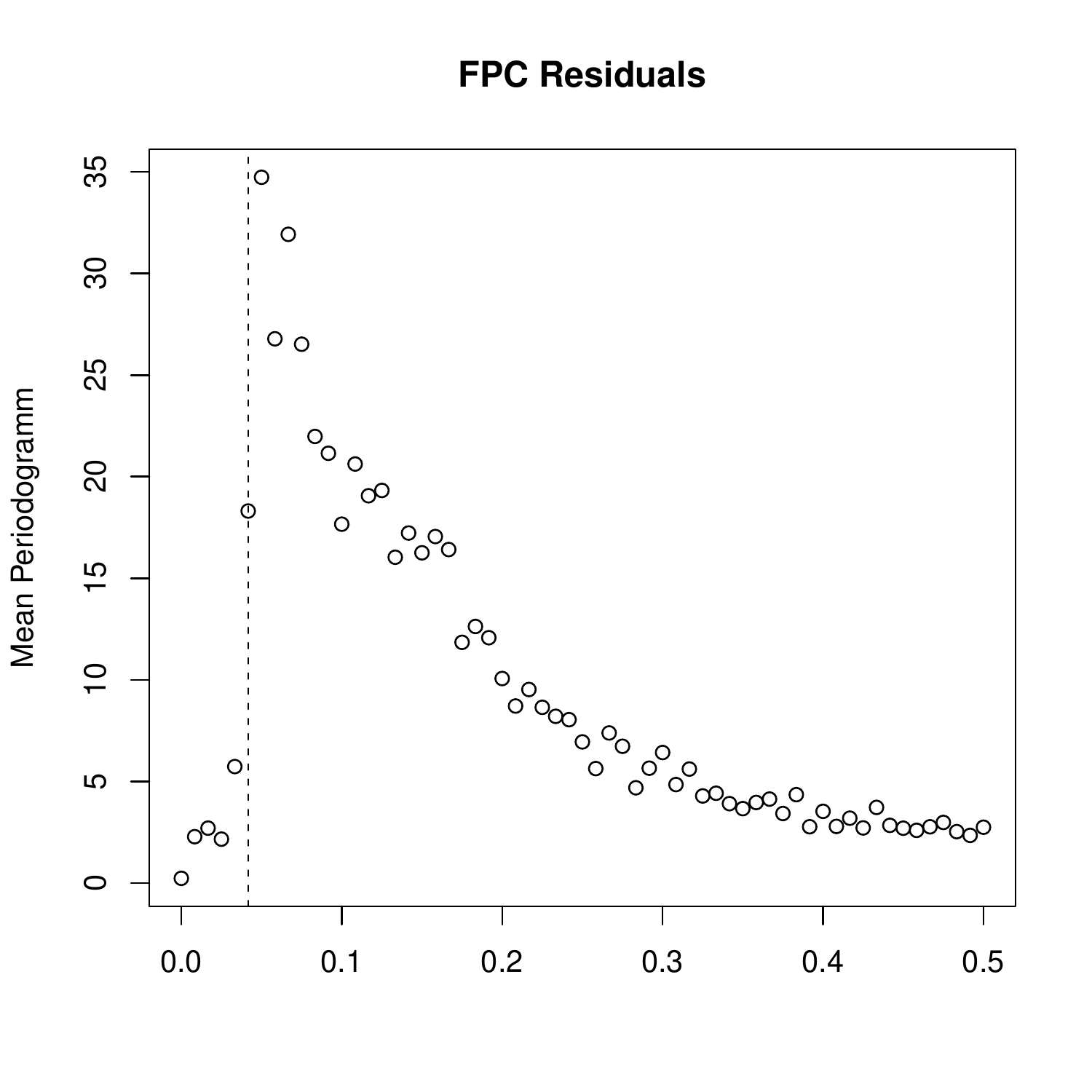}
\caption{Mean Periodogramm $\xi=T^{-1} \sum_{t=1}^T I_{\hat{U}_t}(\bm\theta)$ for the PCA (left), PB (middle) and FPCA (right) residuals in St. Margaret's Bay Data. Vertical dotted line indicates the $10\%$ cutoff. Dotted and colorful lines on the left indicate the estimated spectral density of estimated AR(2) processes for $10$ factor model residual curves.}
\label{fig:stmargperiodogramm}
\end{center}
\end{figure}

\begin{figure}[!ht]
\begin{center}
\includegraphics[width=4.5cm]{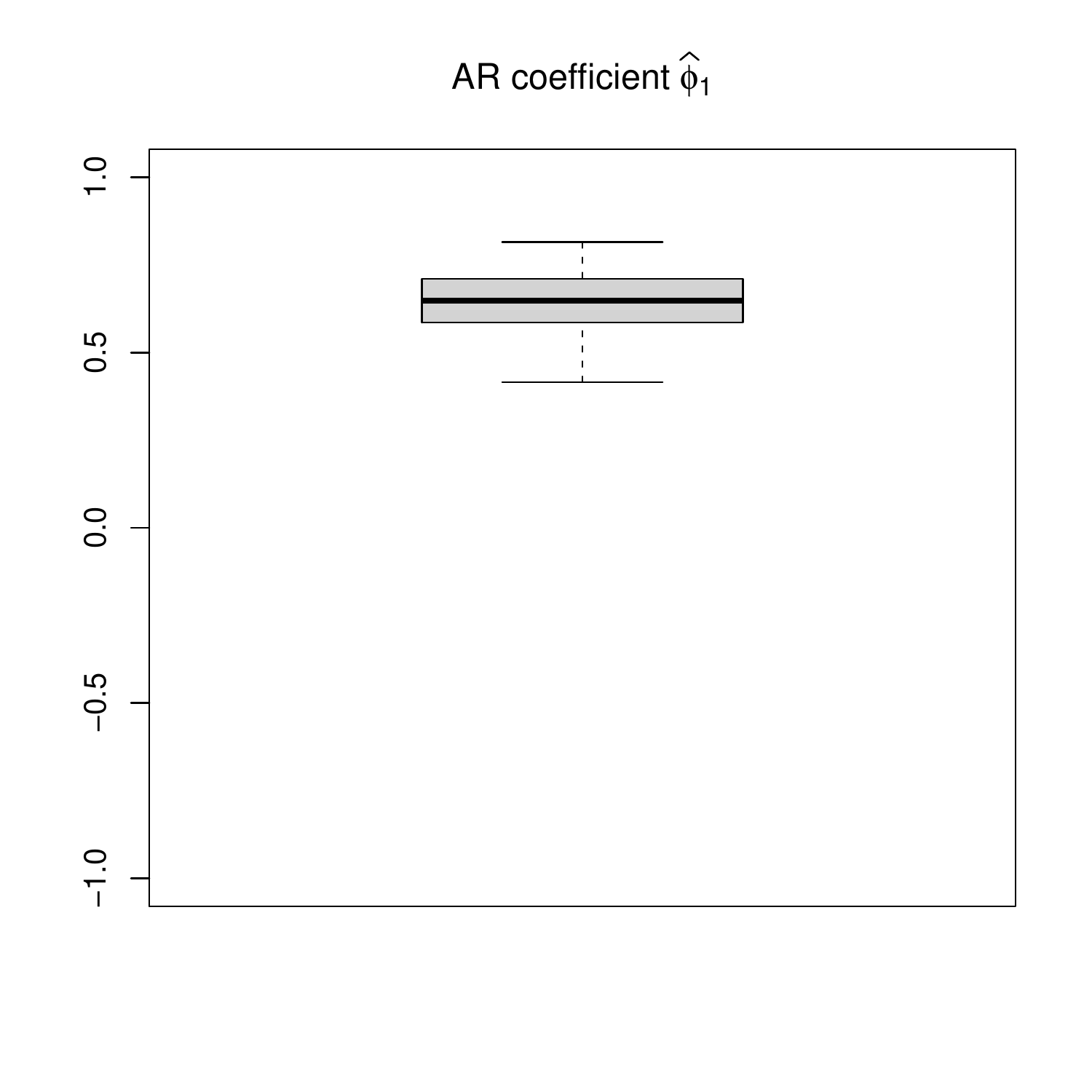}	
\includegraphics[width=4.5cm]{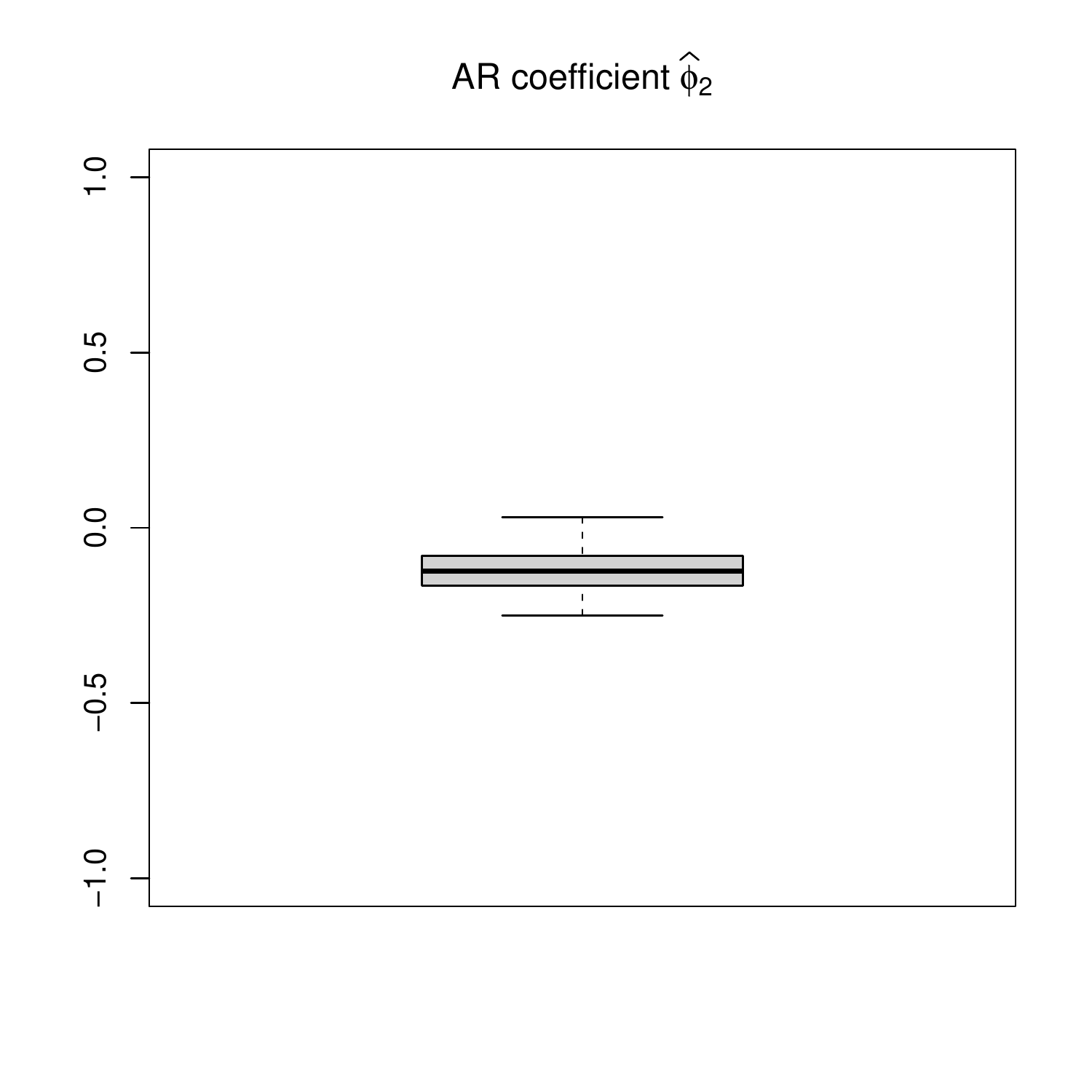}
\includegraphics[width=4.5cm]{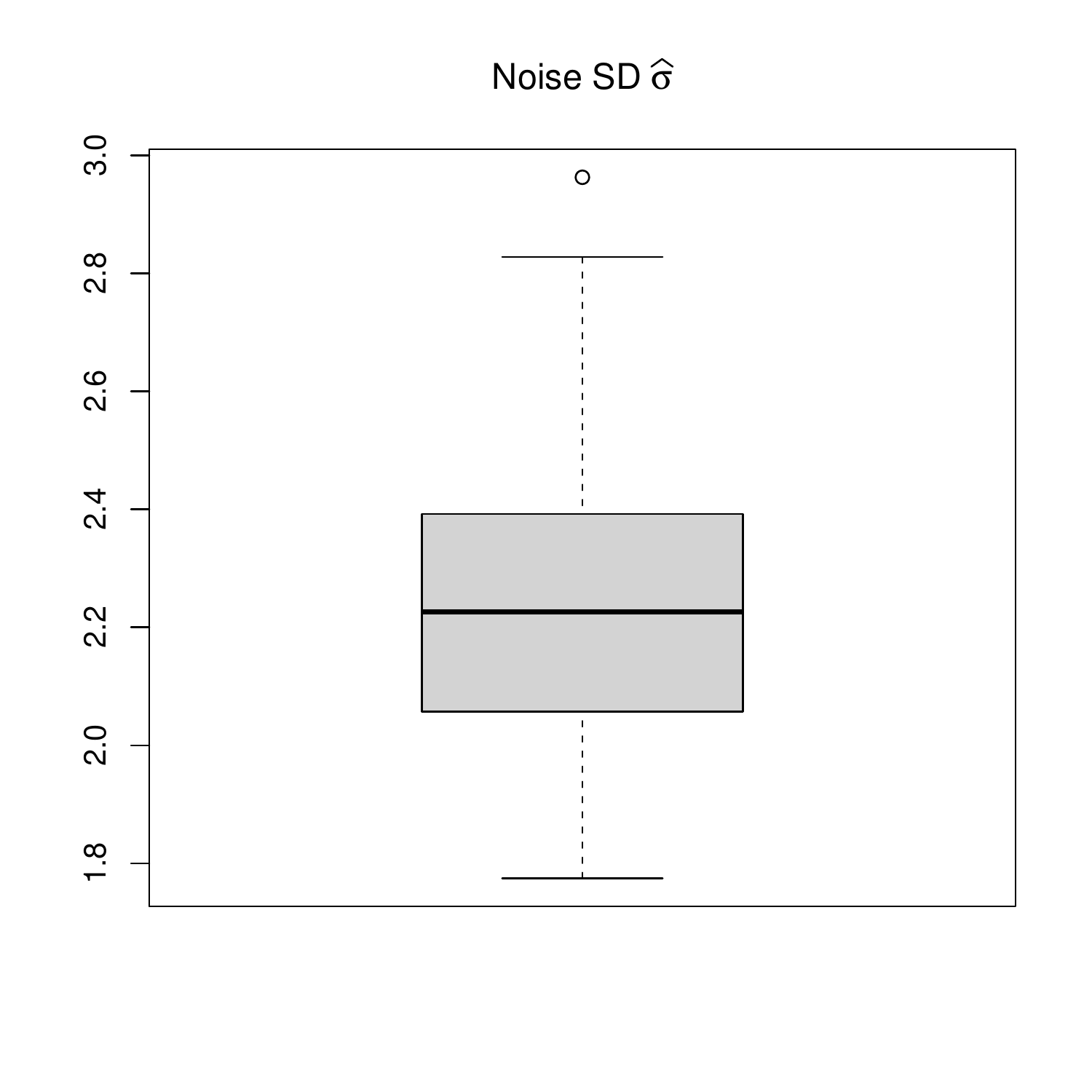}
\caption{Boxplots of the estimated coefficients and noise standard variance of the residual AR(2) processes for the factor model (PCA).}
\label{fig:stmargar2coef}
\end{center}
\end{figure}

\subsection{Spatial Data: Canadian Weather Stations}\label{s:spatial}

Now we consider a spatial setting, where each of the annual curves corresponds to a weather station. To this end, we have compiled the data from weather stations in the provinces of Quebec and Ontario with daily mean temperature measurements available in 2013. After imputing scarcely scattered missing values and removing stations with too much missing data, we have $T=213$ curves left. For illustrative purposes, we show in Figure~\ref{fig:canadadata} the curves associated to the first $100$ stations in alphabetical order. While the same general structure as in the temporal setup can be observed, we expect here a different residual behavior. The idiosyncratic components now describe a station-specific error. Since periods of too warm or too cold temperatures are likely to occur across several stations, we expect a close co-movement resulting in much smaller idiosyncratic noise terms. 

\begin{figure}[!ht]
\begin{center}
\includegraphics[width=12cm]{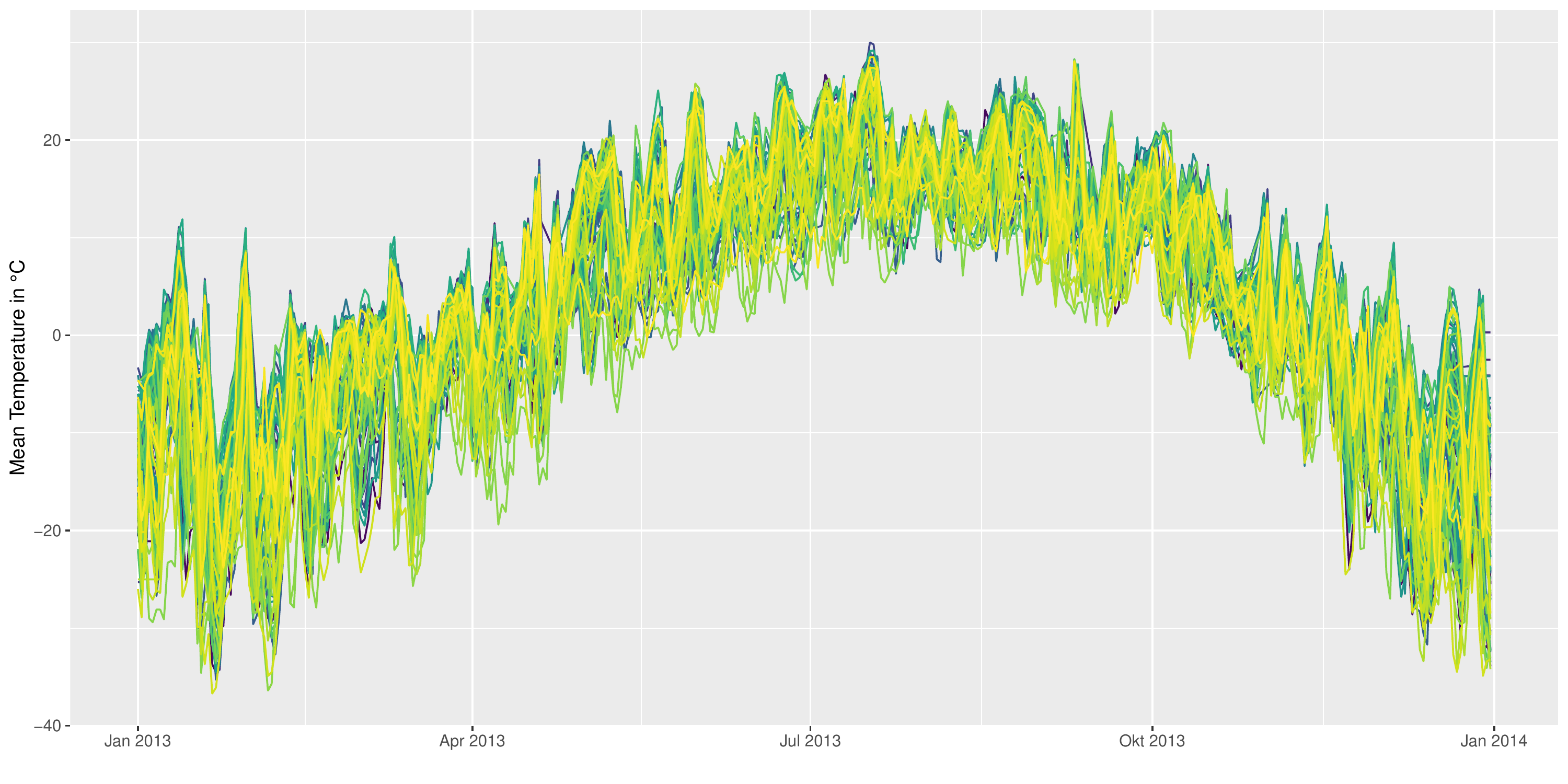}	
\caption{Annual Temperature Data in 2013 for 100 Stations in Quebec and Ontario, Canada.}
\label{fig:canadadata}
\end{center}
\end{figure}

We begin by choosing the number of factors. The Scree plot (see Figure~\ref{fig:canadaLhat}) indicates  $\hat{L}=6$, which coincides with the choice by the ED criterion, whereas BCV sets $\hat L=27$. The latter is close to $\hat L =30$, which we deduce from our alternative Scree plot and which we finally select for this data. 

\begin{figure}[!ht]
\begin{center}
\includegraphics[width=7.5cm]{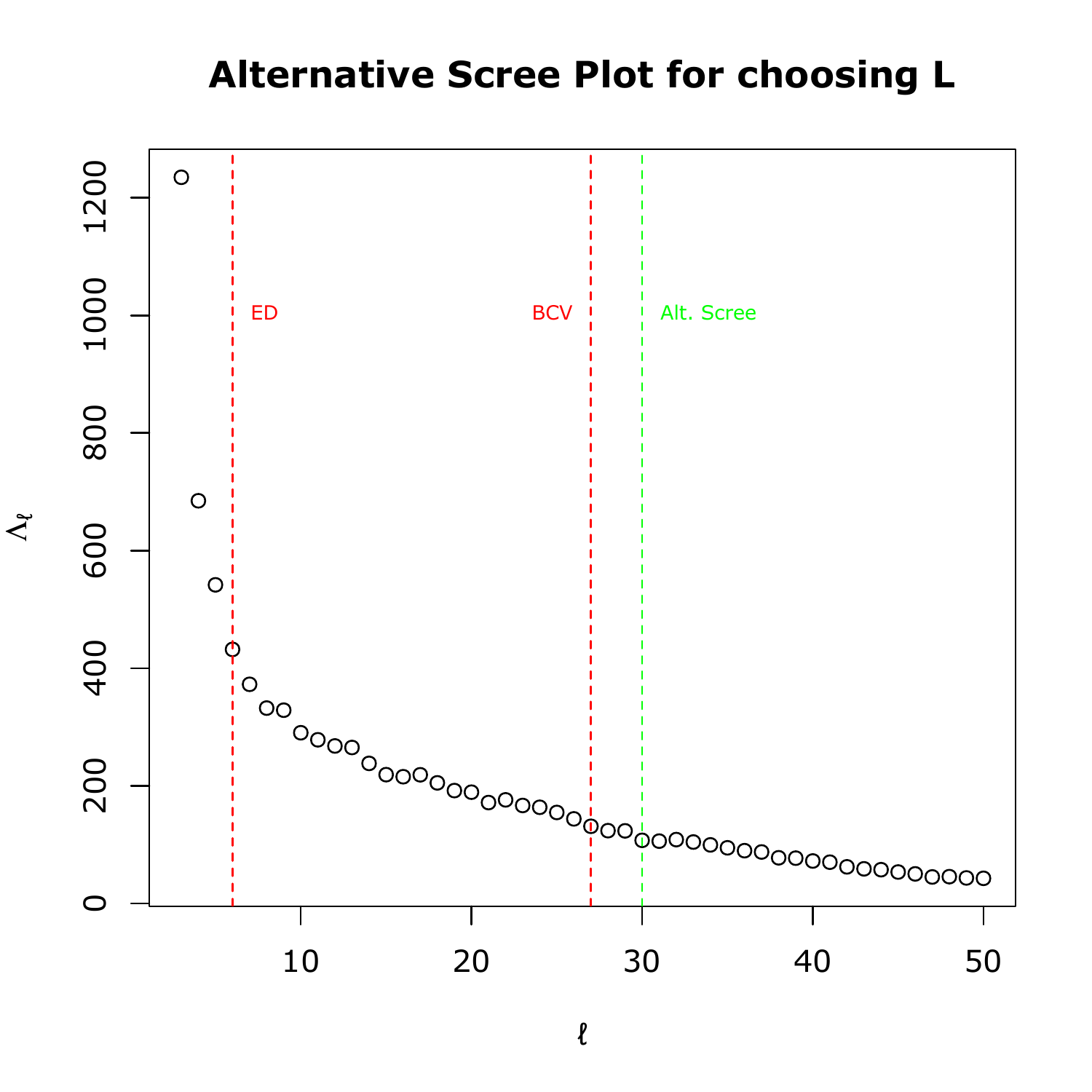}	
\includegraphics[width=7.5cm]{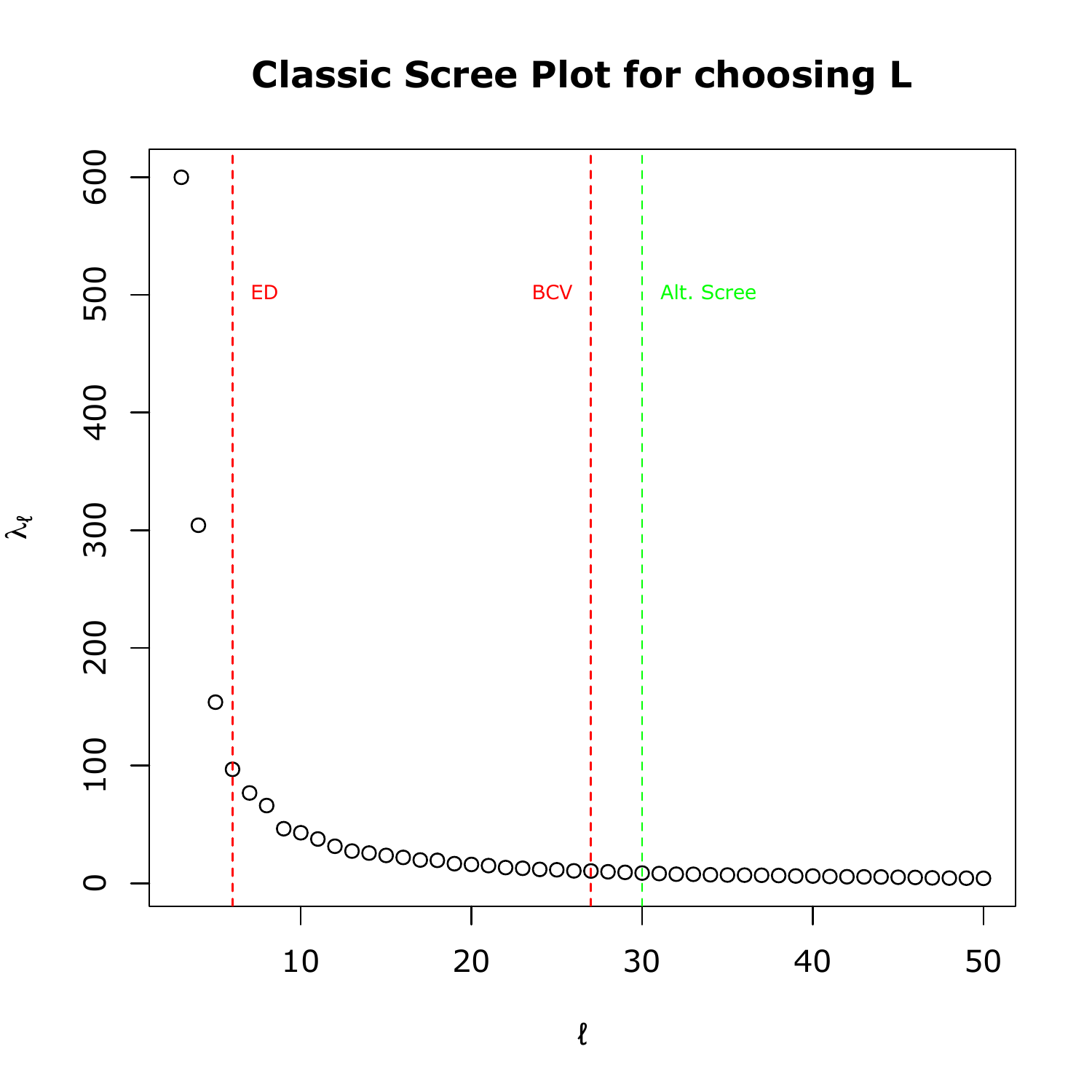}	
\caption{Alternative (left) and classic (right) Scree plots to estimate the number of factors for the spatial data example. The estimates $\hat{L}_{\text{ED}}$ and $\hat{L}_{\text{BCV}}$ are indicated in red.}
\label{fig:canadaLhat}
\end{center}
\end{figure}

In order to illustrate the fits of these models more clearly, we only show the first two months of the year for two specific stations in Figure~\ref{fig:canadafit}. The factor model follows the raw data very closely. The correlation matrices of FPC and PB residuals show strong and unsystematic correlations, indicating that the signal has not been accurately extracted (see Figure~\ref{fig:canadacor}). For the factor model, the empirical residual correlation matrix appears diagonal. Still, our independence test for the residuals rejects in all cases very clearly, see Table~\ref{tab:testcanada}. Observing the averaged periodogram ordinates we see that the factor model approach yields a pattern reminiscent of an autoregressive process (Figure~\ref{fig:canadaacf}). Analogously to Section~\ref{s:temporal}, we have estimated AR(2) processes for each residual curve and show the associated spectral densities for ten stations in Figure~\ref{fig:canadaacf}. In this instance, the results are not nearly as well-behaved as in the previous section. This stems from the fact that we are now considering different stations rather than different years for the same station. The residual structure may vary wildly in between the different stations, as one can see in Figure~\ref{fig:canadaacf}. Still, it is reasonable to assume that there is some mild temporal dependence between consecutive observations, which makes autoregressive error processes rather plausible.

\begin{figure}[!ht]
\begin{center}
\includegraphics[width=12cm]{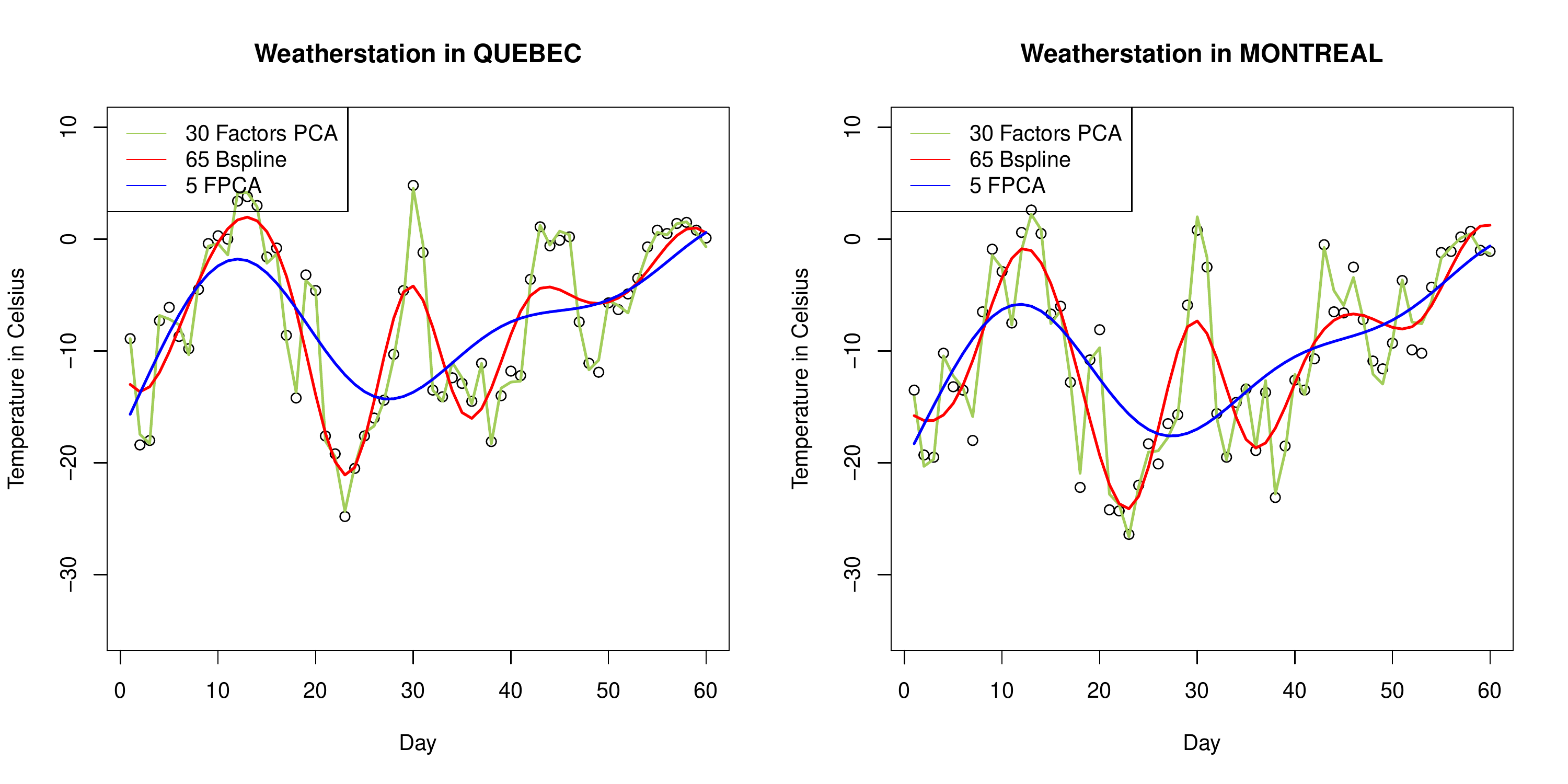}
\caption{Annual temperature curve model fit in 2013 for the stations \textit{Quebec} and \textit{Montreal}. Dots represent original observations.}
\label{fig:canadafit}
\end{center}
\end{figure}

\begin{figure}[!ht]
\begin{center}
\includegraphics[width=4.5cm]{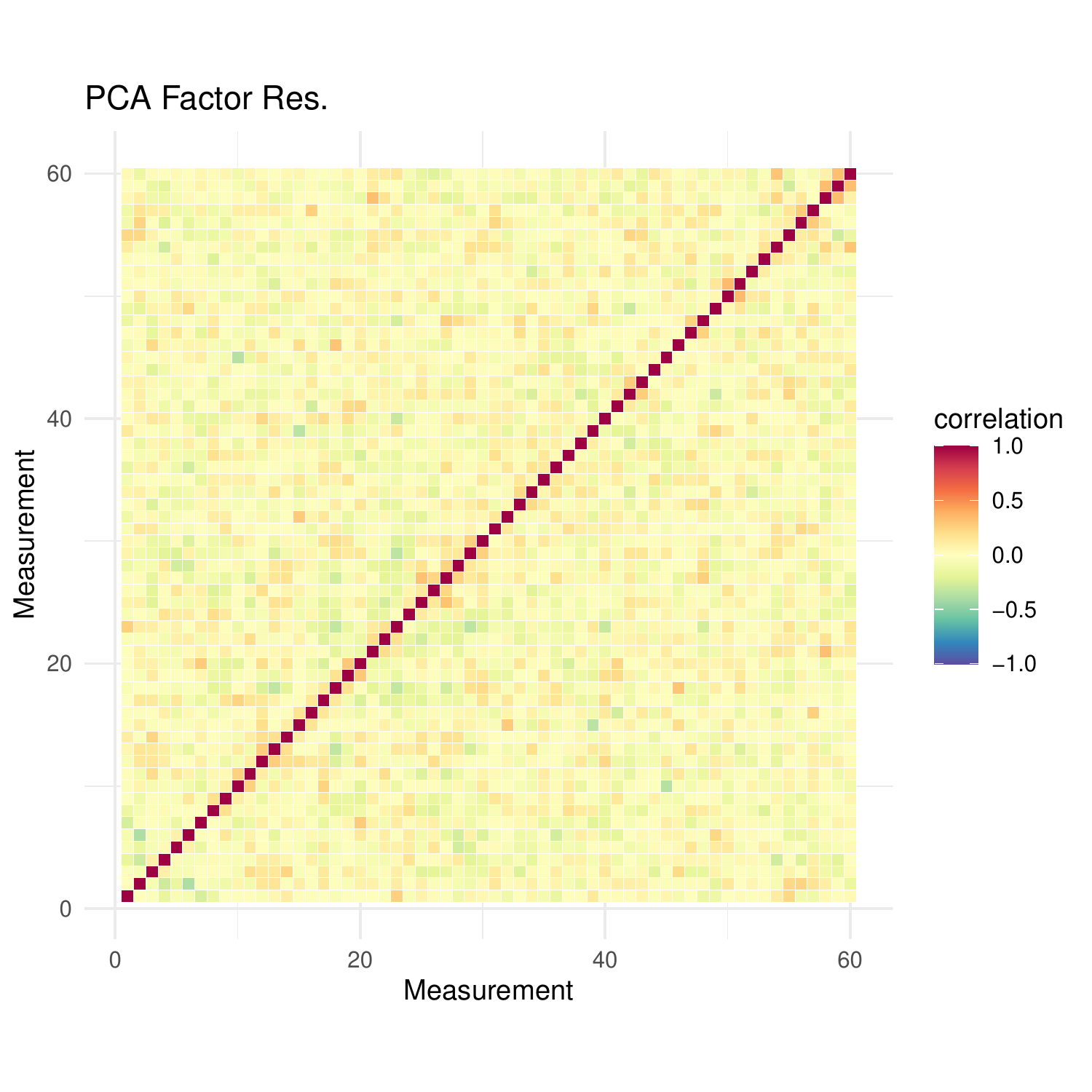}	
\includegraphics[width=4.5cm]{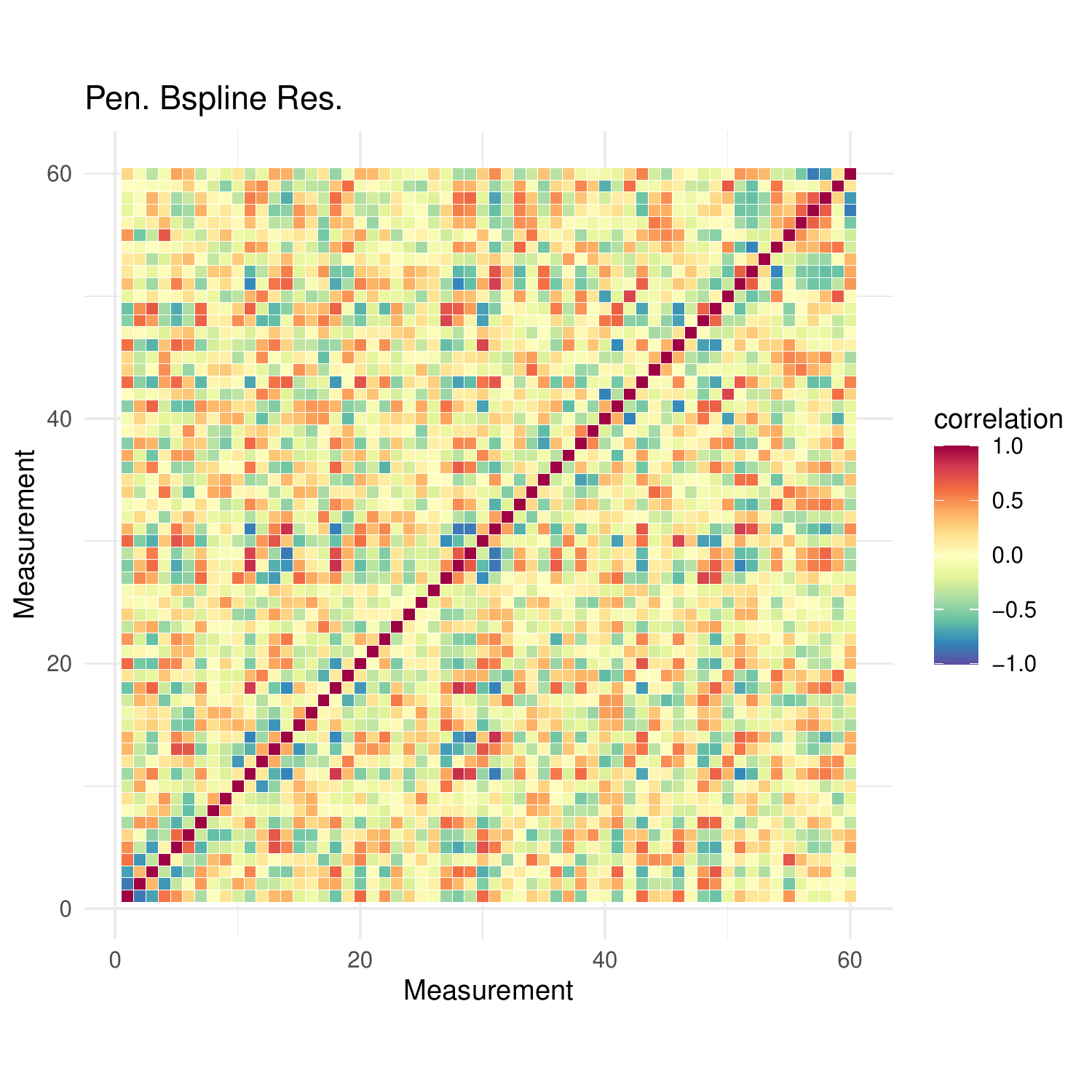}
\includegraphics[width=4.5cm]{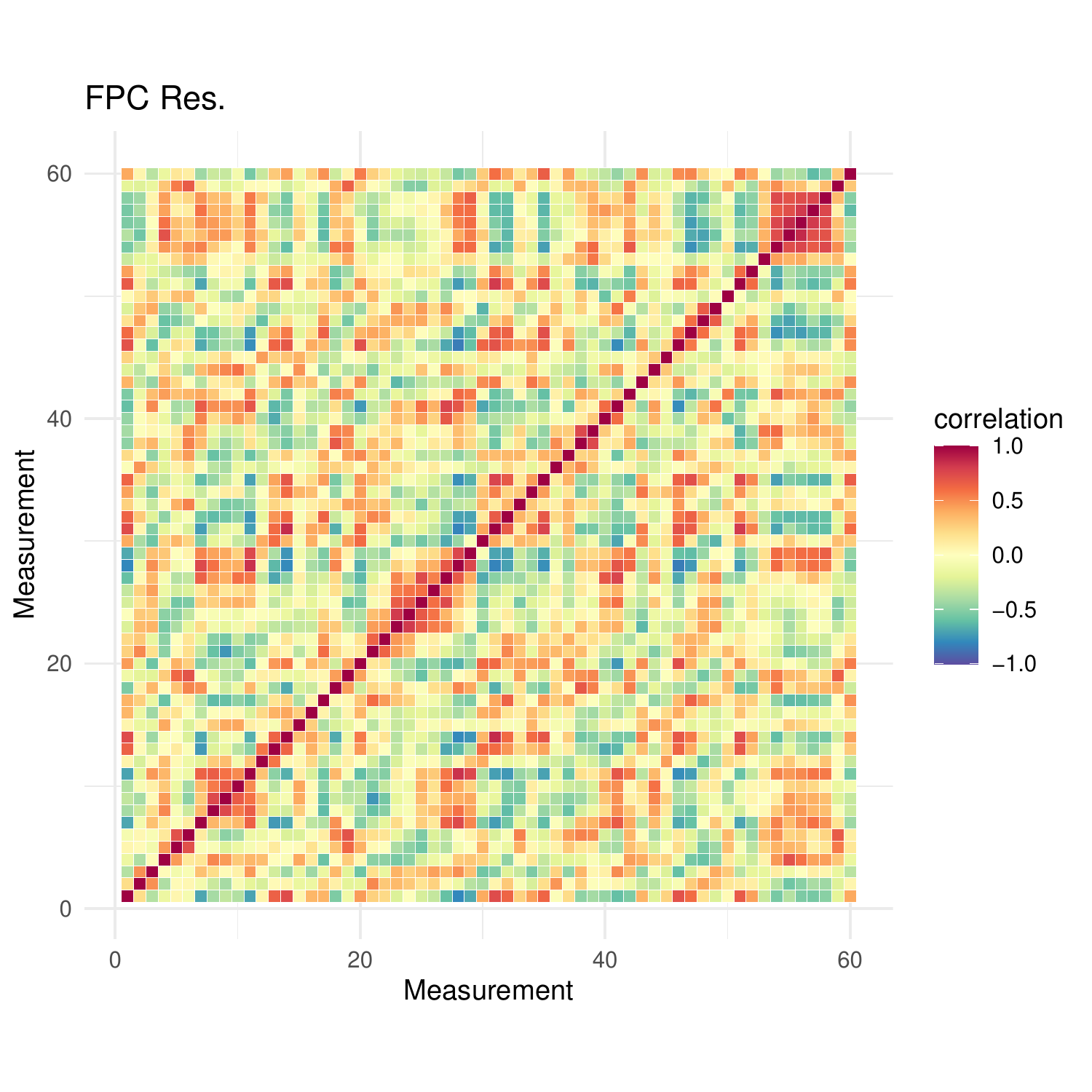}
\caption{Heat maps of the empirical correlation matrices for the PCA (left), PB (middle) and FPCA (right) approach for the spatial Canadian Weather Station Data.}
\label{fig:canadacor}
\end{center}
\end{figure}

\begin{figure}[!ht]
\begin{center}
\includegraphics[width=7.5cm]{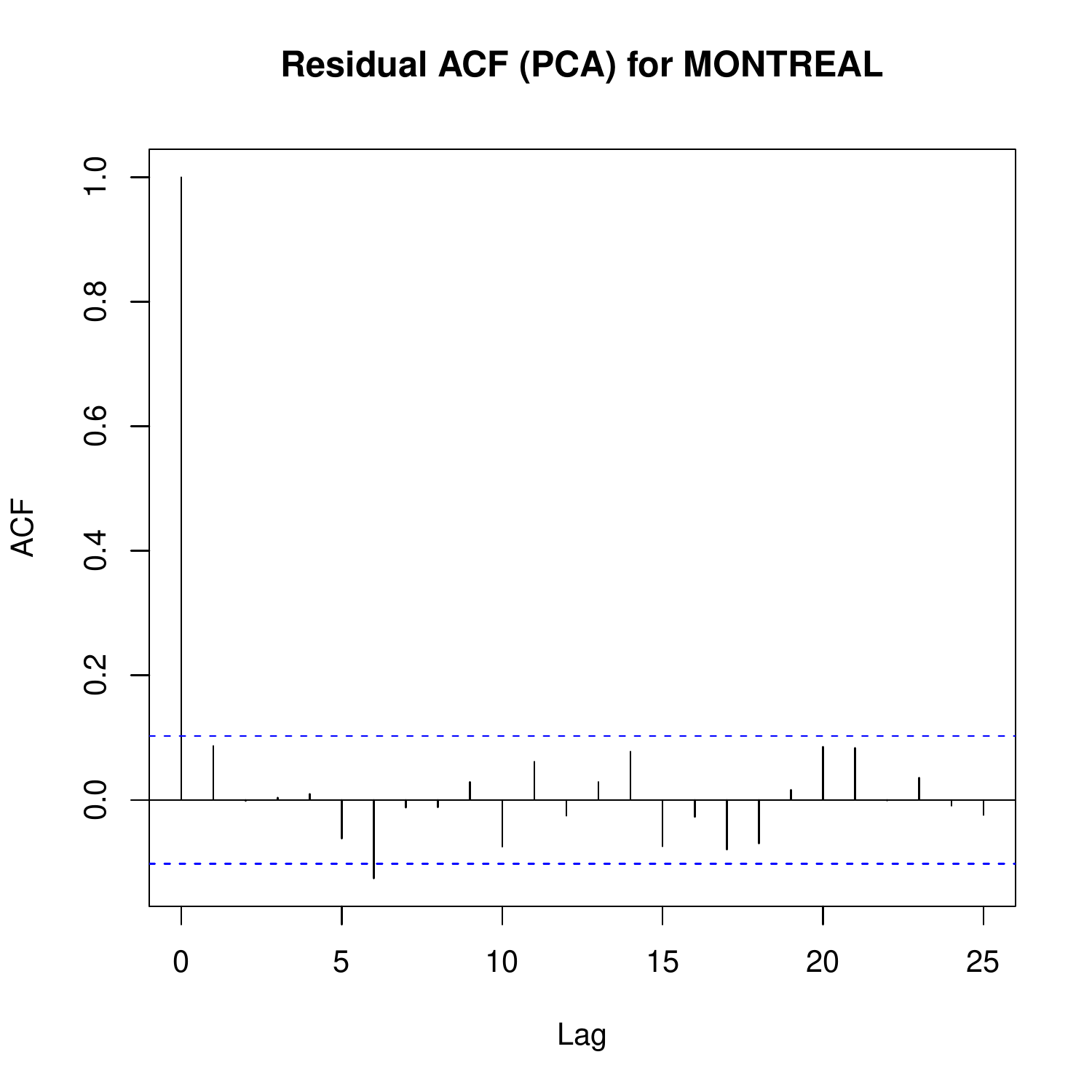}	
\includegraphics[width=7.5cm]{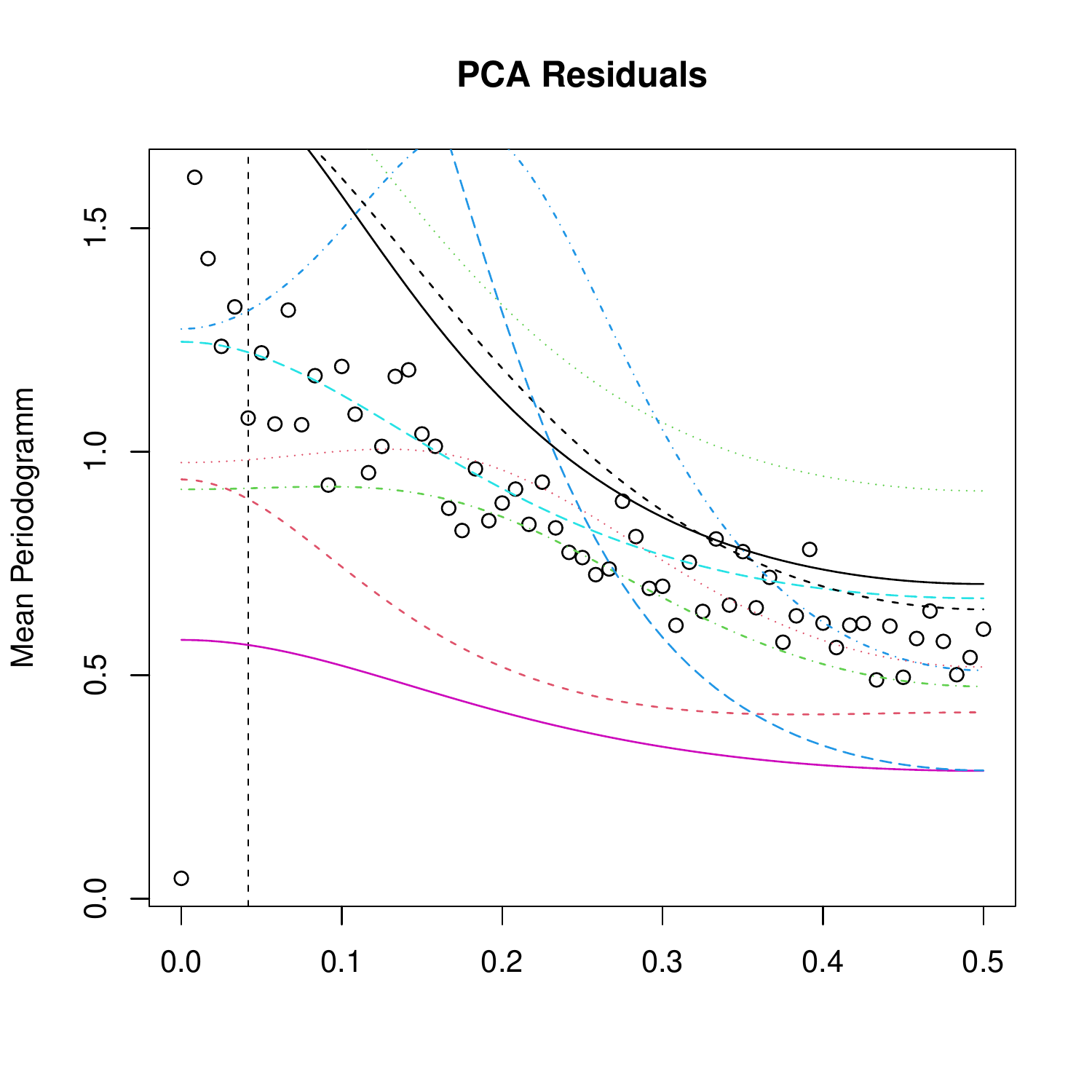}	
\caption{Autocovariance of the residual vectors in the station \textit{Montreal} for the factor model approach (left), and averaged periodograms $\xi=T^{-1} \sum_{t=1}^T I_{\hat{U}_t}(\bm\theta)$ (right) from the spatial temperature data. The vertical dotted line indicates the $10\%$ cutoff. Curves on the right indicate the estimated spectral density of estimated AR(2) processes for $10$ factor model residual curves.
}
\label{fig:canadaacf}
\end{center}
\end{figure}

\begin{table}[H]
\centering
\begin{tabular}{r|c|c|c}
  & PCA & PB & FPC \\ 
  \hline
$\hat\Lambda_{\text{inf}}$ & 107.26 & 12306.19 & 51776.87 \\ 
  \end{tabular}
\caption{Teststatistic for the independent noise test on the residuals of the spatial Canadian Weather Data.} 
\label{tab:testcanada}
\end{table}

The two examples illustrate the sophistication of factor models particularly well, as they truly manage to distinguish between systematic signal and noise in a very efficient way.

\section{Conclusion }
\label{s:outlook}
In this paper we give a multivariate perspective to the modelling of discretely observed functional data. We outline that such data follow some approximate factor models which are playing an important role in macroeconomics. This perspective yields ready to use methods to estimate the latent signal without requiring smoothness of the curves. We show that this approach works extremely well on simulated data and leads to interesting results on real data. Moreover, this paper offers some tools for analysing the model residuals. Typically those are assumed to be iid, but very often no residual analysis is done in order to justify this strong assumption. A theoretical foundation of the proposed estimation method is provided in our companion paper,  \citet{hormann:jammoul:2021}. 
\section*{Acknowledgement}

We thank Jeff Goldsmith and Sonja Greven for a very helpful discussion on the  \texttt{refund} package which we used to implement the FPCA method.

\bibliography{paper}
\appendix 

\section{Appendix}\label{s:appendix}
\subsection{Technical assumptions}\label{s:assumptions}

\begin{assumption}\label{a:noise}
The noise process $(U_t)$ is i.i.d.\ zero mean and independent of the signals~$(X_t)$. The processes $(U_{ti}\colon 1\leq i\leq p)$ are stationary and Gaussian with covariance function $\gamma^U(h) = \text{Cov}(U_{t,(i+h)},U_{ti})$, such that
$
\sum_{h\in\mathbb{Z}}|\gamma^U(h)|\leq C_U<\infty.
$
\end{assumption}

\begin{assumption}\label{a:signal}
(a) The process $(X_t\colon t\geq 1)$ is zero mean and $L^4$-$m$-approximable. (b) The curves $X_t=(X_t(s)\colon s\in [0,1])$ define fourth order random processes (i.e.\ $
\sup_{s\in[0,1]} EX_1^4(s)\leq C_X<\infty$)
with a continuous covariance kernel. (c) It holds that $E\sup_{s\in [0,1]} X^2_1(s)<\infty$. (d)~Observations $X_t$ lie in some $L$-dimensional function space, where $L=L(T)$ may diverge with $T\to\infty$.
\end{assumption}

\begin{assumption}\label{a:pcs}
For the eigenfunctions $\varphi_\ell$  it holds that $\max_{1\leq k,\ell\leq L}\left|p^{-1}\sum_{i=1}^p\varphi_k(s_i)\varphi_\ell(s_i)\right|=O(1)$ as $T\to\infty$.
\end{assumption}

\subsection{Proofs}\label{s:proofs}
We begin with an elementary lemma.
\begin{lemma}\label{l:cont}
Let us denote by $\hat X_t(s)$, $s\in [0,1]$, the interpolation of the estimates $\hat{X}_t(s_{i})$ as defined in \eqref{e:pcaapproach} and let $\omega^f(\delta)=\sup_{s,s^\pr \in [0,1]\colon |s-s^\pr|\leq\delta} | f(s)-f(s^\pr) |$ be the modulus of continuity of a function $f\colon [0,1]\to \mathbb{R}$. Then with $\delta=\max_{1\leq i \leq p-1} |s_{i+1}-s_i|$ we have
$$
\sup_{s \in \lbrack 0,1 \rbrack} | X_t(s) - \hat{X}_t(s) |\leq 2\omega^{X_t}(\delta)+\max_{1\leq i\leq p} |X_t(s_i)-\hat{X}_t(s_{i})|.
$$
\end{lemma}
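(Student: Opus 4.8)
The plan is to establish this as a purely deterministic, pathwise statement: for the fixed realization $X_t$ the inequality has no probabilistic content, so it suffices to bound $|X_t(s)-\hat X_t(s)|$ for each fixed $s\in[0,1]$ by the asserted right-hand side, with constants independent of $s$, and then take the supremum. First I would fix $s$, locate it in the bracketing subinterval $[s_i,s_{i+1}]$ of the grid, and split the error into a \emph{smoothness} part governed by $\omega^{X_t}(\delta)$ and an \emph{estimation} part governed by the nodal errors $\max_{1\le j\le p}|X_t(s_j)-\hat X_t(s_j)|$.

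The key device is to write the linear interpolant as a convex combination $\hat X_t(s)=(1-\lambda)\hat X_t(s_i)+\lambda\hat X_t(s_{i+1})$ with $\lambda=(s-s_i)/(s_{i+1}-s_i)\in[0,1]$, and to introduce the interpolant of the \emph{true} nodal values $\bar X_t(s):=(1-\lambda)X_t(s_i)+\lambda X_t(s_{i+1})$. The triangle inequality gives $|X_t(s)-\hat X_t(s)|\le|X_t(s)-\bar X_t(s)|+|\bar X_t(s)-\hat X_t(s)|$. The second term is linear in the nodal discrepancies, so $|\bar X_t(s)-\hat X_t(s)|\le(1-\lambda)|X_t(s_i)-\hat X_t(s_i)|+\lambda|X_t(s_{i+1})-\hat X_t(s_{i+1})|\le\max_{1\le j\le p}|X_t(s_j)-\hat X_t(s_j)|$, which is exactly the estimation part.

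For the smoothness part I would write $X_t(s)-\bar X_t(s)=(1-\lambda)(X_t(s)-X_t(s_i))+\lambda(X_t(s)-X_t(s_{i+1}))$, and since both $|s-s_i|$ and $|s-s_{i+1}|$ are at most $|s_{i+1}-s_i|\le\delta$, each term is bounded by $\omega^{X_t}(\delta)$; by convexity $|X_t(s)-\bar X_t(s)|\le\omega^{X_t}(\delta)$. Combining the two parts already yields the slightly sharper estimate $\omega^{X_t}(\delta)+\max_{1\le j\le p}|X_t(s_j)-\hat X_t(s_j)|$, which \emph{a fortiori} implies the claimed bound. The only remaining piece is the boundary intervals $[0,s_1)$ and $(s_p,1]$, where $s$ admits no bracketing pair of nodes and $\hat X_t$ is taken to be its nearest nodal value; there the same two-term split applies, the smoothness contribution again being controlled by the modulus of continuity over the distance to the nearest node. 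It is precisely to absorb these boundary contributions uniformly that the factor $2$ in front of $\omega^{X_t}(\delta)$ is comfortably retained.

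I do not expect any genuine obstacle here, since the statement is elementary and deterministic. The only point requiring care is the bookkeeping at the two ends of $[0,1]$ together with making the bound uniform in $s$; once the convex-combination decomposition is written down, the entire argument reduces to the triangle inequality and the definition of $\omega^{X_t}$.
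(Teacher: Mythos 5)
Your argument is correct and is essentially the intended one: the paper itself omits the proof of Lemma~\ref{l:cont} as ``easily seen,'' and the convex-combination decomposition through the interpolant $\bar X_t$ of the true nodal values, followed by two applications of the triangle inequality, is exactly the elementary route the authors have in mind. Note that on $[s_1,s_p]$ you in fact obtain the sharper bound $\omega^{X_t}(\delta)+\max_i|X_t(s_i)-\hat X_t(s_i)|$; the factor $2$ is only consumed by the boundary pieces $[0,s_1)$ and $(s_p,1]$, and there your absorption step tacitly uses that these end gaps are themselves of order $\delta$ (e.g.\ at most $2\delta$, via subadditivity of the modulus of continuity) --- a harmless implicit assumption that is really a feature of how $\delta$ is defined in the statement rather than a gap in your proof.
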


The lemma shows that the approximation error of the full curve can be decomposed into the modulus of continuity of the functional data and the approximation error on the observation grid. 
The proof of Lemma~\ref{l:cont} can be easily seen and will thus be omitted.
%\begin{proof}[Proof of Lemma~\ref{l:cont}]
%Let $\tilde{X}_t(s)$ the interpolation of the unobservable $X_t(s_i)$. Then we consider
%\begin{align*}
%\sup_{s \in \lbrack 0,1 \rbrack} | X_t(s) - \hat{X}_t(s) | &\leq \sup_{s \in \lbrack 0,1 \rbrack} | X_t(s) - \tilde{X}_t(s) | + \sup_{s \in \lbrack 0,1 \rbrack} | \tilde{X}_t(s) - \hat{X}_t(s) | 
%=: A_1 + A_2.
%\end{align*}
%Note that $A_2 = \max_{i=1, \ldots,p} | X(s_i) - \hat{X}(s_i) |$ holds. Moreover, 
%\begin{align*}
%A_1 &\leq \max_{i=1,\ldots,p} \sup_{s \in \lbrack s_i, s_{i+1} \rbrack} | X_t(s) - X_t(s_i) | + \Big | \frac{s - s_i}{s_{i+1} - s_i} \Big | |X_t(s_{i+1}) - X_t(s_i)|,
%\end{align*}
%and now the conclusion is immediate.
%\end{proof}

\begin{proof}[Proof of Theorem~\ref{thm:cont}]
The first part of the proof follows immediately from Theorem~1 in \citet{hormann:jammoul:2021}, where it is shown that under the Assumptions~\ref{a:noise}--\ref{a:pcs} in the Appendix we have $$\max_{1\leq i\leq p} |X_t(s_i)-\hat{X}_t(s_{i})| = O_P\left( \frac{1}{T^{1/4}} + \frac{T^{1/4}}{\sqrt{p}} \right).$$
For the modulus of continuity $\omega^{X_t}(\delta)$ we may conclude with Markov's inequality that
\begin{align*}
P(\omega^{X_t}(\delta) > \kappa  \delta^\alpha ) &\leq  EM_t/\kappa.
\end{align*}
Thus we see that $\omega^{X_t}(\delta) = O_P(\delta^{\alpha})$ and the result follows using Lemma~\ref{l:cont}.
\end{proof}

\begin{proof}[Proof of Theorem~\ref{thm:eigenfun}]
We decompose the $\|\varphi_\ell-\tilde\varphi_\ell\|$ into three pieces. To this end, we define the empirical covariance operator $\hat\Gamma^X$ of the fully observed $X_1, \ldots, X_T$ and its eigenfunctions $\hat{\varphi}_\ell$. Let $X^\star_t(s) := X_t(s_i)$ for $s\in \lbrack s_i, s_{i+1})$ be a discretized version of the fully observed data and let the associated empirical covariance operator be denoted by $\hat{\Gamma}^{X^{\star}}$ and its eigenfunctions by $\hat{\varphi}_\ell^\star$. Finally, let us define the empirical covariance matrix $\hat{\Sigma}^X = T^{-1}XX^\prime$, where $X=(X_1(\bs),\ldots, X_T(\bs))$ and its associated eigenvectors $\hat{\psi}_\ell^X$. Consider 
\begin{equation}\label{e:0}
\|\varphi_\ell-\tilde\varphi_\ell\| \leq \Vert \varphi_\ell - \hat{\varphi}_\ell \Vert + \Vert \hat{\varphi}_\ell - \hat{\varphi}_\ell^\star \Vert + \Vert \hat{\varphi}_\ell^\star - \tilde{\varphi}_\ell \Vert.
\end{equation}
We may deduce from Weyl's theorem that
\begin{align}
\|\varphi_\ell-\hat\varphi_\ell\|&\leq\frac{2\sqrt{2}}{\alpha_\ell}\|\Gamma^X-\hat\Gamma^X\|,\label{e:1}\\
 \|\hat\varphi_\ell-\hat{\varphi}_\ell^\star\|&\leq\frac{2\sqrt{2}}{\hat\alpha_\ell}\Vert \hat{\Gamma}^X - \hat{\Gamma}^{X^{\star}} \Vert,\label{e:2}
\end{align}
where $\hat\alpha_\ell=\min\{\hat\lambda_\ell-\hat\lambda_{\ell+1},\hat\lambda_{\ell-1}-\hat\lambda_\ell\}$ and where $\hat\lambda_\ell$ are the empirical eigenvalues of the fully observed data.
From \citet{hoermann2010} it follows under Assumption~\ref{a:signal}~(a) that \eqref{e:1} is $O_P\left(T^{-1/2}\right)$ and 
that $\hat\alpha_\ell\to\alpha_\ell>0$, as $T\to\infty$.
Note that when $a(s,t)$ is the kernel of the bounded linear operator $A$, then $\|A\|^2\leq\int_0^1\int_0^1a^2(t,s)ds dt$. Hence
$$
\Vert \hat{\Gamma}^X - \hat{\Gamma}^{X^{\star}} \Vert^2\leq \int_0^1\int_0^1\left(\frac{1}{T}\sum_{t=1}^T(X_t(r)X_t(s)-X^\star_t(r)X^\star_t(s))\right)^2 dr ds.
$$
In the proof of Lemma~1 in \citet{hormann:jammoul:2021} it is shown that \eqref{a:cont} implies that the right hand side is $O_P\left(p^{-1}\right)$ as $p\to\infty$. We hence conclude that \eqref{e:2} is $O_P\left(p^{-1/2}\right)$.

In the final step, we have to move from the functional setting to the matrix setting. It can be readily seen that the eigenvector $\hat \psi_\ell^X$ of $\hat{\Sigma}^X$ satisfies $\sqrt{p}[\hat \psi_\ell^X]_i=\hat\varphi_\ell^y(s)$ for $s\in [(i-1)/p,i/p)$. 
Thus, we may rewrite the last term in \eqref{e:0} as  
$$
\|\hat\varphi_\ell^y-\tilde\varphi_\ell\|^2=(\hat \psi_\ell^X-\hat\psi_\ell^y)^\prime (\hat \psi_\ell^X-\hat\psi_\ell^y).
$$
Again by Weyl's theorem the right hand side is $O_P(\frac{1}{\hat\beta_\ell}\|\hat\Sigma^y-\hat\Sigma^X\|)$, where $\hat\beta_\ell=\min\{\hat\gamma_\ell-\hat\gamma_{\ell+1},\hat\gamma_{\ell-1}-\hat\gamma_\ell\}$.  Lemma~1 in \citet{hormann:jammoul:2021} implies that under Assumptions \ref{a:noise} and \ref{a:signal}(a) and (b) we have $$\hat\gamma_\ell-\hat\gamma_{\ell+1}\sim p(\lambda_\ell- \lambda_{\ell+1}).$$ Moreover, it is shown in this lemma that
$$
\|\hat\Sigma^y-\hat\Sigma^X\|=\begin{cases}O_P(\sqrt{p}),\quad \text{if $p/T\to\gamma\in [0,\infty)$};\\
O_P(p/\sqrt{T}),\quad\text{if $p/T\to\infty.$}\end{cases}
$$
Combining all bounds yields the desired convergence.
\end{proof}

\begin{proof}[Proof of Proposition~\ref{p:test}]
Suppose that the $Z=(Z_1,\ldots, Z_p)^\prime$ has iid components with $EZ_1=0$ and $EZ_1^2=\sigma^2$ and $EZ_1^4<\infty$. Denote $\kappa := EZ_1^4 - 3$. Then it is well known that for any fundamental frequency  we  have that $EI_Z(\theta_\ell)=\sigma^2$ and 
$$
\mathrm{Cov}(I_Z(\theta_\ell),I_Z(\theta_{\ell^\prime}))=
\begin{cases}
\sigma^4\kappa/p+\sigma^4\quad\text{if $\ell=\ell^\prime$};\\
\sigma^4\kappa/p\quad\text{else}.
\end{cases}
$$
We thus have that
the random vectors  $V_t = I_{U_t}(\bm{\theta}) - \sigma^2 1_f$ are  iid, zero-mean and $\Sigma:=\text{Var}(V_t) =  \sigma^4 (I_f + \frac{\kappa}{p} \mathbbm{1}_f) \in \mathbb{R}^{f \times f}$ holds. 
 Consider the centering matrix $P_f := I_f - f^{-1}\mathbbm{1}_f$ ($\mathbbm{1}_f$ is the matrix with entries equal to 1) and note that 
 \begin{align*}
TS^2_\xi & = (f-1)^{-1} \left\Vert T^{-1/2} \sum_{t=1}^T P_f V_t  \right\Vert^2.
\end{align*}
We also note that $P_f \Sigma = \sigma^4 P_f$ and recall the well known fact that $P_f$ has $f-1$ non-zero eigenvalues which are all equal to 1. If $Q$ denotes the orthogonal matrix which has in its columns the related eigenvectors, then 
\begin{align*}
TS^2_\xi & = (f-1)^{-1} \left\Vert T^{-1/2} \sum_{t=1}^T Q^\prime P_f V_t  \right\Vert^2 = (f-1)^{-1} \left\Vert T^{-1/2} \sum_{t=1}^T W_t  \right\Vert^2,
\end{align*}
where $(W_t^\prime, 0)^\prime := Q^\prime P_f V_t$. The vector $W_t$ is zero-mean and $\text{Var}(W_t) = \sigma^4 I_{f-1}$.
By the central limit theorem the expression inside the norm converges to a normally distributed vector with variance $\sigma^4 I_{f-1}$. The weak convergence of $\Lambda_\text{inf}$ then follows by the continuous mapping theorem and Slutzky's lemma. 

For growing $f$ we consider the variable $\Lambda_\text{inf}= (TS^2_\xi/\sigma^4 - 1)\sqrt{(f-1)/2}$ and we wish to compare its distribution to the normal distribution, with its distribution function denoted by $\Phi(z)$. To this end let $Z \sim N(\mathbf{0},  I_{f-1})$ be a $(f-1)$-variate standard normal random vector. For any $z\in\mathbb{R}$ we get by  the central limit theorem that
$$
\bigg|P\bigg(\Big(\frac{1}{(f-1)}\Vert Z\Vert^2 - 1\Big)\sqrt{\frac{(f-1)}{2}} \leq z\bigg)-\Phi(z)\bigg|\to 0.
$$
Hence, it suffices to show that for all real $z$ we have
$$
\left\vert P(\Lambda_\text{inf} \leq z) - P\bigg(\Big(\frac{1}{(f-1)}\Vert Z \Vert^2 - 1\Big)\sqrt{\frac{(f-1)}{2}} \leq z\bigg) \right\vert \to 0.
$$
By Slutzky's lemma we can replace $\hat\sigma^4$ in the definition of $\Lambda_\text{inf}$ by $\sigma^4$. 
With $\tilde z=\sqrt{(\frac{z\sqrt{2}}{\sqrt{f-1}} + 1)(f-1)}$ we hence need to show that
\begin{align}
\left\vert P\Big(\big\| T^{-1/2} \sum_{t=1}^T W_t /\sigma^2\big\| \leq \tilde{z}\Big) - P(\Vert Z\Vert \leq \tilde{z}) \right\vert.\label{e:normapp}
\end{align}
If we can show that $E|W_{ti}|^4$ are uniformly bounded (in $t$ and $i$), then by Corollary~3.1 in \citet{fang:koike:2021} we get that for any $\tilde z$ the term \eqref{e:normapp} is bounded by
$$
C\left(T^{-1/8}+(f/T)^{1/6}\right),
$$
for some constant $C$ which is independent of $T$ and $f$.
 Thus we can guarantee convergence if $f/T \to 0$. 
 
We want to show that 
 $\max_{1\leq i\leq f-1}\max_{1\leq t\leq T}\text{E}|W_{ti}|^4 < C$, where $C$ does not depend on the dimension parameters $f$, $p$ and $T$. It holds that $W_{ti} = v_i^\prime V_t$, where $v_i$ denotes the $i$-th column of the matrix $Q$ and is thus an eigenvector of $P_f$ belonging to a non-zero eigenvalue. It can be easily checked that for $f\geq 3$ the $v_i$ can be written as $(0,\ldots, 1/\sqrt{2},0,\ldots,0,-1/\sqrt{2})^\prime$, with non-zero entries at the $i$-th and the last coordinate. Since we assume iid noise $(U_t$,  $t\geq 1)$, it follows that $(W_{ti}$, $t\geq 1)$ are iid as well and thus the expectations do not depend on $t$. Hence, let us consider $\text{E}|v_i^\prime V|^4$, where  $V=(I_{U}(\theta_{j_1}),\ldots, I_{U}(\theta_{j_f}))^\prime-\sigma^2 1_f$, with $\{j_1,\ldots, j_f\}=\mathcal{F}$ and $U=(u_1,\ldots, u_p)^\prime\sim U_1$.  We assume without loss of generality that $\sigma^2=1$. Then the $k$-th component of $V$ is given by
\begin{align*}
& V_k^c+V_k^s:=\frac{1}{p}\left(\sum_{r=1}^p u_{r}\cos(\theta_{j_k} r)\right)^2-1/2+\frac{1}{p}\left(\sum_{r=1}^p u_{r}\sin(\theta_{j_k} r)\right)^2-1/2.
\end{align*}
We have 
\begin{align*}
E(v^\prime V^c+v^\prime V^s)^4&\leq 16 \left(E(v^\prime V^c)^4+E(v^\prime V^s)^4\right)\\
&\leq 64 \left(E(V_i^c)^4+E(V_f^c)^4+E(V_i^s)^4+E(V_f^s)^4\right).
\end{align*}
All the terms on the right can be bounded in the same way. Let us consider $E(V_k^c)^4$. Noting that $\theta_{j_k} r=\theta_r j_k$ and $\sum_{r=1}^p \cos^2(\theta_r j_k)=p/2$ it can be written as 
\begin{align*}
E(V_k^c)^4&=\frac{1}{p^4}E\left(\left(\sum_{r=1}^p u_{r}\cos(\theta_r j_k)\right)^2-p/2\right)^4\\
&\leq \frac{1}{p^4}E \left(\sum_{r=1}^p u_{r}\cos(\theta_r j_k)\right)^8. 
%&=\frac{1}{p^4}E\left(\sum_{r=1}^p\sum_{s=1}^p \left(u_r u_s-\delta_{rs}\right)\cos(\theta_r j_k)\cos(\theta_s j_k)\right)^4\\
%&=\frac{1}{p^4}\sum_{r_1,\ldots, r_4=1}^p\sum_{s_1,\ldots, s_4=1}^p
%\left(\prod_{i=1}^4  \cos(\theta_{r_i} j_k)\cos(\theta_{s_i} j_k)\right) E\left[\prod_{i=1}^4\left(u_{r_i} u_{s_i}-\delta_{r_i s_i}\right)\right].
\end{align*}
The last inequality follows from the fact, that $E(X-EX)^4\leq EX^4$ when $X$ is a positive random variable. Now apply the Rosenthal inequality (see e.g.\ \cite{petrov}).
\end{proof}
 
\begin{proof}[Proof of Proposition~\ref{p:alt}]
For the proof it suffices to show that  $P(S_\xi^2>\delta/2)\to 1$, $T\to\infty$. Now we have
$$S_\xi^2 = S_{  g}^2 + S_{\xi -  g}^2 + 2S_{ g ,\xi -  g} \geq  S_{ g}^2-2|S_{g,\xi -  g}|$$
where $S_{\xi -  g}^2$ is defined analogously to $S_\xi^2$ and where 
\begin{align*}
S_{ g ,\xi -  g} &= \frac{1}{f-1} \sum_{j=1}^f ( g (\theta_{\ell_j}) - \bar{g} ) ( \xi_j -  g (\theta_{\ell_j}) - (\bar{\xi} - \bar{g} )) \\
&= \frac{1}{f-1} \sum_{j=1}^f ( g (\theta_{\ell_j}) -  \bar{g} ) ( \xi_j -  g (\theta_{\ell_j})).
\end{align*}
By \eqref{e:spectnonconst} $S_{ g}^2 >  \delta$ and it remains to show that $|S_{ g ,\xi - g} |\to 0$ in probability for $p\to\infty$. It is easy to see that $S_{g}^2\leq 2\left(\sum_{h\in\mathbb{Z}}|\gamma_U(h)|\right)^2<\infty$ and by Markov's inequality we have $|S_{ g ,\xi - g} |^2\leq S_g^2 S_{\xi-g}^2$. Hence the claim follows if we can show that $S_{\xi-g}^2\to 0$ in probability. 

To this end we recall that 
\begin{equation}\label{e:uniformspectral}
\sup_{\theta \in \lbrack -\pi, \pi \rbrack} \vert E I_{U_t}(\theta) -  g (\theta) \vert \to 0\quad (p\to\infty).
\end{equation}
(See e.g.\ Proposition 10.3.1 in \citet{brockwell:davis:1991}.)
Hence, when $p$ is large enough, we have
\begin{align*}
& P\left( \max_{\ell_j \in \mathcal{F}} \left\vert \xi_j -   g (\theta_{\ell_j}) \right\vert > \varepsilon \right)\leq \sum_{\ell \in \mathcal{F}}P \left( \left\vert \frac{1}{T} \sum_{t=1}^T (I_{U_t}(\theta_\ell) -  g  (\theta_\ell)) \right\vert > \varepsilon \right)\\
\leq &\; f \max_{\ell \in \mathcal{F}} P \left( \left\vert \frac{1}{T} \sum_{t=1}^T (I_{U_t}(\theta_\ell) -EI_{U_t}(\theta_\ell)) \right\vert + \frac{1}{T} \sum_{t=1}^T \left \vert EI_{U_t}(\theta_\ell)) -   g  (\theta_\ell) \right\vert > \varepsilon \right)\\
\leq &\; f \max_{\ell \in \mathcal{F}}  P \left( \left\vert \frac{1}{T} \sum_{t=1}^T (I_{U_t}(\theta_\ell) -EI_{U_t}(\theta_\ell)) \right\vert > \varepsilon/2 \right) \leq \frac{4 f}{\varepsilon^2} \max_{\ell \in \mathcal{F}} E\left( \frac{1}{T} \sum_{t=1}^T (I_{U_t}(\theta_\ell) -EI_{U_t}(\theta_\ell)) \right)^2 \\
\leq &\frac{4f}{\varepsilon^2 T^2} \max_{\ell \in \mathcal{F}} \sum_{t=1}^T \sum_{t^\prime=1}^T \text{Cov}(I_{U_t}(\theta_\ell), I_{U_{t^\prime}}(\theta_\ell)) = \frac{4f}{\varepsilon^2T}  \max_{\ell \in \mathcal{F}} \text{Var}I_{U_t}(\theta_\ell)\to 0.
\end{align*}
\end{proof}
\end{document}